\newtheorem{theorem}{Theorem}
\newtheorem{proposition}{Proposition}
\newtheorem{corollary}{Corollary}
\DeclareMathOperator*{\argmin}{argmin}
\DeclareMathOperator*{\argmax}{argmax}
\DeclareMathOperator*{\sargmin}{sargmin}
\DeclareMathOperator*{\sargmax}{sargmax}
\newenvironment{prooftitle}[1]{{\noindent \textsc{Proof #1}}}
\title{Random-projection ensemble classification}
\author{Timothy I. Cannings and Richard J. Samworth}
\address{University of Cambridge, UK}
\email{ \{t.cannings, r.samworth\}@statslab.cam.ac.uk}
\date{}
\begin{document}
\maketitle

\begin{abstract}
We introduce a very general method for high-dimensional classification, based on careful combination of the results of applying an arbitrary base classifier to random projections of the feature vectors into a lower-dimensional space.  In one special case that we study in detail, the random projections are divided into disjoint groups, and within each group we select the projection yielding the smallest estimate of the test error.  Our random projection ensemble classifier then aggregates the results of applying the base classifier on the selected projections, with a data-driven voting threshold to determine the final assignment.  Our theoretical results elucidate the effect on performance of increasing the number of projections.  Moreover, under a boundary condition implied by the sufficient dimension reduction assumption, we show that the test excess risk of the random projection ensemble classifier can be controlled by terms that do not depend on the original data dimension and a term that becomes negligible as the number of projections increases.  The classifier is also compared empirically with several other popular high-dimensional classifiers via an extensive simulation study, which reveals its excellent finite-sample performance.
\end{abstract}

\keywords{Aggregation; Classification; High-dimensional; Random projection}

\section{Introduction}
Supervised classification concerns the task of assigning an object (or a number of objects) to one of two or more groups, based on a sample of labelled training data.  The problem was first studied in generality in the famous work of \citet{Fisher:36}, where he introduced some of the ideas of Linear Discriminant Analysis (LDA), and applied them to his Iris data set.  Nowadays, classification problems arise in a plethora of applications, including spam filtering, fraud detection, medical diagnoses, market research, natural language processing and many others.  

In fact, LDA is still widely used today, and underpins many other modern classifiers; see, for example, \citet*{Friedman:89} and \citet{Tibshirani:02}.  Alternative techniques include support vector machines \citep{Cortes:95}, tree classifiers and random forests \citep{Breiman:84,Breiman2001}, kernel methods \citep{Hall:05a} and nearest neighbour classifiers \citep{Fix:51}.  More substantial overviews and in-depth discussion of these techniques, and others, can be found in \citet*{PTPR:96} and \citet{ESL:09}. 

An increasing number of modern classification problems are \emph{high-dimensional}, in the sense that the dimension $p$ of the feature vectors may be comparable to or even greater than the number of training data points, $n$.  In such settings, classical methods such as those mentioned in the previous paragraph tend to perform poorly \citep{Bickel:04}, and may even be intractable; for example, this is the case for LDA, where the problems are caused by the fact that the sample covariance matrix is not invertible when $p \geq n$. 

Many methods proposed to overcome such problems assume that the optimal decision boundary between the classes is linear, e.g.\ \citet{Friedman:89} and \citet{Hastie:95}. Another common approach assumes that only a small subset of features are relevant for classification. Examples of works that impose such a sparsity condition include \citet*{Fan:08}, where it is also assumed that the features are independent, as well as \citet{Tibshirani:03}, where soft thresholding is used to obtain a sparse boundary.  More recently, \citet*{Witten:11} and \citet*{Fan:12} both solve an optimisation problem similar to Fisher's linear discriminant, with the addition of an $\ell_1$ penalty term to encourage sparsity. 

In this paper we attempt to avoid the curse of dimensionality by projecting the feature vectors at random into a lower-dimensional space. The use of random projections in high-dimensional statistical problems is motivated by the celebrated Johnson--Lindenstrauss Lemma \citep[e.g.][]{Dasgupta:02}. This lemma states that, given $x_1,\ldots,x_n \in \mathbb{R}^p$, $\epsilon \in (0,1)$ and $d>\frac{8\log{n}}{\epsilon^2}$, there exists a linear map $f: \mathbb{R}^p \to \mathbb{R}^d$ such that 
\[
(1-\epsilon)\|x_i - x_j\|^2 \leq \|f(x_i) - f(x_j)\|^2 \leq (1+\epsilon)\|x_i - x_j\|^2,
\]
for all $i,j = 1,\ldots,n$.  In fact, the function $f$ that nearly preserves the pairwise distances can be found in randomised polynomial time using random projections distributed according to Haar measure, as described in Section~\ref{sec--ChooseRP} below. It is interesting to note that the lower bound on $d$ in the Johnson--Lindenstrauss lemma does not depend on $p$; this lower bound is optimal up to constant factors \citep{Larsen:16}. As a result, random projections have been used successfully as a computational time saver: when $p$ is large compared to $\log{n}$, one may project the data at random into a lower-dimensional space and run the statistical procedure on the projected data, potentially making great computational savings, while achieving comparable or even improved statistical performance.  As one example of the above strategy, \citet{Durrant:13} obtained Vapnik--Chervonenkis type bounds on the generalisation error of a linear classifier trained on a single random projection of the data.  See also \citet{Dasgupta:99}, \citet{Ailon:06} and \citet*{McWilliams:14} for other instances.

Other works have sought to reap the benefits of aggregating over many random projections.  For instance, \citet*{Marzetta:11} considered estimating a $p \times p$ population inverse covariance (precision) matrix using $B^{-1}\sum_{b=1}^B \mathbf{A}_b^T(\mathbf{A}_b\hat{\Sigma}\mathbf{A}_b^T)^{-1}\mathbf{A}_b$, where $\hat{\Sigma}$ denotes the sample covariance matrix and $\mathbf{A}_1,\ldots,\mathbf{A}_B$ are random projections from $\mathbb{R}^p$ to $\mathbb{R}^d$.  \citet*{Lopes:11} used this estimate when testing for a difference between two Gaussian population means in high dimensions, while \citet*{Durrant:15} applied the same technique in Fisher's linear discriminant for a high-dimensional classification problem.  

Our proposed methodology for high-dimensional classification has some similarities with the techniques described above, in the sense that we consider many random projections of the data, but is also closely related to \emph{bagging} \citep{Breiman:96}, since the ultimate assignment of each test point is made by aggregation and a vote.  Bagging has proved to be an effective tool for improving unstable classifiers.  Indeed, a bagged version of the (generally inconsistent) $1$-nearest neighbour classifier is universally consistent as long as the resample size is carefully chosen, see \citet{Hall:05b}; for a general theoretical analysis of majority voting approaches, see also \citet{Lopes:16}.  Bagging has also been shown to be particularly effective in high-dimensional problems such as variable selection \citep{MeinshausenBuhlmann2010,ShahSamworth2013}.  Another related approach to ours is \citet{BlaserFryzlewicz2015}, who consider ensembles of random rotations, as opposed to projections.  

\begin{figure}[h]
  \centering
   \makebox{\includegraphics[width=\textwidth]{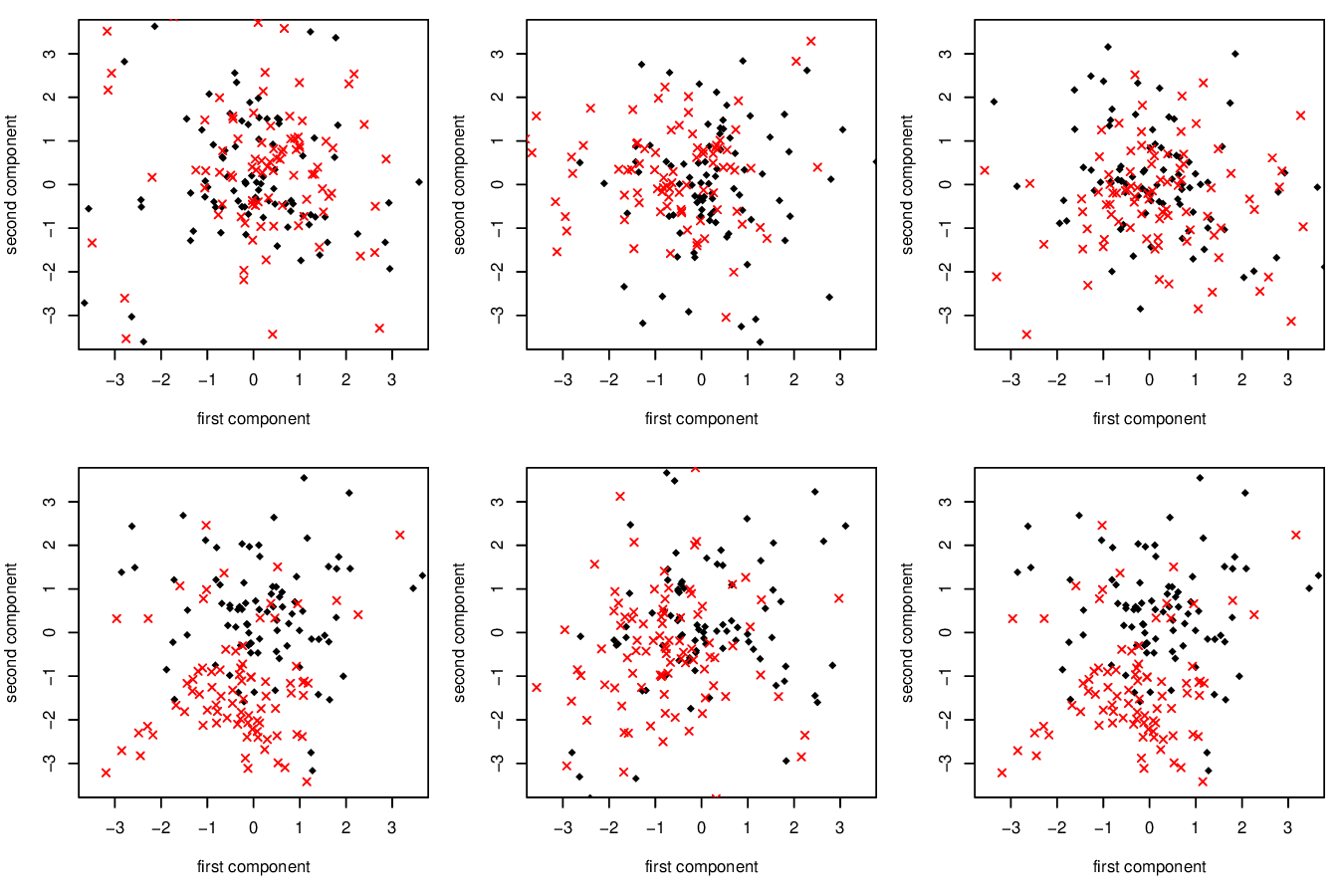}}
 \caption{\label{Fig:Useless}Different two-dimensional projections of $200$ observations in $p = 50$ dimensions.  Top row: three projections drawn from Haar measure; bottom row: the projected data after applying the projections with smallest estimate of test error out of 100 Haar projections with LDA (left), Quadratic Discriminant Analysis (middle) and $k$-nearest neighbours (right).}
 \end{figure}

One of the basic but fundamental observations that underpins our proposal is the fact that aggregating the classifications of all random projections is not always sensible, since many of these projections will typically destroy the class structure in the data; see the top row of Figure~\ref{Fig:Useless}.  For this reason, we advocate partitioning the projections into disjoint groups, and within each group we retain only the projection yielding the smallest estimate of the test error.  The attraction of this strategy is illustrated in the bottom row of Figure~\ref{Fig:Useless}, where we see a much clearer partition of the classes.  Another key feature of our proposal is the realisation that a simple majority vote of the classifications based on the retained projections can be highly suboptimal; instead, we argue that the voting threshold should be chosen in a data-driven fashion in an attempt to minimise the test error of the infinite-simulation version of our random projection ensemble classifier.  In fact, this estimate of the optimal threshold turns out to be remarkably effective in practice; see Section~\ref{sec--alpha} for further details.  We emphasise that our methodology can be used in conjunction with any base classifier, though we particularly have in mind classifiers designed for use in low-dimensional settings.  The random projection ensemble classifier can therefore be regarded as a general technique for either extending the applicability of an existing classifier to high dimensions, or improving its performance.  The methodology is implemented in an \texttt{R} package \texttt{RPEnsemble} \citep{CanningsSamworth2016b}.   

Our theoretical results are divided into three parts.  In the first, we consider a generic base classifier and a generic method for generating the random projections into $\mathbb{R}^d$ and quantify the difference between the test error of the random projection ensemble classifier and its infinite-simulation counterpart as the number of projections increases.  We then consider selecting random projections from non-overlapping groups by initially drawing them according to Haar measure, and then within each group retaining the projection that minimises an estimate of the test error.  Under a condition implied by the widely-used sufficient dimension reduction assumption \citep{Li:91,Cook:98,Lee:13}, we can then control the difference between the test error of the random projection classifier and the Bayes risk as a function of terms that depend on the performance of the base classifier based on projected data and our method for estimating the test error, as well as a term that becomes negligible as the number of projections increases.  The final part of our theory gives risk bounds for the first two of these terms for specific choices of base classifier, namely Fisher's linear discriminant and the $k$-nearest neighbour classifier.  The key point here is that these bounds only depend on $d$, the sample size $n$ and the number of projections, and not on the original data dimension $p$.  
    
The remainder of the paper is organised as follows.  Our methodology and general theory are developed in Sections~\ref{sec--RP} and~\ref{sec--ChooseRP}.   Specific choices of base classifier as well as a general sample splitting strategy are discussed in Section~\ref{sec--base}, while Section~\ref{sec--prac} is devoted to a consideration of the practical issues of computational complexity, choice of voting threshold, projected dimension and the number of projections used.  In Section~\ref{sec--empirical} we present results from an extensive empirical analysis on both simulated and real data where we compare the performance of the random projection ensemble classifier with several popular techniques for high-dimensional classification.  The outcomes are very encouraging, and suggest that the random projection ensemble classifier has excellent finite-sample performance in a variety of different high-dimensional classification settings.  We conclude with a discussion of various extensions and open problems.  Proofs are given in the Appendix and the supplementary material \citet{CanningsSamworth:16}, which appears below the reference list.  

Finally in this section, we introduce the following general notation used throughout the paper.  For a sufficiently smooth real-valued function $g$ defined on a neighbourhood of $t \in \mathbb{R}$, let $\dot{g}(t)$ and $\ddot{g}(t)$ denote its first and second derivatives at $t$, and let $\lfloor t \rfloor$ and $\llbracket t \rrbracket := t- \lfloor t \rfloor $ denote the integer and fractional part of $t$ respectively.

\section{A generic random projection ensemble classifier}
\label{sec--RP}

We start by describing our setting and defining the relevant notation.  Suppose that the pair $(X,Y)$ takes values in $\mathbb{R}^p \times \{0,1\}$, with joint distribution $P$, characterised by $\pi_1 := \mathbb{P}(Y = 1)$, and $P_r$, the conditional distribution of $X|Y = r$, for $r = 0, 1$.  For convenience, we let $\pi_0 := \mathbb{P}(Y = 0) = 1- \pi_1$.  In the alternative characterisation of $P$, we let $P_X$ denote the marginal distribution of $X$ and write $\eta(x) := \mathbb{P}(Y=1 |X=x)$ for the regression function.   Recall that a \emph{classifier} on $\mathbb{R}^p$ is a Borel measurable function $C:\mathbb{R}^p \rightarrow \{0,1\}$, with the interpretation that we assign a point $x\in \mathbb{R}^p$ to class $C(x)$.  We let $\mathcal{C}_p$ denote the set of all such classifiers.  

The test error of a classifier $C$ is\footnote{We define $R(C)$ through an integral rather than $R(C) := \mathbb{P}\{C(X) \neq Y\}$ to make it clear that when $C$ is random (depending on training data or random projections), it should be conditioned on when computing $R(C)$.}
\[
R(C) := \int_{\mathbb{R}^p \times \{0,1\}} \mathds{1}_{\{C(x) \neq y\}} \, dP(x,y),
\]
and is minimised by the \emph{Bayes} classifier
\[
C^{\mathrm{Bayes}}(x) := \left\{ \begin{array}{ll}
         1 & \mbox{if $\eta(x) \geq 1/2$};\\
         0 & \mbox{otherwise}\end{array} \right. 
\]
\citep*[e.g.][p.~10]{PTPR:96}. Its risk is $R(C^{\mathrm{Bayes}}) =  \mathbb{E}[\min\{\eta(X), 1 - \eta(X)\}]$.

Of course, we cannot use the Bayes classifier in practice, since $\eta$ is unknown.  Nevertheless, we often have access to a sample of training data that we can use to construct an approximation to the Bayes classifier.  Throughout this section and Section~\ref{sec--ChooseRP}, it is convenient to consider the training sample $\mathcal{T}_n := \{(x_1,y_1),\ldots,(x_n,y_n)\}$ to be fixed points in $\mathbb{R}^p \times \{0,1\}$.  Our methodology will be applied to a base classifier ${C}_n = {C}_{n,\mathcal{T}_{n,d}}$, which we assume can be constructed from an arbitrary training sample $\mathcal{T}_{n,d}$ of size $n$ in $\mathbb{R}^d \times \{0,1\}$; thus ${C}_n$ is a measurable function from $(\mathbb{R}^d \times \{0,1\})^n$ to~$\mathcal{C}_d$.

Now assume that $d \leq p$.  We say a matrix $A \in \mathbb{R}^{d\times p}$ is a \emph{projection} if $AA^T = I_{d\times d}$, the $d$-dimensional identity matrix. Let $\mathcal{A} = \mathcal{A}_{d \times p} := \{A \in \mathbb{R}^{d\times p} : AA^T = I_{d\times d} \}$ be the set of all such matrices.  Given a projection $A \in \mathcal{A}$, define projected data $z_i^A := Ax_i$ and $y_i^A := y_i$ for $i=1,\ldots,n$, and let $\mathcal{T}_n^A := \{(z_1^A,y_1^A),\ldots,(z_n^A,y_n^A)\}$.  The projected data base classifier corresponding to ${C}_n$ is ${C}_n^A: (\mathbb{R}^d \times \{0,1\})^n \rightarrow \mathcal{C}_p$, given by
\[
{C}_n^A(x) = {C}_{n,\mathcal{T}_n^A}^A(x) := {C}_{n,\mathcal{T}_n^A}(Ax).
\]
Note that although ${C}_n^A$ is a classifier on $\mathbb{R}^p$, the value of ${C}_n^A(x)$ only depends on $x$ through its $d$-dimensional projection $Ax$.

We now define a generic ensemble classifier based on random projections.  For $B_1 \in \mathbb{N}$, let $\mathbf{A}_1,\ldots,\mathbf{A}_{B_1}$ denote independent and identically distributed projections in $\mathcal{A}_{d \times p} $, independent of $(X,Y)$.  The distribution on $\mathcal{A}$ is left unspecified at this stage, and in fact our proposed method ultimately involves choosing this distribution depending on $\mathcal{T}_n$.

Now set  
\begin{equation}
\label{eq--nu}
\nu_{n}(x) =  \nu_n^{(B_1)}(x) :=  \frac{1}{B_1}\sum_{b_1=1}^{B_1} \mathds{1}_{\{{C}_n^{\mathbf{A}_{b_1}}(x) = 1 \}}.
\end{equation}
For $\alpha \in (0,1)$, the \emph{random projection ensemble} classifier is defined to be
\begin{equation}
\label{eq--RP}
{C}_n^{\mathrm{RP}}(x) :=  \left\{ \begin{array}{ll}
         1 & \mbox{if $ \nu_n(x) \geq \alpha$;}\\
         0 & \mbox{otherwise.}\end{array} \right. 
\end{equation}
We emphasise again here the additional flexibility afforded by not pre-specifying the voting threshold $\alpha$ to be $1/2$.  Our analysis of the random projection ensemble classifier will require some further definitions.  Let\footnote{In order to distinguish between different sources of randomness, we will write $\mathbf{P}$ and $\mathbf{E}$ for the probability and expectation, respectively, taken over the randomness from the projections $\mathbf{A}_1, \ldots, \mathbf{A}_{B_1}$.  If the training data is random, then we condition on $\mathcal{T}_n$ when computing $\mathbf{P}$ and $\mathbf{E}$.}
\[
\mu_n(x) := \mathbf{E}\{ \nu_n(x)\} = \mathbf{P}\{{C}_n^{\mathbf{A}_1}(x) = 1\}.
\]
For $r= 0,1$, define distribution functions $G_{n,r}: [0,1] \rightarrow [0,1]$ by $G_{n,r}(t) := P_r(\{x \in \mathbb{R}^p : \mu_n(x) \leq t\})$.  Note that since $G_{n,r}$ is non-decreasing it is differentiable almost everywhere; in fact, however, the following assumption will be convenient:
\begin{description}
\item \textit{Assumption 1.} $G_{n,0}$ and $G_{n,1}$ are twice differentiable at $\alpha$.   
\end{description}
The first derivatives of $G_{n,0}$ and $G_{n,1}$, when they exist, are denoted as $g_{n,0}$ and $g_{n,1}$ respectively; under assumption~1, these derivatives are well-defined in a neighbourhood of $\alpha$.  
Our first main result below gives an asymptotic expansion for the expected test error $\mathbf{E}\{R({C}_n^{\mathrm{RP}})\}$ of our generic random projection ensemble classifier as the number of projections increases.  In particular, we show that this expected test error can be well approximated by the test error of the infinite-simulation random projection classifier
\[
{C}_n^{\mathrm{RP}^*}(x) :=  \left\{ \begin{array}{ll}
         1 & \mbox{if $\mu_n(x) \geq \alpha$;}\\
         0 & \mbox{otherwise.}\end{array} \right. 
\]
Note that provided $G_{n,0}$ and $G_{n,1}$ are continuous at $\alpha$, we have
\begin{equation}
\label{TestErrorInfiniteSim}
R({C}_n^{\mathrm{RP}^*})  = \pi_1 G_{n,1}(\alpha)  + \pi_0\{1-G_{n,0}(\alpha)\}.
\end{equation}
\begin{theorem}
\label{thm--Bvar}
Assume assumption 1.  Then
\[
\mathbf{E}\{R({C}_n^{\mathrm{RP}})\}- R({C}_n^{\mathrm{RP}^*}) = \frac {\gamma_n(\alpha)} {B_1} + o\Bigl(\frac{1}{B_1}\Bigr)
\]
as $B_1 \to \infty$, where 
\[
\gamma_n(\alpha) :=  (1-\alpha-\llbracket B_1\alpha \rrbracket)\{\pi_1 g_{n,1}(\alpha) - \pi_0  g_{n,0}(\alpha)\} + \frac{\alpha(1-\alpha)}{2}\{\pi_1 \dot g_{n,1}(\alpha) - \pi_0  \dot g_{n,0}(\alpha)\}.
\]
\end{theorem}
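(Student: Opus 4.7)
My plan is to rewrite the test-error difference as an expectation involving the binomial CDF, convert this via a Beta--Binomial duality into an expectation against a Beta random variable, and then Taylor-expand $G_{n,r}$ around $\alpha$ using \textbf{(A.1)}. Conditionally on $X = x$ (with $\mathcal{T}_n$ held fixed throughout), the independence of $A_1, \ldots, A_{B_1}$ gives $B_1 \hat{\nu}_n^{B_1}(x) \sim \mathrm{Bin}(B_1, \hat{\mu}_n(x))$, so setting $F_{B_1}(\mu; \alpha) := \mathbb{P}\{\mathrm{Bin}(B_1, \mu) < B_1 \alpha\}$ we have $\mathbb{P}\{\hat{C}_n^{\mathrm{RP}}(X) = 2 \mid Y = r\} = \int_0^1 F_{B_1}(\mu; \alpha) \, dG_{n,r}(\mu)$. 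Combining with~\eqref{TestErrorInfiniteSim} reduces the problem to asymptotics for
\[
\psi_{B_1,r}(\alpha) := \int_0^1 F_{B_1}(\mu; \alpha) \, dG_{n,r}(\mu) - G_{n,r}(\alpha), \qquad r \in \{1,2\},
\]
since $\mathcal{L}(\hat{C}_n^{\mathrm{RP}}) - \mathcal{L}(\hat{C}_n^{\mathrm{RP}^*}) = \pi_1 \psi_{B_1,1}(\alpha) - \pi_2 \psi_{B_1,2}(\alpha)$.

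The key identity, verified by telescoping the partial sum defining $F_{B_1}$ (or equivalently by Beta--Binomial conjugacy), is
\[
-\frac{\partial}{\partial \mu} F_{B_1}(\mu; \alpha) = B_1 \binom{B_1 - 1}{k_\alpha} \mu^{k_\alpha}(1-\mu)^{B_1 - k_\alpha - 1}, \qquad k_\alpha := \lfloor B_1 \alpha \rfloor,
\]
so that the right-hand side is the $\mathrm{Beta}(k_\alpha + 1, B_1 - k_\alpha)$ density (working in the generic case $B_1\alpha\notin\mathbb{Z}$). Integration by parts, using $F_{B_1}(0;\alpha) = 1$, $F_{B_1}(1;\alpha) = 0$, $G_{n,r}(0) = 0$, $G_{n,r}(1) = 1$, then yields $\int_0^1 F_{B_1}(\mu;\alpha) \, dG_{n,r}(\mu) = \mathbb{E}\{G_{n,r}(U_\alpha)\}$ with $U_\alpha \sim \mathrm{Beta}(k_\alpha + 1, B_1 - k_\alpha)$. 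A direct computation of Beta moments supplies $\mathbb{E}(U_\alpha) - \alpha = (1 - \alpha - \llbracket B_1 \alpha \rrbracket)/(B_1 + 1)$ and $\mathbb{E}\{(U_\alpha - \alpha)^2\} = \alpha(1-\alpha)/B_1 + O(B_1^{-2})$; these are exactly the two moments that will produce the two summands of $\gamma_n(\alpha)$.

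Under \textbf{(A.1)} I would then use the second-order Taylor expansion $G_{n,r}(u) = G_{n,r}(\alpha) + g_{n,r}(\alpha)(u - \alpha) + \tfrac{1}{2}\dot g_{n,r}(\alpha)(u - \alpha)^2 + r_{n,r}(u)$ with $r_{n,r}(u) = o\{(u-\alpha)^2\}$ as $u \to \alpha$, substitute $u = U_\alpha$, and take expectations to recover the leading terms. The principal obstacle is showing $\mathbb{E}\{r_{n,r}(U_\alpha)\} = o(1/B_1)$ from only pointwise control of the remainder: for arbitrary $\epsilon > 0$, choose $\delta > 0$ so that $|r_{n,r}(u)| \le \epsilon (u - \alpha)^2$ whenever $|u - \alpha| \le \delta$, making the contribution on that event $O(\epsilon/B_1)$ via the second-moment bound; on the complement $\{|U_\alpha - \alpha| > \delta\}$, boundedness of $G_{n,r}$ combined with Bernstein/Chernoff concentration for the Beta distribution (variance $\asymp 1/B_1$) makes the contribution exponentially small in $B_1$. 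Assembling the two class-wise expansions with weights $\pi_1$ and $-\pi_2$ produces the advertised $\gamma_n(\alpha)/B_1 + o(1/B_1)$. The crux of the argument --- and what I expect to be the main obstacle absent this identification --- is recognising that $-\partial_\mu F_{B_1}(\mu;\alpha)$ is precisely a Beta density; this turns an intrinsically discrete problem into a clean continuous expectation and makes the fractional-part correction $\llbracket B_1 \alpha \rrbracket$ emerge naturally from the Beta mean.
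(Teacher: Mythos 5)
Your proposal is correct, and it takes a genuinely different route from the paper's. The paper reduces to the same integral $\int_0^1\{\mathbb{P}(T<B_1\alpha)-\mathbbm{1}_{\{\theta<\alpha\}}\}\,dG_n^{\circ}(\theta)$ but then attacks the binomial distribution function with a one-term Edgeworth expansion including the lattice correction $q$, proved \emph{uniformly} in the binomial parameter via Esseen's smoothing lemma (Proposition~\ref{cor--edgeworth}), followed by a Hoeffding localisation to $|\theta-\alpha|\leq B_1^{-1/2}\log B_1$ and the control of eight remainder terms. You instead exploit the exact Beta--Binomial duality: $\mu\mapsto\mathbb{P}\{\mathrm{Bin}(B_1,\mu)\leq k_\alpha\}$ is precisely the survival function of $U_\alpha\sim\mathrm{Beta}(k_\alpha+1,B_1-k_\alpha)$, so the problem collapses to two moments of $U_\alpha$ plus a routine truncation of the Peano remainder; the fractional part $\llbracket B_1\alpha\rrbracket$ emerges from $\mathbb{E}(U_\alpha)-\alpha=(1-\alpha-\llbracket B_1\alpha\rrbracket)/(B_1+1)$ rather than from the Edgeworth correction. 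Your moment computations are right, and the remainder step works: on $\{|U_\alpha-\alpha|\leq\delta\}$ it contributes $O(\epsilon/B_1)$ via the second moment, and the complementary event already has probability $O(B_1^{-2})$ by Chebyshev with the fourth central moment, so no Bernstein bound is even needed. Your approach buys a much shorter, exact argument that sidesteps the uniformity-in-$\theta$ issue the paper identifies as its main technical challenge; it loses only the stand-alone uniform Edgeworth bound, which is not used elsewhere. Two points to tidy: (i) the boundary bookkeeping in your integration by parts assumes $G_{n,r}(0)=0$, which may fail if $\hat{\mu}_n(X)$ has an atom at $0$ --- it is cleaner to obtain $\int_0^1 F_{B_1}\,dG_{n,r}=\mathbb{E}\{G_{n,r}(U_\alpha)\}$ by Fubini, since $\int\mathbb{P}(U_\alpha>\mu)\,dG_{n,r}(\mu)=\mathbb{P}(U_\alpha>M)=\mathbb{E}\{G_{n,r}(U_\alpha^-)\}=\mathbb{E}\{G_{n,r}(U_\alpha)\}$, the last equality holding because $U_\alpha$ has a density while $G_{n,r}$ has at most countably many jumps; (ii) the lattice case $B_1\alpha\in\mathbb{Z}$ that you set aside is the one place the two arguments diverge, since there $\{T<B_1\alpha\}$ corresponds to $k=B_1\alpha-1$ and your exact computation yields first-order coefficient $-\alpha$ rather than $1-\alpha-\llbracket B_1\alpha\rrbracket=1-\alpha$; this is a continuity-correction convention issue confined to a subsequence that exists only for rational $\alpha$, and it does not affect the generic case your proof covers.
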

The proof of Theorem~\ref{thm--Bvar} in the Appendix is lengthy, and involves a one-term Edgeworth approximation to the distribution function of a standardised Binomial random variable.  One of the technical challenges is to show that the error in this approximation holds uniformly in the binomial proportion.  Related techniques can also be used to show that $\mathbf{Var}\{R({C}_n^{\mathrm{RP}})\} = O(B_1^{-1})$ under assumption~1; see Proposition~\ref{Prop:variance} in the supplementary material.  Very recently, \citet{Lopes:16} has obtained similar results to this and to Theorem~\ref{thm--Bvar} in the context of majority vote classification, with stronger assumptions on the relevant distributions and on the form of the voting scheme.  In Figure~\ref{Fig:Errors}, we plot the average error (plus/minus two standard deviations) of the random projection ensemble classifier in one numerical example, as we vary $B_1 \in \{2, \ldots, 500\}$; this reveals that the Monte Carlo error stabilises rapidly, in agreement with what \citet{Lopes:16} observed for a random forest classifier.  

\begin{figure}[h]
  \centering
 \makebox{\includegraphics[width=\textwidth]{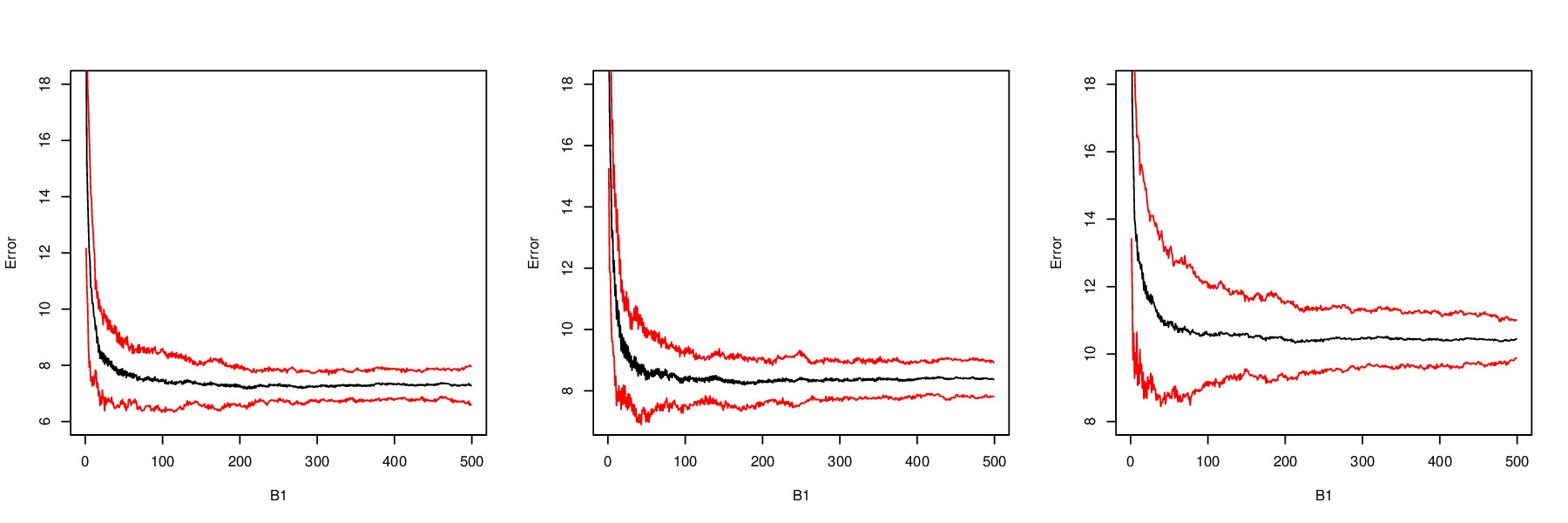}}
 \caption{\label{Fig:Errors} The average error (black) plus/minus two standard deviations (red) over 20 sets of $B_1B_2$ projections for $B_1 \in \{2,\ldots, 500\}$. We use the LDA (left), QDA (middle) and $k$nn (right) base classifiers.  The plots show the test error for one training dataset from Model~2; the other parameters are $n = 50$, $p = 100$, $d = 5$ and $B_2 = 50$.}
 \end{figure}

Our next result controls the test excess risk, i.e.\ the difference between the expected test error and the Bayes risk, of the random projection classifier in terms of the expected test excess risk of the classifier based on a single random projection.  An attractive feature of this result is its generality: no assumptions are placed on the configuration of the training data $\mathcal{T}_n$, the distribution $P$ of the test point $(X,Y)$ or on the distribution of the individual projections.
\begin{theorem}
\label{Thm:BaggingBound}
For each $B_1 \in \mathbb{N} \cup \{\infty\}$, we have
\begin{equation}
\label{Eq:BaggingBound}
\mathbf{E}\{R({C}_n^{\mathrm{RP}})\} - R(C^{\mathrm{Bayes}}) \leq \frac{1}{\min(\alpha,1-\alpha)}\bigl[\mathbf{E}\{R({C}_n^{\mathbf{A}_1})\} - R(C^{\mathrm{Bayes}})\bigr].
\end{equation}
\end{theorem}
When $B_1 = \infty$, we interpret $R({C}_n^{\mathrm{RP}})$ in Theorem~\ref{Thm:BaggingBound} as $R({C}_n^{\mathrm{RP}^*})$.  In fact, when $B_1 = \infty$ and $G_{n,0}$ and $G_{n,1}$ are continuous, the bound in Theorem~\ref{Thm:BaggingBound} can be improved if one is using an `oracle' choice of the voting threshold $\alpha$, namely
\begin{equation}
\label{eq--alpha}
\alpha^* \in \argmin_{\alpha' \in [0,1]} R({C}_{n,\alpha'}^{\mathrm{RP}^*}) = \argmin_{\alpha' \in [0,1]} \bigl[\pi_1 G_{n,1}(\alpha') + \pi_0\{1 - G_{n,0}(\alpha')\}\bigr],
\end{equation} 
where we write ${C}_{n, \alpha}^{\mathrm{RP}^*}$ to emphasise the dependence on the voting threshold $\alpha$.  In this case, by definition of $\alpha^{*}$ and then applying Theorem~\ref{Thm:BaggingBound},
\begin{equation}
\label{eq:BaggingBound'}
R({C}_{n, \alpha^*}^{\mathrm{RP}^*}) - R(C^{\mathrm{Bayes}}) \leq R({C}_{n, 1/2}^{\mathrm{RP}^*}) - R(C^{\mathrm{Bayes}}) \leq 2\bigl[\mathbf{E}\{R({C}_n^{\mathbf{A}_1})\} - R(C^{\mathrm{Bayes}})\bigr],
\end{equation}
which improves the bound in~\eqref{Eq:BaggingBound} since $2 \leq \frac{1}{\min\{\alpha^*,(1-\alpha^*)\}}$.  It is also worth mentioning that if assumption~1 holds at $\alpha^* \in (0,1)$, and $G_{n,0}$ and $G_{n,1}$ are continuous, then $\pi_1 g_{n,1}(\alpha^*) = \pi_0  g_{n,0}(\alpha^*)$ and the constant in Theorem~\ref{thm--Bvar} simplifies to
\[
\gamma_n(\alpha^*) = \frac{\alpha^*(1-\alpha^*)}{2}\{\pi_1 \dot g_{n,1}(\alpha^*) - \pi_0 \dot g_{n,0}(\alpha^*)\} \geq 0.
\]

\section{Choosing good random projections}
\label{sec--ChooseRP}
In this section, we study a special case of the generic random projection ensemble classifier introduced in Section~\ref{sec--RP}, where we propose a screening method for choosing the random projections.  
Let ${R}_n^A$ be an estimator of $R({C}_n^A)$, based on $\{(z_1^A,y_1^A), \dots, (z_n^A, y_n^A)\}$, that takes values in the set $\{0,1/n,\ldots,1\}$.  Examples of such estimators include the training error and leave-one-out estimator; we discuss these choices in greater detail in Section~\ref{sec--base}.  For $B_1, B_2 \in \mathbb{N}$, let $\{\mathbf{A}_{b_1,b_2}:b_1=1,\ldots,B_1,b_2=1,\ldots,B_2\}$ denote independent projections, independent of $(X,Y)$, distributed according to Haar measure on $\mathcal{A}$.  One way to simulate from Haar measure on the set $\mathcal{A}$ is to first generate a matrix $\mathbf{Q} \in \mathbb{R}^{d \times p}$, where each entry is drawn independently from a standard normal distribution, and then take $\mathbf{A}^T$ to be the matrix of left singular vectors in the singular value decomposition of $\mathbf{Q}^T$ \citep[see, for example,][Theorem~1.5.4]{Chikuse:03}.  For $b_1=1,\ldots,B_1$, let
\begin{equation}
\label{eq--ordering}
b_2^*(b_1) := \sargmin_{b_2\in \{1,\dots,B_2\}} {R}_n^{\mathbf{A}_{b_1,b_2}},
\end{equation}
where $\sargmin$ denotes the smallest index where the minimum is attained in the case of a tie.  We now set $\mathbf{A}_{b_1} := \mathbf{A}_{b_1,b_2^*(b_1)}$, and consider the random projection ensemble classifier from Section~\ref{sec--RP} constructed using the independent projections $\mathbf{A}_1,\ldots,\mathbf{A}_{B_1}$.  

Let
\[
R_n^* := \min_{A \in \mathcal{A}} {R}_n^A
\]
denote the optimal test error estimate over all projections.  The minimum is attained here, since $R_n^A$ takes only finitely many values.  We assume the following: 
\begin{description}
\item \textit{Assumption 2.} There exists $\beta \in (0,1]$ such that
\[
\mathbf{P}\bigl({R}_n^{\mathbf{A}_{1,1}} \leq {R}_n^* + |\epsilon_n| \bigr) \geq \beta,
\]
where $\epsilon_n = \epsilon_n^{(B_2)} := \mathbf{E}\{R({C}_n^{\mathbf{A}_1}) - {R}_n^{\mathbf{A}_1}\}$.
\end{description}

The quantity $\epsilon_{n}$, which depends on $B_2$ because $\mathbf{A}_1$ is selected from $B_2$ independent random projections, can be interpreted as a measure of overfitting.  Assumption~2 asks that there is a positive probability that $R_n^{\mathbf{A}_{1,1}}$ is within $|\epsilon_{n}|$ of its minimum value $R_n^{*}$. The intuition here is that spending more computational time choosing a projection by increasing $B_{2}$ is potentially futile: one may find a projection with a lower error estimate, but the chosen projection will not necessarily result in a classifier with a lower test error.  Under this condition, the following result controls the test excess risk of our random projection ensemble classifier in terms of the test excess risk of a classifier based on $d$-dimensional data, as well as a term that reflects our ability to estimate the test error of classifiers based on projected data and a term that depends on the number of projections.
\begin{theorem}
\label{Thm:Main}
Assume assumption~2.  Then, for each $B_1,B_2 \in \mathbb{N}$, and every $A \in \mathcal{A}$,
\begin{equation}
\label{Eq:Main}
\mathbf{E}\{R({C}_n^{\mathrm{RP}})\} - R(C^{\mathrm{Bayes}})  \leq \frac{R({C}_n^{A}) - R(C^{\mathrm{Bayes}})}{\min(\alpha,1-\alpha)} + \frac{2|\epsilon_n| - \epsilon_n^{A}}{\min(\alpha,1-\alpha)} + \frac{(1-\beta)^{B_{2}}}{\min(\alpha,1-\alpha)},
\end{equation}
where $\epsilon_n^{A} := R({C}_n^{A}) - {R}_n^{A}$.
\end{theorem}
Regarding the bound in Theorem~\ref{Thm:Main} as a sum of three terms, we see that the final one can be seen as the price we have to pay for the fact that we do not have access to an infinite sample of random projections.  This term can be made negligible by choosing $B_2$ to be sufficiently large, though the value of $B_2$ required to ensure it is below a prescribed level may depend on the training data.  It should also be noted that $\epsilon_n$ in the second term may increase with $B_2$, which reflects the fact mentioned previously that this quantity is a measure of overfitting.  The behaviour of the first two terms depends on the choice of base classifier, and our aim is to show that under certain conditions, these terms can be bounded (in expectation over the training data) by expressions that do not depend on $p$.

To this end, define the regression function on $\mathbb{R}^d$ induced by the projection $A \in \mathcal{A}$ to be $\eta^A(z) := \mathbb{P}(Y=1|AX=z)$.  The corresponding induced Bayes classifier, which is the optimal classifier knowing only the distribution of $(AX,Y)$, is given by
\[
C^{A-\mathrm{Bayes}}(z) := \left\{ \begin{array}{ll}
         1 & \mbox{if $\eta^A(z) \geq 1/2$;}\\
         0 & \mbox{otherwise.}\end{array} \right. 
\]
In order to give a condition under which there exists a projection $A \in \mathcal{A}$ for which $R({C}_n^{A})$ is close to the Bayes risk, we will invoke an additional assumption on the form of the Bayes classifier:
\begin{description}
\item \textit{Assumption 3.} There exists a projection $A^* \in \mathcal{A}$ such that 
\[
P_X(\{x \in \mathbb{R}^p: \eta(x) \geq 1/2\} \triangle \{x \in \mathbb{R}^p: \eta^{A^*}(A^*x) \geq 1/2\}) = 0,
\]
where $B \triangle C := (B \cap C^c) \cup (B^c \cap C)$ denotes the symmetric difference of two sets $B$ and $C$.
\end{description}
Assumption~3 requires that the set of points $x \in \mathbb{R}^p$ assigned by the Bayes classifier to class 1 can be expressed as a function of a $d$-dimensional projection of $x$.  Note that if the Bayes decision boundary is a hyperplane, then assumption~3 holds with $d=1$.  Moreover, Proposition~\ref{Prop:Cond} below shows that, in fact, assumption~3 holds under the sufficient dimension reduction condition, which states that $Y$ is conditionally independent of $X$ given $A^*X$; see \citet{Cook:98} for many statistical settings where such an assumption is natural.     
\begin{proposition}
\label{Prop:Cond}
If $Y$ is conditionally independent of $X$ given $A^*X$, then assumption~3 holds.
\end{proposition}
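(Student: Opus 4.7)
The plan is to show the stronger statement that $\eta(x) = \eta^{A^*}(A^*x)$ for $P_X$-almost every $x$, which immediately forces the symmetric difference of the two sets in \textbf{(A.3)} to be $P_X$-null, since those two sets are then defined by the same function of $x$ (up to a null set).

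To establish the almost sure equality, I would first use the tower property together with the fact that $A^*X$ is a measurable function of $X$. Writing $\eta(X) = \mathbb{E}[\mathbbm{1}_{\{Y=1\}} \mid X]$, I would enlarge the conditioning $\sigma$-algebra harmlessly to obtain
\[
\eta(X) = \mathbb{E}[\mathbbm{1}_{\{Y=1\}} \mid X] = \mathbb{E}[\mathbbm{1}_{\{Y=1\}} \mid X, A^*X] \quad \text{a.s.},
\]
since $\sigma(X) = \sigma(X, A^*X)$. I would then invoke the sufficient dimension reduction hypothesis $Y \perp X \mid A^*X$ to collapse the right-hand side to $\mathbb{E}[\mathbbm{1}_{\{Y=1\}} \mid A^*X] = \eta^{A^*}(A^*X)$ almost surely, yielding $\eta(X) = \eta^{A^*}(A^*X)$ $P_X$-almost surely.

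From here, the set-theoretic conclusion is immediate: let $N \subset \mathbb{R}^p$ denote the $P_X$-null set on which $\eta(x) \neq \eta^{A^*}(A^*x)$. Off $N$, the indicators $\mathbbm{1}_{\{\eta(x) \geq 1/2\}}$ and $\mathbbm{1}_{\{\eta^{A^*}(A^*x) \geq 1/2\}}$ coincide, so the symmetric difference
\[
\{x : \eta(x) \geq 1/2\} \triangle \{x : \eta^{A^*}(A^*x) \geq 1/2\}
\]
is contained in $N$ and therefore has $P_X$-measure zero, which is exactly \textbf{(A.3)}.

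There is no real obstacle here; the argument is essentially a one-line manipulation of conditional expectations, and the only point requiring minor care is the initial step of enlarging the conditioning $\sigma$-algebra from $\sigma(X)$ to $\sigma(X, A^*X)$, which is harmless because these $\sigma$-algebras coincide. The proposition is really just a reformulation of the sufficient dimension reduction property at the level of Bayes decision sets rather than regression functions.
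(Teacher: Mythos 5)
Your proof is correct and follows essentially the same route as the paper: both establish the stronger claim that $\eta(x) = \eta^{A^*}(A^*x)$ for $P_X$-almost every $x$ and then read off \textbf{(A.3)}. The paper verifies this identity by checking that $\int_B \eta^{A^*}(A^*x)\,dP_X(x) = \int_B \eta(x)\,dP_X(x)$ for every Borel set $B$ via an explicit disintegration over $A^*X$, which is just the integral-level form of your tower-property argument.
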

The following result confirms that under assumption~3, and for a sensible choice of base classifier, we can hope for $R({C}_n^{A^*})$ to be close to the Bayes risk.
\begin{proposition}
\label{Prop:A^*}
Assume assumption 3.  Then $R(C^{A^*-\mathrm{Bayes}}) = R(C^{\mathrm{Bayes}})$.
\end{proposition}
We are therefore now in a position to study the first two terms in the bound in Theorem~\ref{Thm:Main} in more detail for specific choices of base classifier.
 
\section{Possible choices of the base classifier}
\label{sec--base}
In this section, we change our previous perspective and regard the training data as independent random pairs with distribution $P$, so our earlier statements are interpreted conditionally on $\mathcal{T}_n :=  \{(X_1,Y_1), \dots, (X_n, Y_n)\}$.  For $A \in \mathcal{A}$, we write our projected data as $\mathcal{T}_n^A :=  \{(Z_1^A,Y_1^A), \dots, (Z_n^A, Y_n^A)\}$, where $Z_i^A := AX_i$ and $Y_i^A := Y_i$.  We also write $\mathbb{P}$ and $\mathbb{E}$ to refer to probabilities and expectations over all random quantities.  We consider particular choices of base classifier, and study the first two terms in the bound in Theorem~\ref{Thm:Main}.

\subsection{Linear Discriminant Analysis}
\label{sec--LDA}
Linear Discriminant Analysis (LDA), introduced by \citet{Fisher:36}, is arguably the simplest classification technique.  Recall that in the special case where $X|Y=r \sim N_p(\mu_r, \Sigma)$, we have
\[
\mathrm{sgn}\{\eta(x) - 1/2\} = \mathrm{sgn}\biggl\{\log\frac{\pi_1}{\pi_0}  + \Bigl(x-\frac{\mu_1 +\mu_0}{2}\Bigr)^T \Sigma^{-1}(\mu_1 - \mu_0)\biggr\},
\]
so assumption~3 holds with $d=1$ and $A^* = \frac{(\mu_1 - \mu_0)^T\Sigma^{-1}}{\|\Sigma^{-1}(\mu_1 - \mu_0)\|}$, a $1 \times p$ matrix.  In LDA, $\pi_r$, $\mu_r$ and $\Sigma$ are estimated by their sample versions, using a pooled estimate of $\Sigma$.  Although LDA cannot be applied directly when $p \geq n$ since the sample covariance matrix is singular, we can still use it as the base classifier for a random projection ensemble, provided that $d < n$.  Indeed, noting that for any $A \in \mathcal{A}$, we have $AX|Y=r \sim N_d(\mu_r^A,\Sigma^A)$, where $\mu_r^A := A\mu_r$ and $\Sigma^A := A\Sigma A^T$, we can define\begin{equation}
\label{eq--LDA}
{C}_n^{A}(x) = {C}_n^{A-\mathrm{LDA}}(x) := \left\{ \begin{array}{ll}
         1 & \mbox{if $\log\frac{\hat{\pi}_1}{\hat{\pi}_0}  + \bigl(Ax-\frac{\hat{\mu}_1^A +\hat\mu_0^A}{2}\bigr)^T \hat{\Omega}^A(\hat{\mu}^A_1 - \hat{\mu}^A_0) \geq 0$};\\
         0 & \mbox{otherwise.}\end{array} \right. 
\end{equation}
Here, $\hat{\pi}_r := n_r/n$, where $n_r := \sum_{i=1}^n \mathds{1}_{\{Y_i = r\}}$, $\hat{\mu}_r^A := n_r^{-1}\sum_{i=1}^n AX_i\mathds{1}_{\{Y_i = r\}}$, 
\[
\hat{\Sigma}^A := \frac{1}{n-2}\sum_{i=1}^n\sum_{r=0}^1 (AX_i - \hat{\mu}_r^A)(AX_i - \hat{\mu}_r^A)^T\mathds{1}_{\{Y_i = r\}}
\]
and $\hat{\Omega}^A := (\hat{\Sigma}^A)^{-1}$.

Write $\Phi$ for the standard normal distribution function.  Under the normal model specified above, the test error of the LDA classifier can be written as
\begin{align*}
R({C}_n^{A}) = \pi_0 \Phi\Biggl(\frac{\log \frac{\hat{\pi}_1}{\hat{\pi}_0} + (\hat{\delta}^A)^T \hat{\Omega}^A(\bar{\hat{\mu}}^A - \mu^A_0)}{\sqrt{(\hat{\delta}^A)^T \hat{\Omega}^A\Sigma^A \hat{\Omega}^A\hat{\delta}^A}}\Biggr) +  \pi_1 \Phi\Biggl(\frac{\log \frac{\hat{\pi}_0}{\hat{\pi}_1} -(\hat{\delta}^A)^T \hat{\Omega}^A(\bar{\hat{\mu}}^A - \mu^A_1)}{\sqrt{(\hat{\delta}^A)^T \hat{\Omega}^A\Sigma^A \hat{\Omega}^A\hat{\delta}^A}}\Biggr),
\end{align*}
where $\hat{\delta}^A := \hat{\mu}_0^A - \hat{\mu}_1^A$ and $\bar{\hat{\mu}}^A := (\hat{\mu}_0^A + \hat{\mu}_1^A)/2$.  

\citet{Efron:75} studied the excess risk of the LDA classifier in an asymptotic regime in which $d$ is fixed as $n$ diverges. Specialising his results for simplicity to the case where $\pi_0 = \pi_1$, he showed that using the LDA classifier~\eqref{eq--LDA} with $A=A^*$ yields
\begin{equation}
\label{Eq:RiskLDA}
\mathbb{E}\{R({C}_n^{A^*})\} - R(C^{\mathrm{Bayes}}) = \frac{d}{n}\phi\Bigl(-\frac{\Delta}{2}\Bigr)\biggl(\frac{\Delta}{4} + \frac{1}{\Delta}\biggr)\{1 + o(1)\} 
\end{equation}
as $n \rightarrow \infty$, where $\Delta := \|\Sigma^{-1/2}(\mu_0 - \mu_1)\| = \|(\Sigma^{A^*})^{-1/2}(\mu_0^{A^*} - \mu_1^{A^*})\|$. 

It remains to control the errors $\epsilon_n$ and $\epsilon_n^{A^*}$ in Theorem \ref{Thm:Main}.  For the LDA classifier, we consider the training error estimator
\begin{equation}
\label{eq--riskest1}
{R}_{n}^{A} := \frac{1}{n} \sum_{i=1}^n  \mathds{1}_{\{{C}_{n}^{A-\mathrm{LDA}}(X_i) \neq Y_{i} \}}.
\end{equation}
\citet{DevroyeWagner1976} provided a Vapnik--Chervonenkis bound for ${R}_n^A$ under no assumptions on the underlying data generating mechanism: for every $n \in \mathbb{N}$ and $\epsilon >0$,
\begin{equation}
\label{Eq:InSampleBound}
\sup_{A\in \mathcal{A}} \mathbb{P}\{|R({C}_n^{A}) - {R}_{n}^{A}| > \epsilon \} \leq 8(n+1)^{d+1}e^{-n\epsilon^2/32};
\end{equation}
see also \citet[][Theorem~23.1]{PTPR:96}.  We can then conclude that 
\begin{align}
\label{Eq:epsilonnALDA}
\mathbb{E}|\epsilon_n^{A^*}| \leq \mathbb{E}\bigl|R({C}_n^{A^*}) - {R}_{n}^{A^*}\bigr| & \leq \inf_{\epsilon_0 \in (0,1)} \epsilon_0 + 8(n+1)^{d+1}\int_{\epsilon_0}^1 e^{-ns^2/32} \, ds \nonumber
\\ & \leq  8 \sqrt{\frac{(d+1)\log(n+1) + 3 \log 2 + 1}{2n}}. 
\end{align}
The more difficult term to deal with is 
\[
\mathbb{E}|\epsilon_n| = \mathbb{E}\bigl|\mathbf{E}\{R({C}_n^{\mathbf{A}_1}) - {R}_{n}^{\mathbf{A}_1}\}\bigr| \leq \mathbb{E}\bigl|R({C}_n^{\mathbf{A}_1}) - {R}_{n}^{\mathbf{A}_1}\bigr|.
\]  
In this case, the bound~\eqref{Eq:InSampleBound} cannot be applied directly, because $(X_1,Y_1),\ldots,(X_n,Y_n)$ are no longer independent conditional on $\mathbf{A}_1$; indeed $\mathbf{A}_1 = \mathbf{A}_{1,b_2^*(1)}$ is selected from $\mathbf{A}_{1,1},\ldots,\mathbf{A}_{1,B_2}$ so as to minimise an estimate of test error, which depends on the training data.  Nevertheless, since $\mathbf{A}_{1,1},\ldots,\mathbf{A}_{1,B_2}$ are independent of $\mathcal{T}_n$, we still have that
\begin{align*}
\mathbb{P}\Bigl\{\max_{b_2=1,\ldots,B_2}  |R({C}_n^{\mathbf{A}_{1,b_2}}) - {R}_{n}^{\mathbf{A}_{1,b_2}}| > \epsilon \Bigm| \mathbf{A}_{1,1},\ldots,\mathbf{A}_{1,B_2}\Bigr\} & \leq \sum_{b_2=1}^{B_2} \mathbb{P}\bigl\{|R({C}_n^{\mathbf{A}_{1,b_2}}) - {R}_{n}^{\mathbf{A}_{1,b_2}}| > \epsilon \bigm|\mathbf{A}_{1,b_2}\bigr\} \\
&\leq 8(n+1)^{d+1} B_2 e^{-n\epsilon^2/32}.
\end{align*}
We can therefore conclude by almost the same argument as that leading to~\eqref{Eq:epsilonnALDA} that
\begin{equation}
\label{Eq:epsilonn2}
\mathbb{E}|\epsilon_n| \leq \mathbb{E}\biggl\{\max_{b_2=1,\ldots,B_2}\bigl|R({C}_n^{\mathbf{A}_{1,b_2}}) - {R}_{n}^{\mathbf{A}_{1,b_2}}\bigr|\biggr\} \leq 8 \sqrt{\frac{(d+1)\log(n+1) + 3 \log 2 + \log B_2 + 1}{2n}}.
\end{equation}
Note that none of the bounds~\eqref{Eq:RiskLDA}, \eqref{Eq:epsilonnALDA} and~\eqref{Eq:epsilonn2} depend on the original data dimension $p$.  Moreover,~\eqref{Eq:epsilonn2}, together with Theorem~\ref{Thm:Main}, reveals a trade-off in the choice of $B_2$ when using LDA as the base classifier.  Choosing $B_2$ to be large gives us a good chance of finding a projection with a small estimate of test error, but we may incur a small overfitting penalty as reflected by~\eqref{Eq:epsilonn2}.

Finally, we remark that an alternative method of fitting linear classifiers is via empirical risk minimisation.  In this context, \citet[][Theorem~3.1]{Durrant:13} give high probability bounds on the test error of a linear empirical risk minimisation classifier based on a single random projection, where the bounds depend on what those authors refer to as the `flipping probability', namely the probability that the class assignment of a point based on the projected data differs from the assignment in the ambient space.  In principle, these bounds could be combined with our Theorem~\ref{Thm:BaggingBound}, though the resulting expressions would depend on probabilistic bounds on flipping probabilities.

\subsection{Quadratic Discriminant Analysis}
\label{Sec:QDA}
Quadratic Discriminant Analysis (QDA) is designed to handle situations where the class-conditional covariance matrices are unequal.  Recall that when $X|Y=r \sim N_p(\mu_r, \Sigma_r)$, and $\pi_r = \mathbb{P}(Y=r)$, for $r=0,1$, the Bayes decision boundary is given by $\{x \in \mathbb{R}^p:\Delta(x ;\pi_0, \mu_0, \mu_1, \Sigma_0, \Sigma_1) = 0\}$, where
\begin{align*}
\Delta(x ; \pi_0,\mu_0, \mu_1, \Sigma_0, \Sigma_1) &:= \log\frac{\pi_1}{\pi_0} - \frac 1 2 \log \Bigl(\frac{\det \Sigma_1}{\det \Sigma_0}\Bigr) - \frac 1 2 x^T(\Sigma_1^{-1}-\Sigma_0^{-1})x \\ 
& \hspace{30 pt} + x^T (\Sigma_1^{-1}\mu_1 - \Sigma_0^{-1}\mu_0)  - \frac{1}{2} \mu_1^T\Sigma_1^{-1}\mu_1 +  \frac 1 2 \mu_0^T\Sigma_0^{-1}\mu_0.
\end{align*}
In QDA, $\pi_r$, $\mu_r$ and $\Sigma_r$ are estimated by their sample versions.  If $p \geq \min(n_0,n_1)$, where we recall that $n_r := \sum_{i=1}^n \mathds{1}_{\{Y_i=r\}}$, then at least one of the sample covariance matrix estimates is singular, and QDA cannot be used directly.  Nevertheless, we can still choose $d < \min(n_0,n_1)$ and use QDA as the base classifier in a random projection ensemble.  Specifically, we can set
\[
{C}_n^{A}(x) = {C}_n^{A - \mathrm{QDA}}(x) := \left\{ \begin{array}{ll}
         1 & \mbox{if $\Delta(x ; \hat{\pi}_0,\hat{\mu}_0^A,\hat{\mu}^A_1, \hat{\Sigma}^A_0, \hat{\Sigma}^A_1) \geq 0$};\\
         0 & \mbox{otherwise,}\end{array} \right.
\]
where $\hat{\pi}_r$ and $\hat{\mu}_r^A$ were defined in Section~\ref{sec--LDA}, and where
\[
\hat{\Sigma}_r^A := \frac{1}{n_r-1}\sum_{\{i:Y_i^A = r\}} (AX_i - \hat{\mu}_r^A)(AX_i - \hat{\mu}_r^A)^T
\]
for $r=0,1$.  Unfortunately, analogous theory to that presented in Section~\ref{sec--LDA} does not appear to exist for the QDA classifier; unlike for LDA, the risk does not have a closed form (note that $\Sigma_1-\Sigma_0$ is non-definite in general).  Nevertheless, we found in our simulations presented in Section~\ref{sec--empirical} that the QDA random projection ensemble classifier can perform very well in practice.  In this case, we estimate the test errors using the leave-one-out method given by
\begin{equation}
\label{eq--leave1out}
{R}_{n}^{A} := \frac{1}{n} \sum_{i=1}^n  \mathds{1}_{\{{C}_{n,i}^A(X_i) \neq Y_{i} \}},
\end{equation}
where ${C}_{n,i}^A$ denotes the classifier ${C}^A_n$, trained without the $i$th pair, i.e.\ based on $\mathcal{T}_n^A \setminus \{Z_i^A, Y_i^A\}$.  For a method like QDA that involves estimating more parameters than LDA, we found that the leave-one-out estimator was less susceptible to overfitting than the training error estimator.
  
\subsection{The $k$-nearest neighbour classifier}
\label{Sec:knn}
The $k$-nearest neighbour classifier ($k$nn), first proposed by \citet{Fix:51}, is a nonparametric method that classifies the test point $x \in \mathbb{R}^p$ according to a majority vote over the classes of the $k$ nearest training data points to $x$. The enormous popularity of the $k$nn classifier can be attributed partly due to its simplicity and intuitive appeal; however, it also has the attractive property of being universally consistent: for every fixed distribution $P$, as long as $k\to \infty$ and $k/n \to 0$, the risk of the $k$nn classifier converges to the Bayes risk \citep[][Theorem~6.4]{PTPR:96}.

\citet*{Hall:08} studied the rate of convergence of the excess risk of the $k$-nearest neighbour classifier under regularity conditions that require, inter alia, that $p$ is fixed and that the class-conditional densities have two continuous derivatives in a neighbourhood of the $(p-1)$-dimensional manifold on which they cross.  In such settings, the optimal choice of $k$, in terms of minimising the excess risk, is $O(n^{4/(p+4)})$, and the rate of convergence of the excess risk with this choice is $O(n^{-4/(p+4)})$.  Thus, in common with other nonparametric methods, there is a `curse of dimensionality' effect that means the classifier typically performs poorly in high dimensions. \citet{Samworth:12} found the optimal way of assigning decreasing weights to increasingly distant neighbours, and quantified the asymptotic improvement in risk over the unweighted version, but the rate of convergence remains the same.  

We can use the $k$nn classifier as the base classifier for a random projection ensemble as follows: given $z \in \mathbb{R}^d$, let $(Z^A_{(1)},Y^A_{(1)}), \dots, (Z^A_{(n)}, Y^A_{(n)})$ be a re-ordering of the training data such that $\|Z^A_{(1)}-z\| \leq \ldots \leq \|Z^A_{(n)} - z\|$, with ties split at random.  Now define
\[
{C}_n^A(x) = {C}_n^{A-k\mathrm{nn}}(x) := \left\{ \begin{array}{ll}
         1 & \mbox{if $S_n^A(Ax) \geq 1/2$};\\
         0 & \mbox{otherwise,}\end{array} \right. 
\]         
where $S_n^A(z) : = k^{-1} \sum_{i=1}^k \mathds{1}_{\{Y^A_{(i)} = 1\}}$.  The theory described in the previous paragraph can be applied to show that, under regularity conditions, $\mathbb{E}\{R({C}_n^{A^*})\} - R(C^{\mathrm{Bayes}}) = O(n^{-4/(d+4)})$.

Once again, a natural estimate of the test error in this case is the leave-one-out estimator defined in~\eqref{eq--leave1out}, where we use the same value of $k$ on the leave-one-out datasets as for the base classifier, and where distance ties are split in the same way as for the base classifier.  For this estimator, \citet[][Theorem~4]{DevroyeWagner:1979} showed that for every $n \in \mathbb{N}$,
\[
\sup_{A \in \mathcal{A}} \mathbb{E}[\{R({C}_n^{A}) - {R}_{n}^{A}\}^2] \leq \frac{1}{n} + \frac{24k^{1/2}}{n\sqrt{2\pi}};
\]
see also \citet[Chapter~24]{PTPR:96}.  It follows that 
\[
\mathbb{E}|\epsilon_n^{A^*}| \leq  \Bigl( \frac{1}{n} + \frac{24k^{1/2}}{n\sqrt{2\pi}} \Bigr)^{1/2} \leq \frac{1}{n^{1/2}} + \frac{2^{5/4}\sqrt{3}k^{1/4}}{n^{1/2}\pi^{1/4}}.
\]  
In fact, \citet[][Theorem~1]{DevroyeWagner:1979} also provided a tail bound analogous to~\eqref{Eq:InSampleBound} for the leave-one-out estimator: for every $n \in \mathbb{N}$ and $\epsilon > 0$,
\[
\sup_{A \in \mathcal{A}}\mathbb{P}\{|R(C_n^{A})  - R_n^{A}| > \epsilon\} \leq 2 \exp\Bigl(-\frac{n\epsilon^2}{18}\Bigr) + 6\exp\Bigl(-\frac{n\epsilon^3}{108k(3^d+1)}\Bigr) \leq 8\exp\Bigl(-\frac{n\epsilon^3}{108k(3^d+1)}\Bigr).
\]
Arguing very similarly to Section~\ref{sec--LDA}, we can deduce that under no conditions on the data generating mechanism, and choosing $\epsilon_0 := \bigl\{\frac{108k(3^d+1)}{n}\log(8B_2)\bigr\}^{1/3}$,
\begin{align*}
\mathbb{E}|\epsilon_n| &= \int_0^1 \mathbb{P}\biggl\{\max_{b_2 = 1, \dots, B_2} |R(C_n^{\mathbf{A}_{1,b_2}})  - R_n^{\mathbf{A}_{1, b_2}}| > \epsilon\biggr\} \, d\epsilon \\
&\leq \epsilon_0 + 8B_2 \int_{\epsilon_0}^\infty \exp\Bigl(-\frac{n\epsilon^3}{108k(3^d+1)}\Bigr) \, d\epsilon \leq 3\{4(3^d+1)\}^{1/3}\biggl\{\frac{k(1+\log B_2 + 3 \log 2)}{n}\biggr\}^{1/3}.
\end{align*}
We have therefore again bounded the expectations of the first two terms on the right-hand side of~\eqref{Eq:Main} by quantities that do not depend on $p$. 

\subsection{A general strategy using sample splitting}
In Sections~\ref{sec--LDA},~\ref{Sec:QDA} and~\ref{Sec:knn}, we focused on specific choices of the base classifier to derive bounds on the expected values of the first two terms in the bound in Theorem~\ref{Thm:Main}.  The aim of this section, on the other hand, is to provide similar guarantees that can be applied for any choice of base classifier in conjunction with a sample splitting strategy.  The idea is to split the sample $\mathcal{T}_n$ into $\mathcal{T}_{n,1}$ and $\mathcal{T}_{n,2}$, say, where $|\mathcal{T}_{n,1}| =: n^{(1)}$ and $|\mathcal{T}_{n,2}| =: n^{(2)}$.  To estimate the test error of ${C}_{n^{(1)}}^A$, the projected data base classifier trained on $\mathcal{T}_{n,1}^A := \{(Z_i^A,Y_i^A):(X_i,Y_i) \in \mathcal{T}_{n,1}\}$, we use 
\[
{R}_{n^{(1)},n^{(2)}}^A := \frac{1}{n^{(2)}} \sum_{(X_i,Y_i) \in \mathcal{T}_{n,2}} \mathds{1}_{\{{C}_{n^{(1)}}^{A} (X_i) \neq Y_i\}};
\]
in other words, we construct the classifier based on the projected data from $\mathcal{T}_{n,1}$, and count the proportion of points in $\mathcal{T}_{n,2}$ that are misclassified.  Similar to our previous approach, for the $b_1$th group of projections, we then select a projection $\mathbf{A}_{b_1}$ that minimises this estimate of test error, and construct the random projection ensemble classifier $C_{n^{(1)},n^{(2)}}^{\mathrm{RP}}$ from 
\[
\nu_{n^{(1)}}(x) := \frac{1}{B_1}\sum_{b_1=1}^{B_1} \mathds{1}_{\{{C}_{n^{(1)}}^{\mathbf{A}_{b_1}}(x) = 1 \}}.
\]
Writing $R_{n^{(1)},n^{(2)}}^* := \min_{A \in \mathcal{A}} R_{n^{(1)},n^{(2)}}^A$, we introduce the following assumption analogous to assumption~2:
\begin{description}
\item \textit{Assumption 2'.} There exists $\beta \in (0,1]$ such that
\[
\mathbf{P}\bigl({R}_{n^{(1)}, n^{(2)}}^{\mathbf{A}_{1,1}} \leq {R}_{n^{(1)},n^{(2)}}^* + |\epsilon_{n^{(1)},n^{(2)}}| \bigr) \geq \beta,
\]
where $\epsilon_{n^{(1)},n^{(2)}} := \mathbf{E}\{R({C}_{n^{(1)}}^{\mathbf{A}_1}) - {R}_{n^{(1)},n^{(2)}}^{\mathbf{A}_1}\}$.
\end{description}
The following bound for the random projection ensemble classifier with sample splitting is then immediate from Theorem~\ref{Thm:Main} and Proposition~\ref{Prop:A^*}.
\begin{corollary} 
\label{Thm:Mainsamplesplit}
Assume assumptions~2'~and~3.  Then, for each $B_1, B_2 \in \mathbb{N}$,
\begin{align*}
\mathbf{E}\{R({C}_{n^{(1)},n^{(2)}}^{\mathrm{RP}})\} - R(C^{\mathrm{Bayes}}) & \leq \frac{R({C}_{n^{(1)}}^{A^*}) - R(C^{A^*-\mathrm{Bayes}})}{\min(\alpha,1-\alpha)}  + \frac{2|\epsilon_{n^{(1)},n^{(2)}}| - \epsilon_{n^{(1)}, n^{(2)}}^{A^*}}{\min(\alpha,1-\alpha)}  + \frac{(1-\beta)^{B_{2}}}{\min(\alpha,1-\alpha)},
\end{align*}
where $\epsilon_{n^{(1)}, n^{(2)}}^{A^*} := R({C}_{n^{(1)}}^{A^*}) - {R}_{n^{(1)},n^{(2)}}^{A^*}$.
\end{corollary}
The main advantage of this approach is that, since $\mathcal{T}_{n,1}$ and $\mathcal{T}_{n,2}$ are independent, we can apply Hoeffding's inequality to deduce that
\[
\sup_{A \in \mathcal{A}} \mathbb{P}\bigl\{|R({C}_{n^{(1)}}^A) - {R}_{n^{(1)},n^{(2)}}^A| \geq \epsilon \bigm| \mathcal{T}_{n,1}\bigr\} \leq 2e^{-2n^{(2)}\epsilon^2}.
\]
It then follows by very similar arguments to those given in Section~\ref{sec--LDA} that
\begin{align}
\label{Eq:SampleSplit}
\mathbb{E}(|\epsilon_{n^{(1)},n^{(2)}}^{A^*}| \bigm| \mathcal{T}_{n,1}) = \mathbb{E}\bigl\{|R({C}_{n^{(1)}}^{A^*}) - {R}_{n^{(1)},n^{(2)}}^{A^*}|\bigm|\mathcal{T}_{n,1}\bigr\} &\leq \Bigl(\frac{1 + \log 2}{2n^{(2)}}\Bigr)^{1/2}, \nonumber \\
\mathbb{E}(|\epsilon_{n^{(1)},n^{(2)}}| \bigm| \mathcal{T}_{n,1}) =  \mathbb{E}\bigl\{|R({C}_{n^{(1)}}^{\mathbf{A}_1}) - {R}_{n^{(1)},n^{(2)}}^{\mathbf{A}_1}|\bigm|\mathcal{T}_{n,1}\bigr\} &\leq \Bigl(\frac{1 + \log 2 + \log B_2}{2n^{(2)}}\Bigr)^{1/2}.
\end{align}
These bounds hold for any choice of base classifier (and still without any assumptions on the data generating mechanism); moreover, since the bounds on the terms in~\eqref{Eq:SampleSplit} merely rely on Hoeffding's inequality as opposed to Vapnik--Chervonenkis theory, they are typically sharper.  The disadvantage is that the first term in the bound in Corollary~\ref{Thm:Mainsamplesplit} will typically be larger than the corresponding term in Theorem~\ref{Thm:Main} due to the reduced effective sample size.  

\section{Practical considerations}
\label{sec--prac}

\subsection{Computational complexity}
\label{sec--compcomp}
The random projection ensemble classifier aggregates the results of applying a base classifier to many random projections of the data.  Thus we need to compute the training error (or leave-one-out error) of the base classifier after applying each of the $B_1B_2$ projections.  The test point is then classified using the $B_1$ projections that yield the minimum error estimate within each block of size $B_{2}$.  

Generating a random projection from Haar measure involves computing the left singular vectors of a $p \times d$ matrix, which requires $O(pd^2)$ operations \citep[Lecture~31]{Trefethen:97}.  However, if computational cost is a concern, one may simply generate a matrix with $pd$ independent $N(0,1/p)$ entries.  If $p$ is large, such a matrix will be approximately orthonormal with high probability.  In fact, when the base classifier is affine invariant (as is the case for LDA and QDA), this will give the same results as using Haar projections, in which case one can forgo the orthonormalisation step altogether when generating the random projections.  Even in very high-dimensional settings, multiplication by a random Gaussian matrix can be approximated in a computationally efficient manner \citep[e.g.][]{LSS2013}.  Once a projection is generated, we need to map the training data to the lower dimensional space, which requires $O(npd)$ operations.  We then classify the training data using the base classifier. The cost of this step, denoted $T_{\mathrm{train}}$, depends on the choice of the base classifier; for LDA and QDA it is $O(nd^2)$, while for $k$nn it is $O(n^{2}d)$. Finally the test points are classified using the chosen projections; this cost, denoted $T_{\mathrm{test}}$, also depends on the choice of base classifier. For LDA, QDA and $k$nn it is $O(n_{\mathrm{test}}d)$, $O(n_{\mathrm{test}}d^{2})$ and $O((n+n_{\mathrm{test}})^{2}d)$, respectively, where $n_{\mathrm{test}}$ is the number of test points.  Overall we therefore have a cost of $O(B_1 \{ B_2 (npd + T_{\mathrm{train}}) + n_{\mathrm{test}}pd +  T_{\mathrm{test}}\})$ operations.   

An appealing aspect of the proposal studied here is the scope to incorporate parallel computing.  We can simultaneously compute the projected data base classifier for each of the $B_1B_2$ projections.  In the supplementary material we present the running times of the random projection ensemble classifier and the other methods considered in the empirical comparison in Section~\ref{sec--empirical}.

\subsection{Choice of $\alpha$}
\label{sec--alpha}
We now discuss the choice of the voting threshold $\alpha$.  In~\eqref{eq--alpha},  at the end of Section~\ref{sec--RP}, we defined the oracle choice $\alpha^{*}$, which minimises the test error of the infinite-simulation random projection classifier.  Of course, $\alpha^*$ cannot be used directly, because we do not know $G_{n,0}$ and $G_{n,1}$ (and we may not know $\pi_0$ and $\pi_1$ either).  Nevertheless, for the LDA base classifier we can estimate $G_{n,r}$ using 
\[
\hat{G}_{n,r}(t) := \frac{1}{n_r}\sum_{\{i:Y_i=r\}} \mathds{1}_{\{\nu_n(X_i) < t\}}
\]
for $r=0,1$.  For the QDA and $k$-nearest neighbour base classifiers, we use the leave-one-out-based estimate $\tilde{\nu}_n(X_{i}) :=   B_1^{-1}\sum_{b_1=1}^{B_1} \mathds{1}_{\{{C}_{n,i}^{\mathbf{A}_{b_1}}(X_{i}) = 1 \}}$ in place of $\nu_n(X_i)$.  We also estimate $\pi_r$ by $\hat{\pi}_r := n^{-1}\sum_{i=1}^n \mathds{1}_{\{Y_i = r\}}$, and then set the cut-off in~(\ref{eq--RP}) as
\begin{equation}
\label{Eq:alphahat}
\hat{\alpha} \in \argmin_{\alpha' \in [0,1]} \bigl[\hat{\pi}_1 \hat{G}_{n,1}(\alpha') + \hat{\pi}_0\{1 - \hat{G}_{n,0}(\alpha')\}\bigr].
\end{equation}
Since empirical distribution functions are piecewise constant, the objective function in~\eqref{Eq:alphahat} does not have a unique minimum, so we choose $\hat{\alpha}$ to be the average of the smallest and largest minimisers.  An attractive feature of the method is that $\{\nu_n(X_i):i=1,\ldots,n\}$ (or $\{\tilde{\nu}_n(X_i):i=1,\ldots,n\}$ in the case of QDA and $k$nn) are already calculated in order to choose the projections, so calculating $\hat{G}_{n,0}$ and $\hat{G}_{n,1}$ carries negligible extra computational cost. 

Figure~\ref{fig--GPlots0.5} illustrates $\hat{\pi}_1 \hat{G}_{n,1}(\alpha') + \hat{\pi}_0\{1 - \hat{G}_{n,0}(\alpha')\}$ as an estimator of $\pi_1 G_{n,1}(\alpha') + \pi_0\{1 - G_{n,0}(\alpha')\}$, for the QDA base classifier and for different values of $n$ and $\pi_1$.  Here, a very good approximation to the estimand was obtained using an independent data set of size 5000.  Unsurprisingly, the performance of the estimator improves as $n$ increases, but the most notable feature of these plots is the fact that for all classifiers and even for small sample sizes, $\hat{\alpha}$ is an excellent estimator of $\alpha^*$, and may be a substantial improvement on the naive choice $\hat{\alpha} = 1/2$ (or the appropriate prior weighted choice), which may result in a classifier that assigns every point to a single class.  

\begin{figure}[h]
  \centering
  \makebox{\includegraphics[width=0.9\textwidth]{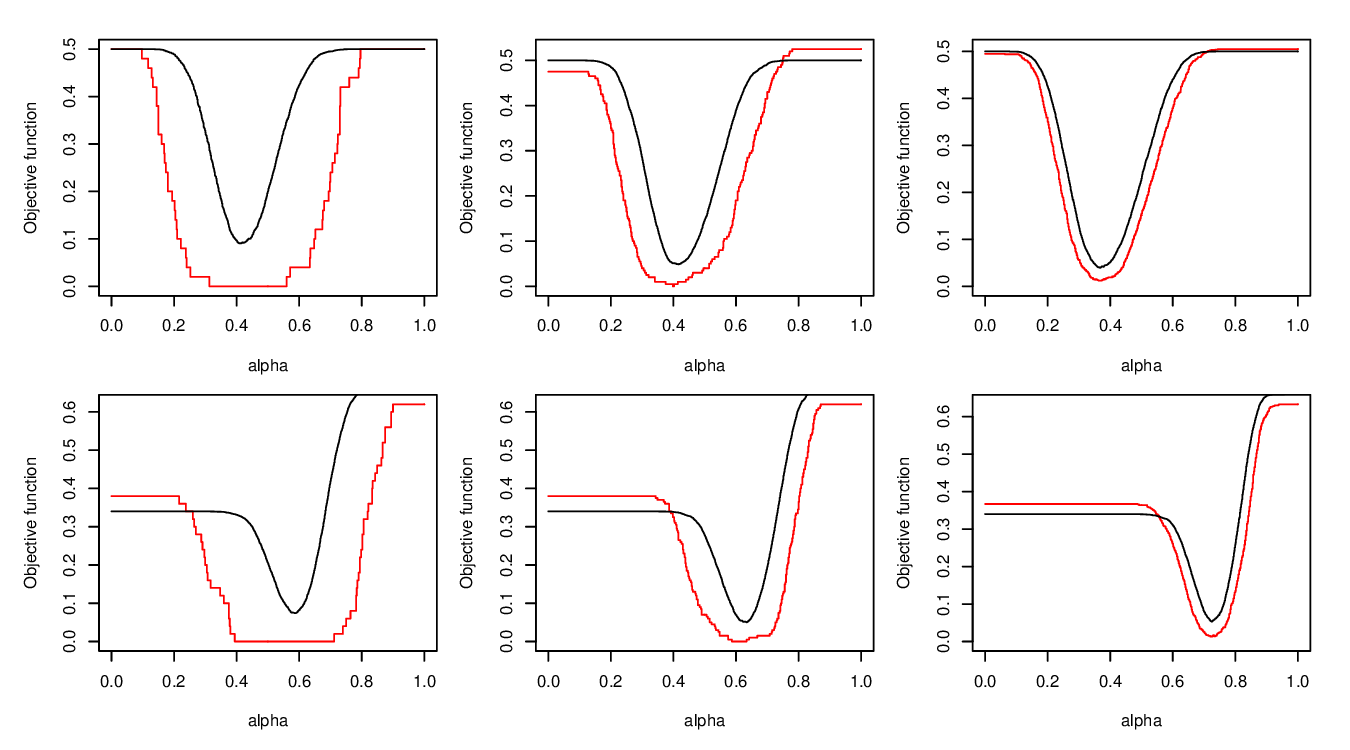}}
 \caption{ \label{fig--GPlots0.5} $\pi_1 G_{n,1}(\alpha') + \pi_0\{1 - G_{n,0}(\alpha')\}$ in~\eqref{eq--alpha} (black) and $\hat{\pi}_1 \hat{G}_{n,1}(\alpha') + \hat{\pi}_0\{1 - \hat{G}_{n,0}(\alpha')\}$ (red) for the QDA base classifier after projecting for one training data set of size $n =  50$ (left), $200$ (middle)  and $1000$ (right) from Model~3 with $\pi_1 = 0.5$ (top) and $\pi_1 = 0.66$ (bottom). Here, $p =100$ and $d=2$.} 
\end{figure}

\subsection{Choice of $B_1$ and $B_2$}
\label{sec--B1B2}
In order to minimise the Monte Carlo error as described in Theorem~\ref{thm--Bvar} and Proposition~\ref{Prop:variance}, we should choose $B_1$ to be as large as possible.  The constraint, of course, is that the computational cost of the random projection classifier scales linearly with $B_1$.  The choice of $B_2$ is more subtle; while the third term in the bound in Theorem~\ref{Thm:Main} decreases as $B_2$ increases, we saw in Section~\ref{sec--base} that upper bounds on $\mathbb{E}|\epsilon_n|$ may increase with $B_2$.   In principle, we could try to use the expressions given in Theorem~\ref{Thm:Main} and Section~\ref{sec--base} to choose $B_2$ to minimise the overall upper bound on $\mathbf{E}\{R({C}_n^{\mathrm{RP}})\} - R(C^{\mathrm{Bayes}})$.    In practice, however, we found that an involved approach such as this was unnecessary, and that the ensemble method was robust to the choice of $B_1$ and $B_2$; see Section~\ref{sec--B1B2choice} of the supplementary material for numerical evidence and further discussion.  Based on this numerical work, we recommend $B_1 = 500$ and $B_2 = 50$ as sensible default choices, and indeed these values were used in all of our experiments in Section~\ref{sec--empirical} as well as Section~\ref{Sec:FurtherSim} in the supplementary material.

\subsection{Choice of $d$}
\label{sec--d}
We want to choose $d$ as small as possible in order to obtain the best possible performance bounds as described in Section~\ref{sec--base} above.  This also reduces the computational cost.  However, the performance bounds rely on assumption~3, whose strength decreases as $d$ increases, so we want to choose $d$ large enough that this condition holds (at least approximately).

In Section \ref{sec--empirical} we see that the random projection ensemble method is quite robust to the choice of~$d$.  Nevertheless, in some circumstances it may be desirable to have an automatic choice, and cross-validation provides one possible approach when computational cost at \textit{training time} is not too constrained.  Thus, if we wish to choose $d$ from a set $\mathcal{D} \subseteq \{1,\dots, p\}$, then for each $d \in \mathcal{D}$, we train the random projection ensemble classifier, and set
\[
\hat{d} := \sargmin_{d \in \mathcal{D}} \bigl[\hat{\pi}_1 \hat{G}_{n,1}(\hat{\alpha}) + \hat{\pi}_0\{1 - \hat{G}_{n,0}(\hat{\alpha})\}\bigr],
\]
where $\hat{\alpha} = \hat{\alpha}_d$ is given in \eqref{Eq:alphahat}.  Such a proceedure does not add to the computational cost at \textit{test time}.  This strategy is most appropriate when $\max\{d:d \in \mathcal{D}\}$ is not too large (which is the setting we have in mind); otherwise a penalised risk approach may be more suitable.

\section{Empirical analysis}
\label{sec--empirical}
In this section, we assess the empirical performance of the random projection ensemble classifier in simulated and real data experiments.  We will write RP-LDA$_d$, RP-QDA$_d$ and RP-$k$nn$_d$ to denote the random projection classifier with LDA, QDA, and $k$nn base classifiers, respectively; the subscript $d$ refers to the dimension of the image space of the projections.  

For comparison we present the corresponding results of applying, where possible, the three base classifiers (LDA, QDA, $k$nn) in the original $p$-dimensional space alongside 11 other classification methods chosen to represent the state of the art.  These include Random Forests (RF) \citep{Breiman2001}; Support Vector Machines (SVM) \citep{Cortes:95}; Gaussian Process (GP) classifiers \citep{Williams:98}; and three methods designed for high-dimensional classification problems, namely Penalized LDA (PenLDA) \citep{Witten:11}, Nearest Shrunken Centroids (NSC) \citep{Tibshirani:03}, and $\ell_{1}$-penalised logistic regression (PenLog) \citep{Goeman:15}.  

A further comparison is with LDA and $k$nn applied after a single projection chosen based on the sufficient dimension reduction assumption (SDR5).  For this method, we project the data into 5 dimensions using the proposal of \citet{Shin:2014}.  This method requires $n > p$.  Finally, we compare with two related ensemble methods: optimal tree ensembles (OTE) \citep{Khan:15} and ensemble of subset of $k$-nearest neighbour classifiers (ES$k$nn) \citep{Gul:16}.  

Many of these methods require tuning parameter selection, and the parameters were chosen as follows: for the standard $k$nn classifier, we chose $k$ via leave-one-out cross validation from the set $\{3,5,7,9,11\}$.  The Random Forest was implemented using the \texttt{randomForest} package \citep{Liaw:2014}; we used an ensemble of 1000 trees, with $\lfloor \sqrt{p} \rfloor$ (the default setting in the \texttt{randomForest} package) components randomly selected when training each tree.    For the Radial SVM, we used the reproducing basis kernel $K(u,v) := \exp( -\frac{1}{p}\|u-v\|^2)$.  Both SVM classifiers were implemented using the \texttt{svm} function in the \texttt{e1071} package \citep{Meyer:2015}.  The GP classifier uses a radial basis function, with the hyperparameter chosen via the automatic method in the \texttt{gausspr} function in the \texttt{kernlab} package \citep{Karatzoglou:2015}.  The tuning parameters for the other methods were chosen using the default settings in the corresponding \texttt{R} packages \texttt{PenLDA} \citep{Witten:2011}, \texttt{NSC} \citep{Hastie:2015} and \texttt{penalized} \citep{Goeman:15} namely 6-fold, 10-fold and 5-fold cross validation, respectively.  For the OTE and ES$k$nn methods we used the default settings in the \texttt{R} packages \texttt{OTE} \citep{Khan:15a} and \texttt{ESKNN} \citep{Gul:15}.

\subsection{Simulated examples}
\label{sec:simulated}
We present four different simulation settings chosen to investigate the performance of the random projection ensemble classifier in a wide variety of scenarios.  
In each of the examples below, we take $n \in \{50, 200, 1000\}$,  $p \in \{100, 1000\}$ and investigate two different values of the prior probability.  We use Gaussian projections (cf.\ Section~\ref{sec--compcomp}) and set $B_1 = 500$ and $B_2 = 50$ (cf.\ Section~\ref{sec--B1B2}).  

The risk estimates and standard errors for the $p=100$ and $\pi_{1} = 0.5$ case are shown in Tables~\ref{tab--rotateddist100.5}~and~\ref{tab--indep100.5} (the remaining results are given in the supplementary material).  These were calculated as follows:  We set $n_{\mathrm{test}} = 1000$, $N_{\mathrm{reps}} = 100$, and for $l = 1, \ldots, N_{\mathrm{reps}}$ we generate a training set of size $n$ and a test set of size $n_{\mathrm{test}}$ from the same distribution.   Let $\hat{R}_l$ be the proportion of the test set that is classified incorrectly in the $l$th repeat of the experiment.  The overall risk estimate presented is $\widehat{\mathrm{Risk}} :=\frac{1}{N_{\mathrm{reps}}} \sum_{l=1}^{N_{\mathrm{reps}} } \hat{R}_l$.  Note that
\[
\mathbb{E}\{\widehat{\mathrm{Risk}}\} = \mathbb{E}\{R({C}_n^{\mathrm{RP}})\}
\]
and
\begin{align*}
\mathrm{Var} (\widehat{\mathrm{Risk}}) &= \frac{1}{N_{\mathrm{reps}}} \mathrm{Var}(\hat{R}_1)
\\ & = \frac{1}{N_{\mathrm{reps}}} \biggl[ \mathbb{E}\biggl\{\frac{\mathbf{E}\{R({C}_n^{\mathrm{RP}})\}[1-\mathbf{E}\{R({C}_n^{\mathrm{RP}})\}]}{n_{\mathrm{test}}} \biggr\}  + \mathrm{Var}\bigl[\mathbf{E}\{R({C}_n^{\mathrm{RP}})\}\bigr]\biggr].
\end{align*}
We therefore estimate the standard error in the tables below by
\[ 
\hat{\sigma} := \frac{1}{N_{\mathrm{reps}}^{1/2}} \biggl\{\frac{\widehat{\mathrm{Risk}}(1-\widehat{\mathrm{Risk}})}{n_{\mathrm{test}}}   + \frac{n_{\mathrm{test}}-1}{n_{\mathrm{test}}N_{\mathrm{reps}}} \sum_{l = 1}^{N_{\mathrm{reps}}} (\hat{R}_l - \widehat{\mathrm{Risk}})^2 \biggr\}^{1/2}.
\]
The method with the smallest risk estimate in each column of the tables below is highlighted in bold; where applicable, we also highlight any method with a risk estimate within one standard error of the minimum. 

\subsubsection{Sparse class boundaries}
\label{sec--multisims}
Model 1: Here, $X |\{Y=0\} \sim \frac{1}{2} N_{p}(\mu_{0}, \Sigma) + \frac{1}{2} N_{p}(-\mu_{0}, \Sigma)$, and $X |\{Y=1\} \sim \frac{1}{2} N_{p}(\mu_{1}, \Sigma ) + \frac{1}{2} N_{p}(-\mu_{1}, \Sigma)$, where, for $p=100$, we set $\Sigma = I_{100 \times 100}$, $\mu_0 = (2,-2,0, \ldots,0)^T$ and $\mu_1 = (2,2,0, \ldots,0)^T$. 

In Model~1,  assumption~3 holds with $d=2$; for example, we could take the rows of $A^{*}$ to be the first two Euclidean basis vectors.   We see that the RP ensemble classifier with the QDA base classifier performs very well here, as does the OTE method.   Despite the fact that the regression function $\eta$ only depends on the first two components in this example, the comparators designed for sparse problems do not perform well; in some cases they are no better than a random guess.

\subsubsection{Rotated Sparse Normal}
\label{sec--rotated}
Model 2: Here, $X |\{Y=0\} \sim N_{p}(\Omega_{p}\mu_{0}, \Omega_{p}\Sigma_{0}\Omega_{p}^{T} )$, and $X |\{Y=1\} \sim N_{p}(\Omega_{p}\mu_{1}, \Omega_{p}\Sigma_{1}\Omega_{p}^{T})$, where $\Omega_{p}$ is a $p \times p$ rotation matrix that was sampled once according to Haar measure, and remained fixed thereafter, and we set $\mu_0 = (3,3,3, 0, \dots,0)^T$ and $\mu_1 = (0, \ldots,0)^T$.  Moreover, $\Sigma_0$ and $\Sigma_1$ are block diagonal, with blocks $\Sigma_{r}^{(1)}$, and $\Sigma_{r}^{(2)}$, for $r=0,1$, where $\Sigma_{0}^{(1)}$ is a $3\times3$ matrix with diagonal entries equal to 2 and off-diagonal entries equal to 1/2, and $\Sigma_{1}^{(1)} = \Sigma_{0}^{(1)} - I_{3\times3}$.  In both classes $\Sigma_{r}^{(2)}$ is a $(p-3) \times (p-3)$ matrix, with diagonal entries equal to 1 and off-diagonal entries equal to 1/2.

In Model~2, assumption~3 holds with $d=3$; for instance, $A^*$ can be taken to be the first three rows of $\Omega_{p}^{T}$.  Perhaps surprisingly, whether we use too small a value of $d$ (namely $d=2$), or one that is too large ($d=5$), the RP ensemble methods still classify very well.

\begin{table}
  \caption{  \label{tab--rotateddist100.5} Misclassification rates for Models 1 and 2, with $p = 100$ and $\pi_1 = 0.5$.}
  \centering
\fbox{%
  \begin{tabular}{l|c  c c | c c c  }
       &\multicolumn{3}{c|}{Model 1, Bayes risk = 4.45   }  &\multicolumn{3}{c}{Model 2, Bayes risk = 4.09 }  \\
  \multicolumn{1}{r|}{$n$}  &$50$  &$200$  & $1000$  &$50$  &$200$  & $1000$  \\
    \hline
RP-LDA$_{2}$&$ 49.34 _{ 0.26 }$&$ 48.10 _{ 0.31 }$&$ 44.14 _{ 0.46 }$&$ 8.34 _{ 0.28 }$&$\mathbf{5.56} _{ 0.12 }$&$\mathbf{5.17} _{ 0.10 }$\\
RP-LDA$_{5}$&$ 49.81 _{ 0.24 }$&$ 48.86 _{ 0.30 }$&$ 46.91 _{ 0.40 }$&$\mathbf{8.17}_{ 0.27 }$&$\mathbf{5.64} _{ 0.13 }$&$\mathbf{5.14} _{ 0.10 }$\\
RP-QDA$_{2}$&$ 44.18 _{ 0.29 }$&$ 29.38 _{ 0.49 }$&$ 10.57 _{ 0.22 }$&$ 8.40 _{ 0.29 }$&$\mathbf{5.57} _{ 0.12 }$&$\mathbf{5.16} _{ 0.10 }$\\
RP-QDA$_{5}$&$ 39.32 _{ 0.33 }$&$\mathbf{22.32} _{ 0.32 }$&$  \mathbf{8.75} _{ 0.15 }$&$\mathbf{8.06}_{ 0.25 }$&$\mathbf{5.58} _{ 0.12 }$&$\mathbf{5.09}_{ 0.10 }$\\
RP-$k$nn$_{2}$&$ 46.10 _{ 0.30 }$&$ 36.18 _{ 0.32 }$&$ 19.42 _{ 0.20 }$&$  8.94 _{ 0.36 }$&$\mathbf{5.60} _{ 0.12 }$&$ 5.20 _{ 0.10 }$\\
RP-$k$nn$_{5}$&$ 43.65 _{ 0.30 }$&$ 25.34 _{ 0.35 }$&$ 10.21 _{ 0.16 }$&$  9.00 _{ 0.33 }$&$\mathbf{5.68}_{ 0.12 }$&$\mathbf{5.13} _{ 0.10 }$\\
LDA          &N/A          &$ 49.60 _{ 0.23 }$&$ 49.91 _{ 0.22 }$&N/A                &$14.32 _{ 0.22 }$&$ 6.34 _{ 0.11 }$\\
QDA &N/A                    &N/A            &$ 27.36 _{ 0.23 }$&N/A& N/A                                                 &$17.10 _{ 0.20 }$\\
$k$nn        &$\mathbf{34.66} _{ 0.35 }$&$ 23.71 _{ 0.31 }$&$ 15.31 _{ 0.17 }$&$ 12.81 _{ 0.28 }$&$ 8.80 _{ 0.15 }$&$ 7.28 _{ 0.13 }$\\
RF           &$ 49.72 _{ 0.23 }$&$ 48.33 _{ 0.25 }$&$ 43.28 _{ 0.43 }$&$ 11.11 _{ 0.31 }$&$ 6.80 _{ 0.12 }$&$ 6.07 _{ 0.11 }$\\
Radial SVM   &$ 49.83 _{ 0.22 }$&$ 50.16 _{ 0.22 }$&$ 48.67 _{ 0.22 }$&$ 24.04 _{ 1.47 }$&$ 6.37 _{ 0.14 }$&$ 5.46 _{ 0.10 }$\\
Linear SVM   &$ 50.02 _{ 0.23 }$&$ 49.55 _{ 0.21 }$&$ 50.04 _{ 0.22 }$&$  9.41 _{ 0.21 }$&$ 8.96 _{ 0.17 }$&$ 7.76 _{ 0.13 }$\\
Radial GP    &$ 48.18 _{ 0.30 }$&$ 42.76 _{ 0.29 }$&$ 26.60 _{ 0.24 }$&$ 14.09 _{ 0.63 }$&$ 5.84 _{ 0.13 }$&$\mathbf{5.09}_{ 0.10 }$\\
PenLDA       &$ 49.95 _{ 0.23 }$&$ 49.79 _{ 0.23 }$&$ 50.05 _{ 0.22 }$&$ 11.11 _{ 0.55 }$&$ 6.72 _{ 0.20 }$&$ 5.79 _{ 0.12 }$\\
NSC          &$ 49.74 _{ 0.23 }$&$ 49.69 _{ 0.26 }$&$ 49.55 _{ 0.24 }$&$ 12.61 _{ 0.61 }$&$ 7.27 _{ 0.28 }$&$ 5.82 _{ 0.13 }$\\
PenLog       &$ 49.66 _{ 0.35 }$&$ 49.88 _{ 0.24 }$&$ 50.12 _{ 0.21 }$&$ 11.37 _{ 0.22 }$&$ 7.67 _{ 0.14 }$&$ 6.00 _{ 0.11 }$\\
SDR5-LDA& N/A&$ 37.80 _{ 0.48 }$&$ 35.31 _{ 0.30 }$&N/A                &$15.07 _{ 0.22 }$&$ 6.47 _{ 0.11 }$\\
SDR5-$k$nn& N/A &$ 32.22 _{ 0.71 }$&$ 21.83 _{ 1.08 }$&N/A              &$18.81 _{ 0.29 }$&$ 7.75 _{ 0.12 }$\\
OTE          &$ 48.51 _{ 0.33 }$&$ 34.73 _{ 1.23 }$&$  9.57 _{ 0.66 }$&$ 18.26 _{ 0.47 }$&$12.44 _{ 0.26 }$&$ 9.24 _{ 0.15 }$\\
ES$k$nn      &$ 50.13 _{ 0.23 }$&$ 49.87 _{ 0.22 }$&$ 49.77 _{ 0.21 }$&$ 40.30 _{ 0.71 }$&$37.06 _{ 0.63 }$&$32.98 _{ 0.58 }$\\
\end{tabular}}
\end{table}

\subsubsection{Independent features}
\label{sec--independent1}
Model 3: Here, $P_0 = N_p(\mu,I_{p \times p})$, with $\mu = \frac{1}{\sqrt{p}}(1, \dots, 1, 0, \ldots, 0)^T$, where $\mu$ has $p/2$ non-zero components, while $P_1$ is the distribution of $p$ independent components, each with a standard Laplace distribution.

In Model~3, the class boundaries are non-linear and, in fact, assumption~3 is not satisfied for any $d < p$. Nevertheless, in Table~\ref{tab--indep100.5}, we see that where the LDA, QDA and $k$nn classifiers are tractable, they are outperformed by their random projection ensemble counterparts, and in fact the RP-QDA$_5$ classifier has the smallest misclassification rate among all methods implemented.  Unsurprisingly, the methods that are designed for a linear Bayes decision boundary are not effective.   The RP-QDA classifiers are especially accurate here; in particular, they are able to cope better with the non-linearity of the class boundaries than the RP-LDA classifiers.

\subsubsection{$t$-distributed features}
\label{sec--tsims}
Model 4: Here, $X |\{Y=r\} = \mu_r + \frac{Z_{r}}{\sqrt{U_{r}/\nu_r}}$, where $Z_{r} \sim N_p(0,\Sigma_r)$ independent of $U_{r}\sim \chi^2_{\nu_r}$, for $r = 0,1$. That is, $P_r$ is the multivariate $t$-distribution centred at $\mu_r$, with $\nu_r$ degrees of freedom and shape parameter $\Sigma_r$. We set $\mu_0 = (1,\dots,1,0, \dots,0 )^T$, where $\mu_0$ has 10 non-zero components, $\mu_1 = 0$, $\nu_0=2$, $\nu_1=1$, $\Sigma_0 = (\Sigma_{j,k})$, where $\Sigma_{j,j} = 1$, $\Sigma_{j,k} = 0.5$ if $\max(j,k) \leq 10$ and $j \neq k$, $\Sigma_{j,k} = 0$ otherwise, and $\Sigma_1 = I_{p \times p}$.

\begin{table}
  \caption{  \label{tab--indep100.5} Misclassification rates for Models 3 and 4, with $p = 100$ and $\pi_{1} = 0.5$.}
  \centering
\fbox{%
  \begin{tabular}{l|c c c | c c c  }
       &\multicolumn{3}{c|}{Model 3, Bayes risk = 1.01  }  &\multicolumn{3}{c|}{Model 4, Bayes risk = 12.68  } \\
  \multicolumn{1}{r|}{$n$}  &$50$      &$200$           & $1000$   &$50$      &$200$           & $1000$ \\
    \hline
RP-LDA$_{2}$  &$ 45.11 _{ 1.03 }$&$ 44.05 _{ 0.98 }$&$ 39.22 _{ 0.89 }$  &$ 38.06 _{ 0.71 }$&$ 38.45 _{ 0.92 }$&$ 40.48 _{ 0.84 }$\\
RP-LDA$_{5}$  &$ 45.58 _{ 0.60 }$&$ 44.46 _{ 0.58 }$&$ 41.08 _{ 0.56 }$ &$ 34.84 _{ 0.63 }$&$ 32.43 _{ 0.75 }$&$ 35.09 _{ 0.89 }$\\
RP-QDA$_{2}$  &$ 11.41 _{ 0.62 }$&$  4.83 _{ 0.15 }$&$  3.85 _{ 0.09 }$ &$ 42.12 _{ 0.47 }$&$ 41.99 _{ 0.28 }$&$ 42.37 _{ 0.21 }$\\
RP-QDA$_{5}$  &$\mathbf{9.71}_{ 0.52 }$&$\mathbf{4.23}_{ 0.14 }$&$\mathbf{3.29}_{ 0.08 }$ &$ 42.13 _{ 0.35 }$&$ 42.04 _{ 0.27 }$&$ 42.59 _{ 0.21 }$\\
RP-$k$nn$_{2}$&$ 20.69 _{ 0.84 }$&$  6.86 _{ 0.27 }$&$  4.73 _{ 0.11 }$&$ 30.85 _{ 0.49 }$&$ 24.07 _{ 0.31 }$&$ 20.76 _{ 0.19 }$\\
RP-$k$nn$_{5}$&$ 21.30 _{ 0.54 }$&$  6.91 _{ 0.18 }$&$  3.78 _{ 0.10 }$&$\mathbf{29.85}_{ 0.46 }$&$ 24.02 _{ 0.30 }$&$ 20.81 _{ 0.21 }$\\
LDA          &N/A                                &$ 46.22 _{ 0.25 }$&$ 41.74 _{ 0.24 }$&N/A                          &$ 37.34 _{ 0.29 }$&$ 31.04 _{ 0.26 }$\\
QDA          &N/A&N/A                                                      &$ 15.30 _{ 0.21 }$&N/A&N/A                                                &$ 40.90 _{ 0.21 }$\\
$k$nn        &$ 49.92 _{ 0.24 }$&$ 49.81 _{ 0.22 }$&$ 49.67 _{ 0.22 }$  &$ 37.49 _{ 0.63 }$&$ 30.14 _{ 0.34 }$&$ 27.58 _{ 0.25 }$\\
RF           &$ 44.79 _{ 0.34 }$&$ 23.38 _{ 0.30 }$&$  7.72 _{ 0.16 }$      &$30.97_{ 0.60 }$&$\mathbf{20.46}_{ 0.21 }$&$\mathbf{18.69}_{ 0.17 }$\\
Radial SVM   &$ 39.34 _{ 1.47 }$&$  4.65 _{ 0.13 }$&$  3.43 _{ 0.09 }$&$ 47.72 _{ 0.40 }$&$ 45.46 _{ 0.51 }$&$ 43.70 _{ 0.72 }$\\
Linear SVM   &$ 46.57 _{ 0.26 }$&$ 46.17 _{ 0.24 }$&$ 41.67 _{ 0.26 }$&$ 36.79 _{ 0.57 }$&$ 34.21 _{ 0.56 }$&$ 31.87 _{ 0.71 }$\\
Radial GP    &$ 48.87 _{ 0.31 }$&$ 45.47 _{ 0.37 }$&$ 36.18 _{ 0.27 }$ &$ 38.39 _{ 0.84 }$&$ 26.63 _{ 0.44 }$&$ 22.77 _{ 0.20 }$\\
PenLDA       &$ 46.04 _{ 0.26 }$&$ 44.48 _{ 0.26 }$&$ 41.71 _{ 0.23 }$  &$ 45.64 _{ 0.44 }$&$ 45.22 _{ 0.53 }$&$ 45.39 _{ 0.47 }$\\
NSC          &$ 47.47 _{ 0.33 }$&$ 45.99 _{ 0.34 }$&$ 42.31 _{ 0.30 }$       &$ 46.34 _{ 0.58 }$&$ 44.69 _{ 0.69 }$&$ 45.72 _{ 0.65 }$\\
PenLog       &$ 48.81 _{ 0.29 }$&$ 46.36 _{ 0.28 }$&$ 42.15 _{ 0.24 }$& N/A & N/A & N/A\\
SDR5-LDA      &N/A              &$ 46.27 _{ 0.24 }$&$ 42.09 _{ 0.25 }$ &N/A               &$ 37.96 _{ 0.29 }$&$ 31.04 _{ 0.27 }$\\
SDR5-$k$nn    &N/A               &$ 46.14 _{ 0.27 }$&$ 36.28 _{ 0.24 }$  &N/A           &$ 39.70 _{ 0.32 }$&$ 29.31 _{ 0.26 }$\\
OTE           &$ 46.74 _{ 0.28 }$&$ 30.62 _{ 0.33 }$&$ 11.43 _{ 0.19 }$    &$ 32.24 _{ 0.51 }$&$ 23.37 _{ 0.28 }$&$ 19.59 _{ 0.19 }$\\
ES$k$nn       &$ 48.66 _{ 0.26 }$&$ 46.59 _{ 0.26 }$&$ 45.17 _{ 0.22 }$    &$ 46.15 _{ 0.51 }$&$ 44.03 _{ 0.54 }$&$ 43.77 _{ 0.46 }$\\
 \end{tabular}}
\end{table}

Model~4 explores the effect of heavy tails and the presence of correlation between the features. Again, assumption~3 is not satisfied for any $d < p$.  The RF, OTE and RP-$k$nn methods all perform very well here.  The RP-LDA and RP-QDA classifiers are less good.  This is partly due the fact that the class-conditional distributions do not have finite second and first moments, respectively, and, as a result, the class mean and covariance matrix estimates are poor.   

\subsection{Real data examples}
\label{sec--real}
In this section, we compare the classifiers above on eight real datasets available from the UC Irvine (UCI) Machine Learning Repository.  In each example, we first subsample the data to form a training set of size $n$, then use the remaining data (or, where available, take a subsample of size 1000 from it) to form the test set.  As with the simulated examples, we set $B_1 = 500$, $B_2 = 50$, used Gaussian distributed projections, and each experiment was repeated 100 times.  Where appropriate, the tuning parameters were chosen via the methods described at the beginning of Section~\ref{sec--empirical} for each of the 100 repeats of the experiment. 

\subsubsection{Eye state detection}
The electroencephalogram eye state dataset (\url{http://archive.ics.uci.edu/ml/datasets/EEG+Eye+State}) consists of $p=14$ EEG measurements on 14980 observations.  The task is to use the EEG reading to determine the state of the eye.  There are 8256 observations for which the eye is open (class~0), and 6723 for which the eye is closed (class 1).  

\subsubsection{Ionosphere dataset}
The Ionosphere dataset (\url{http://archive.ics.uci.edu/ml/datasets/Ionosphere}) consists of $p=32$ high-frequency antenna measurements on 351 observations.  Observations are classified as good (class 0) or bad (class 1), depending on whether there is evidence for free electrons in the Ionosphere or not.  The class sizes are 225 (good) and 126 (bad).   

\begin{table}
\caption{\label{tab--Eye} Misclassification rates for the Eye State and Ionosphere datasets}
\centering
\fbox{%
\begin{tabular}{l|c c c | c c c  }
& \multicolumn{3}{c|}{Eye State} &  \multicolumn{3}{c}{Ionosphere} \\
  \multicolumn{1}{r|}{$n$}  &$50$            &$200$       &$1000$  &$50$         &$100$       &$200$   \\
  \hline
RP-LDA$_{5}$  &$ 42.06 _{ 0.38 }$&$ 38.61 _{ 0.29 }$&$ 36.30 _{ 0.21 }$&$ 13.05 _{ 0.38 }$&$ 10.75 _{ 0.25 }$&$  9.78 _{ 0.26 }$\\
RP-QDA$_{5}$  &$\mathbf{ 38.97} _{ 0.39 }$&$ 32.44 _{ 0.42 }$&$ 30.91 _{ 0.87 }$ &$\mathbf{8.14}_{0.37}$&$\mathbf{6.15}_{0.22}$&$\mathbf{5.21}_{0.20}$\\
RP-$k$nn$_{5}$&$\mathbf{39.37}_{ 0.39 }$&$\mathbf{26.91} _{ 0.27 }$&$ \mathbf{13.54} _{ 0.19 }$&$ 13.05 _{ 0.46 }$&$  7.43 _{ 0.25 }$&$  5.43 _{ 0.19 }$\\
LDA          &$ 42.38 _{ 0.40 }$&$ 39.15 _{ 0.30 }$&$ 36.91 _{ 0.23 }$ &$ 23.72 _{ 0.40 }$&$ 18.27 _{ 0.28 }$&$ 15.58 _{ 0.31 }$\\
QDA          &$ 39.91 _{ 0.35 }$&$ 29.24 _{ 0.40 }$&$ 29.76 _{ 1.07 }$     &N/A     & N/A                    &$ 14.07 _{ 0.34 }$\\
$k$nn        &$ 41.70 _{ 0.40 }$&$ 29.18 _{ 0.27 }$&$ 14.45 _{ 0.16 }$   &$ 21.81 _{ 0.73 }$&$ 18.05 _{ 0.46 }$&$ 16.40 _{ 0.35 }$\\
RF           &$ \mathbf{39.27} _{ 0.37 }$&$ 29.04 _{ 0.25 }$&$ 17.63 _{ 0.20 }$    &$ 10.52 _{ 0.30 }$&$  7.54 _{ 0.19 }$&$  6.48 _{ 0.18 }$\\
Radial SVM   &$ 46.33 _{ 0.49 }$&$ 38.71 _{ 0.46 }$&$ 31.03 _{ 0.68 }$&$ 27.67 _{ 1.15 }$&$ 12.85 _{ 0.91 }$&$  6.67 _{ 0.22 }$\\
Linear SVM   &$ 42.38 _{ 0.42 }$&$ 39.55 _{ 0.36 }$&$ 38.58 _{ 0.38 }$ &$ 19.41 _{ 0.35 }$&$ 17.05 _{ 0.27 }$&$ 15.48 _{ 0.29 }$\\
Radial GP    &$ 40.73 _{ 0.38 }$&$ 32.22 _{ 0.25 }$&$ 21.66 _{ 0.21 }$    &$ 22.29 _{ 0.72 }$&$ 17.81 _{ 0.46 }$&$ 14.52 _{ 0.31 }$\\
PenLDA       &$ 44.37 _{ 0.43 }$&$ 42.50 _{ 0.28 }$&$ 41.86 _{ 0.23 }$   &$ 21.20 _{ 0.57 }$&$ 19.83 _{ 0.56 }$&$ 19.81 _{ 0.54 }$\\
NSC          &$ 44.73 _{ 0.48 }$&$ 42.37 _{ 0.29 }$&$ 42.27 _{ 0.28 }$  &$ 22.62 _{ 0.53 }$&$ 19.11 _{ 0.42 }$&$ 17.52 _{ 0.34 }$\\
SDR5-LDA     &$ 42.82 _{ 0.40 }$&$ 39.25 _{ 0.29 }$&$ 36.92 _{ 0.23 }$  &$ 25.78 _{ 0.52 }$&$ 18.98 _{ 0.30 }$&$ 15.63 _{ 0.30 }$\\
SDR5-$k$nn   &$ 42.43 _{ 0.38 }$&$ 34.13 _{ 0.32 }$&$ 25.31 _{ 0.25 }$ &$ 30.61 _{ 0.74 }$&$ 17.53 _{ 0.45 }$&$ 10.12 _{ 0.30 }$\\
OTE          &$ 40.10 _{ 0.38 }$&$ 29.92 _{ 0.28 }$&$ 18.73 _{ 0.20 }$  &$ 14.38 _{ 0.41 }$&$  9.80 _{ 0.27 }$&$  7.33 _{ 0.23 }$\\
ES$k$nn      &$ 45.62 _{ 0.41 }$&$ 43.06 _{ 0.35 }$&$ 39.37 _{ 0.34 }$ &$ 27.81 _{ 0.58 }$&$ 23.23 _{ 0.48 }$&$ 20.05 _{ 0.51 }$\\
\end{tabular}}
\end{table}

\subsubsection{Down's syndrome diagnoses in mice}
The Mice dataset (\url{http://archive.ics.uci.edu/ml/datasets/Mice+Protein+Expression}) consists of 570 healthy mice (class 0) and 507 mice with Down's syndrome (class 1).  The task is to diagnose Down's syndrome based on $p = 77$ protein expression measurements.   

\subsubsection{Hill-Valley identification}
The Hill-Valley dataset (\url{http://archive.ics.uci.edu/ml/datasets/Hill-Valley}) consists of 1212 observations of a terrain, each one when plotted in sequence represents either a Hill  (class 0, size 600) or a Valley (class 1, size 612).  The task is to classify the terrain based on a vector of dimension $p = 100$.   

\begin{table}
\caption{\label{tab--Mice} Misclassification rates for the Mice and Hill-Valley datasets.}
\centering
\fbox{%
\begin{tabular}{l| c c c  | c c c }
& \multicolumn{3}{c|}{Mice} &  \multicolumn{3}{c}{Hill-Valley} \\
  \multicolumn{1}{r|}{$n$} &$200$        &$500$   &$1000$    & $100$       &$200$        &$500$       \\
  \hline
RP-LDA$_{5}$  &$ 25.17 _{ 0.30 }$&$23.56 _{ 0.26 }$&$23.35 _{ 0.49 }$&$\mathbf{36.84}_{ 0.84 }$&$\mathbf{36.45} _{ 0.85 }$&$\mathbf{32.57}_{ 1.06 }$\\
RP-QDA$_{5}$  &$ 18.24 _{ 0.29 }$&$16.05 _{ 0.24 }$&$15.45 _{ 0.45 }$&$ 44.43 _{ 0.34 }$&$ 43.56 _{ 0.31 }$&$ 41.10 _{ 0.33 }$\\
RP-$k$nn$_{5}$&$ 11.24 _{ 0.29 }$&$\mathbf{2.24}_{ 0.10 }$&$\mathbf{0.55}_{ 0.09 }$&$ 49.08 _{ 0.24 }$&$ 47.27 _{ 0.26 }$&$ 36.39 _{ 0.29 }$\\
LDA                   &$  \mathbf{6.46} _{ 0.14 }$&$ 3.38 _{ 0.10 }$&$ 2.17 _{ 0.17 }$&N/A                           &$ 37.29 _{ 0.48 }$&$ 34.37 _{ 0.36 }$\\
$k$nn       &$ 19.65 _{ 0.26 }$&$ 7.02 _{ 0.17 }$&$ 0.94 _{ 0.13 }$&$ 49.35 _{ 0.24 }$&$ 48.82 _{ 0.21 }$&$ 47.49 _{ 0.24 }$\\
RF           &$  7.94 _{ 0.22 }$&$ 2.41 _{ 0.11 }$&$\mathbf{0.51}_{ 0.08 }$&$ 48.32 _{ 0.23 }$&$ 47.23 _{ 0.21 }$&$ 44.11 _{ 0.25 }$\\
Radial SVM  &$ 11.25 _{ 0.29 }$&$ 3.89 _{ 0.13 }$&$ 1.69 _{ 0.16 }$&$ 50.24 _{ 0.19 }$&$ 50.24 _{ 0.19 }$&$ 50.42 _{ 0.21 }$\\
Linear SVM   &$  \mathbf{6.36} _{ 0.14 }$&$ 3.64 _{ 0.10 }$&$ 2.51 _{ 0.17 }$&$ 48.56 _{ 0.22 }$&$ 47.03 _{ 0.23 }$&$ 44.84 _{ 0.28 }$\\
Radial GP    &$ 21.22 _{ 0.30 }$&$13.78 _{ 0.24 }$&$ 8.66 _{ 0.34 }$&$ 48.33 _{ 0.22 }$&$ 47.24 _{ 0.21 }$&$ 45.11 _{ 0.22 }$\\
PenLDA       &$ 26.10 _{ 0.36 }$&$24.07 _{ 0.26 }$&$23.91 _{ 0.46 }$&$ 49.59 _{ 0.22 }$&$ 49.73 _{ 0.21 }$&$ 49.55 _{ 0.22 }$\\
NSC         &$ 30.30 _{ 0.36 }$&$28.06 _{ 0.29 }$&$28.47 _{ 0.51 }$&$ 49.87 _{ 0.21 }$&$ 49.91 _{ 0.20 }$&$ 49.92 _{ 0.22 }$\\
OTE       &$ 11.83 _{ 0.32 }$&$ 6.26 _{ 0.18 }$&$ 3.26 _{ 0.23 }$&$ 48.33 _{ 0.23 }$&$ 47.18 _{ 0.22 }$&$ 44.20 _{ 0.24 }$\\
ES$k$nn &$ 39.03 _{ 0.59 }$&$ 34.33 _{ 0.66 }$&$ 31.65 _{ 0.78 }$&$ 49.31 _{ 0.23 }$&$ 48.90 _{ 0.23 }$&$ 48.03 _{ 0.25 }$\\
\end{tabular}}
\end{table}

\subsubsection{Musk identification}
The Musk dataset (\url{http://archive.ics.uci.edu/ml/datasets/Musk+\%28Version+2\%29}) consists of 1016 musk (class 0) and 5581 non-musk (class 1) molecules.  The task is to classify a molecule based on $p = 166$ shape measurements.   

\subsubsection{Cardiac Arrhythmia diagnoses}
The cardiac arrhythmia dataset (\url{https://archive.ics.uci.edu/ml/datasets/Arrhythmia}) has one normal class of size 245, and 13 abnormal classes, which we combined to form a second class of size 206.  We removed the nominal features and those with missing values, leaving $p = 194$ electrocardiogram (ECG) measurements.   In this example, the PenLDA classifier is N/A due to the fact that some features have within-class standard deviation equal to zero.

\begin{table}
\caption{\label{tab--Musk} Misclassification rates for the Musk and Cardiac Arrhythmia datasets.}
\centering
\fbox{%
\begin{tabular}{l|c c c | c c c }
  & \multicolumn{3}{c|}{Musk} &  \multicolumn{3}{c}{Arrhythmia} \\
   \multicolumn{1}{r|}{$n$}  &$100$       &$200$        &$500$ &  $50$         &$100$       &$200$        \\
  \hline
RP-LDA$_{5}$   &$ 14.63 _{ 0.31 }$&$ 12.18 _{ 0.23 }$&$ 10.15 _{ 0.15 }$&$ 33.24 _{ 0.42 }$&$ 30.19 _{ 0.35 }$&$ 27.49 _{ 0.30 }$\\
RP-QDA$_{5}$   &$ 12.08 _{ 0.27 }$&$ 9.92 _{ 0.18 }$&$ 8.64 _{ 0.13 }$ &$ \mathbf{30.47} _{ 0.33 }$&$ 28.28 _{ 0.26 }$&$ 26.31 _{ 0.28 }$\\
RP-$k$nn$_{5}$ &$\mathbf{11.81}_{ 0.27 }$&$\mathbf{9.65}_{ 0.21 }$&$ 8.04 _{ 0.15 }$&$ 33.49 _{ 0.40 }$&$ 30.18 _{ 0.33 }$&$ 27.09 _{ 0.31 }$\\
LDA            &N/A            &$ 24.88 _{ 0.42 }$&$ 9.09 _{ 0.15 }$&N/A & N/A & N/A\\
$k$nn         &$ 14.68 _{ 0.28 }$&$ 11.75 _{ 0.22 }$&$  8.20 _{ 0.15 }$ &$ 40.64 _{ 0.33 }$&$ 38.94 _{ 0.33 }$&$ 35.76 _{ 0.36 }$\\
RF            &$ 13.20 _{ 0.20 }$&$ 10.69 _{ 0.18 }$&$  7.55 _{ 0.13 }$  &$ 31.65 _{ 0.39 }$&$ \mathbf{26.72} _{ 0.29 }$&$\mathbf{22.40} _{ 0.31 }$\\
Radial SVM    &$ 15.25 _{ 0.15 }$&$ 15.21 _{ 0.15 }$&$ 15.00 _{ 0.17 }$   &$ 48.39 _{ 0.49 }$&$ 47.24 _{ 0.46 }$&$ 46.85 _{ 0.43 }$\\
Linear SVM    &$ 13.91 _{ 0.25 }$&$ 10.39 _{ 0.18 }$&$ \mathbf{7.41}_{ 0.12 }$ &$ 36.16 _{ 0.47 }$&$ 35.61 _{ 0.39 }$&$ 35.20 _{ 0.35 }$\\
Radial GP     &$ 14.91 _{ 0.16 }$&$ 14.07 _{ 0.20 }$&$ 11.14 _{ 0.19 }$  &$ 37.28 _{ 0.42 }$&$ 33.80 _{ 0.40 }$&$ 29.31 _{ 0.35 }$\\
PenLDA        &$ 27.74 _{ 0.58 }$&$ 27.14 _{ 0.54 }$&$ 26.98 _{ 0.31 }$ &N/A & N/A & N/A \\
NSC           &$ 15.32 _{ 0.18 }$&$ 15.22 _{ 0.15 }$&$ 15.20 _{ 0.16 }$  &$ 34.98 _{ 0.46 }$&$ 33.00 _{ 0.40 }$&$ 31.08 _{ 0.41 }$\\
PenLog           &$ 14.48 _{ 0.28 }$&$ 11.85 _{ 0.21 }$&N/A &  $34.92 _{0.42}$&$30.48_{0.34}$ &$26.12_{ 0.27 }$\\
SDR5-LDA      &N/A                  &$ 25.12 _{ 0.43 }$&$ 9.08 _{ 0.15 }$&N/A & N/A & N/A \\
SDR5-$k$nn    &N/A             &$ 24.09 _{ 0.62 }$&$ 9.81 _{ 0.16 }$&N/A & N/A & N/A \\
OTE           &$ 13.90 _{ 0.23 }$&$ 11.04 _{ 0.18 }$&$ 8.05 _{ 0.14 }$&$ 33.90 _{ 0.47 }$&$ 27.83 _{ 0.29 }$&$ 23.75 _{ 0.32 }$\\
ES$k$nn       &$ 19.55 _{ 0.42 }$&$ 18.09 _{ 0.30 }$&$ 16.07 _{ 0.24 }$    &$ 45.86 _{ 0.43 }$&$ 45.62 _{ 0.48 }$&$ 43.41 _{ 0.43 }$\\
\end{tabular}}
\end{table}

\subsubsection{Human Activity Recognition}
\label{sec--HAR}
This dataset (\url{http://archive.ics.uci.edu/ml/datasets/Human+Activity+Recognition+Using+Smartphones}) consists of $p=561$ accelerometer measurements, recorded from a smartphone whilst a subject is performing an activity.   We subsampled the data to include only the walking and laying activities.  In the resulting dataset, there are 1226 `walking' observations (class 0),  and 1407 `laying' observations (class 1).  

\subsubsection{Handwritten digits}
The Gisette dataset (\url{https://archive.ics.uci.edu/ml/datasets/Gisette}) consists of 6000 observations of handwritten digits, namely 3000 ``4''s and 3000 ``9''s.  Each observation represents the original $28 \times 28$ pixel image, with added noise variables resulting in a 5000-dimensional vector.   We first subsampled 1500 of the 6000 observations, giving 760 ``4''s and 740 ``9''s -- this dataset was then kept fixed through the subsequent 100 repeats of the experiment.   The observations are sparse with a large number of 0 entries.  

\begin{table}
\caption{\label{tab--HAR}Misclassification rates for the Activity recognition and Gisette datasets.}
\centering
\fbox{%
\begin{tabular}{l|c c c | c c c }
  & \multicolumn{3}{c|}{Activity Recognition} &  \multicolumn{3}{c}{Gisette} \\
  \multicolumn{1}{r|}{$n$}  &$50$       &$200$  &$1000$   &$50$       &$200$  &$1000$      \\
  \hline
RP-LDA$_{5}$  &$ 0.18 _{ 0.02 }$&$ 0.10 _{ 0.01 }$&$ 0.01 _{ 0.00 }$ &$ 15.75 _{ 0.41 }$&$ 10.58 _{ 0.17 }$&$ 9.39 _{ 0.15 }$\\
RP-QDA$_{5}$  &$ 0.15 _{ 0.02 }$&$ 0.09 _{ 0.01 }$&$ \mathbf{0.00} _{ 0.00 }$&$ 15.53 _{ 0.40 }$&$ 10.53 _{ 0.19 }$&$ 9.37 _{ 0.16 }$\\
RP-$k$nn$_{5}$&$ 0.21 _{ 0.02 }$&$ 0.11 _{ 0.01 }$&$ 0.01 _{ 0.00 }$&$ 15.95 _{ 0.46 }$&$ 11.09 _{ 0.17 }$&$ 9.57 _{ 0.16 }$\\
$k$nn                 &$ 0.26 _{ 0.02 }$&$ 0.13 _{ 0.02 }$&$ 0.02 _{ 0.01 }$&$ 18.41 _{ 0.42 }$&$ 10.44 _{ 0.18 }$&$  5.64 _{ 0.13 }$\\
RF                      &$ 0.25 _{ 0.02 }$&$ 0.17 _{ 0.02 }$&$ 0.08 _{ 0.01 }$  &$ 14.33 _{ 0.47 }$&$  9.37 _{ 0.15 }$&$  5.79 _{ 0.12 }$\\
Radial SVM   &$ 1.58 _{ 0.11 }$&$ 0.89 _{ 0.06 }$&$ 0.18 _{ 0.02 }$   &$ 50.03 _{ 0.19 }$&$ 50.41 _{ 0.19 }$&$ 50.79 _{ 0.25 }$\\
Linear SVM   &$ 0.19 _{ 0.02 }$&$ 0.12 _{ 0.01 }$&$ 0.05 _{ 0.01 }$&$ \mathbf{11.92} _{ 0.27 }$&$  \mathbf{6.82} _{ 0.11 }$&$  \mathbf{4.45} _{ 0.11 }$\\
Radial GP    &$ 0.25 _{ 0.02 }$&$ 0.20 _{ 0.02 }$&$ 0.13 _{ 0.01 }$ &$ 27.09 _{ 1.32 }$&$ 10.74 _{ 0.21 }$&$  6.70 _{ 0.13 }$\\
PenLDA       &$ \mathbf{0.11} _{ 0.02 }$&$ \mathbf{0.04} _{ 0.01 }$&$ \mathbf{0.00} _{ 0.00 }$ &N/A & N/A &N/A\\
NSC          &$ 0.29 _{ 0.02 }$&$ 0.24 _{ 0.03 }$&$ 0.06 _{ 0.01 }$  &$ 15.72 _{ 0.29 }$&$ 13.63 _{ 0.22 }$&$ 12.83 _{ 0.21 }$ \\
OTE          &$ 0.61 _{ 0.07 }$&$ 0.38 _{ 0.05 }$&$ 0.09 _{ 0.02 }$&$ 14.18 _{ 0.25 }$&$  9.69 _{ 0.17 }$&$  6.24 _{ 0.13 }$\\
ES$k$nn      &$ 1.74 _{ 0.18 }$&$ 0.88 _{ 0.09 }$&$ 0.41 _{ 0.05 }$   &$ 45.76 _{ 0.76 }$&$ 44.81 _{ 0.74 }$&$ 44.45 _{ 0.73 }$\\
\end{tabular}}
\end{table}

\subsection{Conclusion of numerical study} 
\label{Sec:Conc}
The numerical study above reveals the extremely encouraging finite-sample performance achieved by the random projection ensemble classifier.  An RP ensemble method attains the lowest misclassification error in 23 of the 36 simulated and real data settings investigated, and in 8 of the 13 remaining cases an RP ensemble method is in the top three of the classifiers considered.  The flexibility offered by the random projection ensemble classifier -- in particular, the fact that any base classifier may be used -- allows the practitioner to adapt the method to work well in a wide variety of problems.  

Another key observation is that our assumption~3 is not necessary for the RP method to work well: in Model~2, we achieve good results using $d=2$, while assumption~3 holds only with a 3 (or higher)-dimensional projection.  Moreover, even in situations where assumption~3 does not hold for any $d < p$, the RP method is still competitive; see in particular the results for Model~3.

One example where the RP ensemble framework is not effective is for the Gisette dataset.  Here the data are very sparse; for each observation a large proportion of the features are exactly zero.  Of course, applying a Gaussian or Haar random projection to an observation will remove the sparse structure.  In this case, the practitioner may benefit by using an alternative distribution for the projections, such as axis-aligned projections (cf.\ the discussion in Section~\ref{sec--discussion}).

\section{Discussion and extensions}
\label{sec--discussion}
We have introduced a general framework for high-dimensional classification via the combination of the results of applying a base classifier on carefully selected low-dimensional random projections of the data.  One of its attractive features is its generality: the approach can be used in conjunction with any base classifier.  Moreover, although we explored in detail one method for combining the random projections (partly because it facilitates rigorous statistical analysis), there are many other options available here.  For instance, instead of only retaining the projection within each block yielding the smallest estimate of test error, one might give weights to the different projections, where the weights decrease as the estimate of test error increases.  

Many practical classification problems involve $K > 2$ classes.  The main issue in extending our methodology to such settings is the definition of ${C}_n^{\mathrm{RP}}$ analogous to~\eqref{eq--RP}.  To outline one approach, let 
\[
\nu_{n,r}(x) := \frac{1}{B_1}\sum_{b_1=1}^{B_1} \mathds{1}_{\{{C}_n^{\mathbf{A}_{b_1}}(x) = r\}}
\]
for $r=0,1,\ldots,K-1$.  Given $\alpha_0,\ldots,\alpha_{K-1} > 0$ with $\sum_{r=0}^{K-1} \alpha_r = 1$, we can then define
\[
{C}_n^{\mathrm{RP}}(x) := \sargmax_{r=0,\ldots,K-1} \{\alpha_r \nu_{n,r}(x)\},
\]
where $\sargmax$ denotes the smallest element of the $\argmax$ in the case of a tie.  The choice of $\alpha_0,\ldots,\alpha_{K-1}$ is analogous to the choice of $\alpha$ in the case $K=2$.  It is therefore natural to seek to minimise the test error of the corresponding infinite-simulation random projection classifier as before.

In other situations, it may be advantageous to consider alternative types of projection, perhaps because of additional structure in the problem.  One particularly interesting issue concerns ultrahigh-dimensional settings, say $p$ in the thousands.  Here, it may be too time-consuming to generate enough random projections to explore adequately the space $\mathcal{A}_{d \times p}$.  As a mathematical quantification of this, the cardinality of an $\epsilon$-net in the Euclidean norm of the surface of the Euclidean ball in $\mathbb{R}^p$ increases exponentially in $p$ \citep[e.g.][]{Vershynin2012}.  In such challenging problems, one might restrict the projections $\mathbf{A}$ to be axis-aligned, so that each row of $\mathbf{A}$ consists of a single non-zero component, equal to 1, and $p-1$ zero components.  There are then only $\binom{p}{d} \leq p^d/d!$ choices for the projections, and if $d$ is small, it may be feasible even to carry out an exhaustive search.  Of course, this approach loses one of the attractive features of our original proposal, namely the fact that it is equivariant to orthogonal transformations.  Nevertheless, corresponding theory can be obtained provided that the projection $A^*$ in assumption~3 is axis-aligned.  This is a much stronger requirement, but it seems that imposing greater structure is inevitable to obtain good classification in such settings. 

Our main focus in this work has been on the classification performance of the random projection ensemble classifier, and not on the interpretability of the class assignments.  However, the selected projections provide weights that give an indication of the relative importance of the different variables in the model.  Another interesting direction, therefore, would be to understand the properties of the variable ranking induced by the random projection ensemble classifier. 

In conclusion, we believe that random projections offer many exciting possibilities for high-dimensional data analysis.  In a similar spirit to subsampling and bootstrap sampling, we can think of each random projection as a perturbation of our original data, and effects that are observed over many different perturbations are often the `stable' effects that are sought by statisticians; cf.~\citet{MeinshausenBuhlmann2010,ShahSamworth2013} in the context of variable selection.  Two of the key features that make them so attractive for classification problems are the ability to identify `good' random projections from the data, and the fact that we can aggregate results from selected projections.  We anticipate that these two properties will be important in identifying future application areas for related methodologies. 
 
\section{Appendix}
\label{sec--tech}

\begin{prooftitle}{of Theorem \ref{thm--Bvar}.}
Recall that the training data $\mathcal{T}_n = \{(x_1,y_1),\ldots, (x_n,y_n)\}$ are fixed and the projections $\mathbf{A}_{1},\mathbf{A}_2, \dots $, are independent and identically distributed in $\mathcal{A}$, independent of the pair $(X,Y)$. The test error of the random projection ensemble classifier has the following representation:
\begin{align*}
\mathbf{E}\{R({C}_n^{\mathrm{RP}})\} & =\mathbf{E}\Bigl\{\pi_0 \int_{\mathbb{R}^p} \mathds{1}_{\{C_n^{\mathrm{RP}}(x) = 1\}} \, dP_0(x) + \pi_1 \int_{\mathbb{R}^p} \mathds{1}_{\{C_n^{\mathrm{RP}}(x) = 0\}} \, dP_1(x)\Bigr\}
\\ & =\mathbf{E}\Bigl\{\pi_0 \int_{\mathbb{R}^p} \mathds{1}_{\{\nu_n(x) \geq \alpha\}} \, dP_0(x) + \pi_1 \int_{\mathbb{R}^p} \mathds{1}_{\{\nu_n(x) < \alpha\}} \, dP_1(x)\Bigr\}
\\ &  = \pi_0 \int_{\mathbb{R}^p} \mathbf{P}\{\nu_n(x) \geq \alpha\} \, dP_0(x) + \pi_1 \int_{\mathbb{R}^p} \mathbf{P}\{\nu_n(x) < \alpha\} \, dP_1(x),
\end{align*}
where $\nu_n(x)$ is defined in~\eqref{eq--nu}, and where the final equality follows by Fubini's theorem.  

Let $U_{b_1} := \mathds{1}_{\{{C}_n^{\mathbf{A}_{b_1}}(X) = 1\}}$, for $b_1 = 1, \dots B_1$. Then, conditional on $\mu_n(X) = \theta \in [0,1]$, the random variables $U_1, \dots, U_{B_1}$ are independent, each having a Bernoulli($\theta$) distribution.  Recall that $G_{n,0}$ and $G_{n,1}$ are the distribution functions of $\mu_n(X) | \{Y=0\}$ and  $\mu_n(X) | \{Y=1\}$, respectively.  We can therefore write
\begin{align*}
\int_{\mathbb{R}^p} \mathbf{P}\{\nu_n(x) < \alpha\} \, dP_1(x) &= \int_{[0,1]}  \mathbb{P}\Bigl\{\frac 1 {B_1} \sum_{b_1 =1}^{B_1} U_{b_1} < \alpha \Big| \hat{\mu}_n(X) = \theta \Bigr\} \, dG_{n,1}(\theta)
\\ & =  \int_{[0,1]}  \mathbb{P} (T < B_1\alpha) \, dG_{n,1}(\theta),
\end{align*}
where here and throughout the proof, $T$ denotes a $\mathrm{Bin}(B_1,\theta)$ random variable.  Similarly,   
\[
\int_{\mathbb{R}^p} \mathbf{P}\{\nu_n(x) \geq \alpha\} \, dP_0(x) =  1 -  \int_{[0,1]}  \mathbb{P}(T < B_1\alpha) \, dG_{n,0}(\theta).
\] 
It follows that
\[
\mathbf{E}\{R({C}_n^{\mathrm{RP}})\} = \pi_0 + \int_{[0,1]}   \mathbb{P}(T < B_1\alpha) \, dG^{\circ}_{n}(\theta),
\]
where $G_n^\circ := \pi_1G_{n,1} - \pi_0 G_{n,0}$.  Writing $g_n^\circ := \pi_1 g_{n,1}-\pi_0g_{n,0}$, we now show that 
\begin{equation}
\label{eq--rtp}
\int_{[0,1]} \bigl\{ \mathbb{P}(T < B_1\alpha) - \mathds{1}_{\{\theta < \alpha\}}\bigr\}\, d G_n^\circ(\theta) = \frac{1 - \alpha - \llbracket B_1\alpha \rrbracket}{B_1}g_n^\circ(\alpha) +  \frac{\alpha(1-\alpha)}{2B_1}\dot{g}_{n}^{\circ}(\alpha) + o\Bigl(\frac{1}{B_1}\Bigr)
\end{equation}
as $B_1 \to \infty$.  Our proof involves a one-term Edgeworth expansion to the binomial distribution function in~\eqref{eq--rtp}, where the error term is controlled uniformly in the parameter.  The expansion relies on the following version of Esseen's smoothing lemma.
\begin{theorem}\citep[Chapter 2, Theorem~2b]{Esseen:45}
Let $c_1$, $C_1$, $S > 0$, let $F:\mathbb{R} \to [0,\infty)$ be a non-decreasing function and let $G: \mathbb{R} \to \mathbb{R}$ be a function of bounded variation. Let $F^*(s) :=  \int_{-\infty}^{\infty} \exp(i s t) \, dF(t)$ and $G^*(s) := \int_{-\infty}^{\infty} \exp(is t) \, dG(t)$ be the Fourier--Stieltjes transforms of $F$ and $G$, respectively. Suppose that
\begin{itemize}
\item $\lim_{t \to -\infty} F(t) = \lim_{t \to -\infty} G(t) = 0$ and $\lim_{t \to \infty} F(t) = \lim_{t \to \infty} G(t)$;
\item $\int_{-\infty}^{\infty}|F(t)-G(t)|\, dt <\infty$;
\item The set of discontinuities of $F$ and $G$ is contained in $\{t_i:i \in \mathbb{Z}\}$, where $(t_i)$ is a strictly increasing sequence with $\inf_i\{t_{i+1}-t_i\} \geq c_1$; moreover $F$ is constant on the intervals $[t_i, t_{i+1})$ for all $i \in \mathbb{Z}$;
\item  $|\dot{G}(t)|\leq C_1$ for all $t \notin \{t_i:i \in \mathbb{Z}\}$.
\end{itemize}
Then there exist constants $c_2, C_2 > 0$ such that
\[
 \sup_{t\in \mathbb{R}} |F(t) - G(t) |\leq  \frac{1}{\pi} \int_{-S}^{S} \biggl| \frac{F^*(s) - G^*(s)}{s}\biggr| \, ds  +  \frac {C_1 C_2} S,
\]
provided that $S c_1 \geq c_2$.
\label{thm--esseen}
\end{theorem}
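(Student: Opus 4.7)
The plan is to prove this by the classical Esseen-type smoothing argument, adapted to handle the piecewise-constant jump structure of $F$. The strategy is to convolve $H := F - G$ with a Fej\'er-type kernel whose Fourier transform is compactly supported in $[-S, S]$, and then match a pointwise lower bound on this convolution (exploiting that $H$ is Lipschitz within any continuity interval of $F$) against a Fourier-analytic upper bound expressed in terms of $F^* - G^*$ restricted to $[-S, S]$.

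First I would reduce to the case where $\Delta := \sup_{t \in \mathbb{R}} |F(t) - G(t)|$ is attained: the assumed limits at $\pm\infty$ together with $\int |F - G| < \infty$ and right-continuity ensure $\Delta$ is attained at some $t^* \in [t_i, t_{i+1})$, and by symmetry one may assume $H(t^*) = \Delta > 0$.  Because $F$ is constant on $[t_i, t_{i+1})$ and $|\dot G| \leq C_1$ off $\{t_j\}$, one obtains the local Lipschitz estimate $|H(t) - H(t^*)| \leq C_1 |t - t^*|$ for $t \in [t_i, t_{i+1})$, while $H$ may jump freely across the neighbouring $t_i$ or $t_{i+1}$.

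Next I would introduce the Fej\'er kernel
\[
K_S(u) := \frac{1 - \cos(S u)}{\pi S u^2},
\]
which is nonnegative with $\int K_S = 1$ and with Fourier transform $K_S^*(s) = (1 - |s|/S)_+$ supported in $[-S, S]$.  For a shift $\tau$ of order $\Delta / C_1$, set $\phi := \int_{-\infty}^{\infty} H(t^* + \tau + u) K_S(u) \, du$.  The hypothesis $S c_1 \geq c_2$ is precisely what permits $c_2$ to be chosen so that the effective kernel width $\sim 1/S$ together with $\tau$ fits inside the constancy interval $[t_i, t_{i+1})$; on that bulk of kernel mass one has $H(t^* + \tau + u) \geq \Delta - C_1(\tau + |u|)$, while the tail mass $\int_{|u| > 1/S} K_S(u) \, du \lesssim 1/S$ is controlled, producing a lower bound of the form $\phi \geq c' \Delta - C_1 C_2' / S$ for an absolute constant $c' > 0$.

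For the matching upper bound, Parseval's identity (justified by Fubini and $\int |F - G| < \infty$) combined with integration by parts, using $H(\pm\infty) = 0$ and the cancellation of boundary contributions at the shared jump set $\{t_i\}$, yields
\[
\phi = \frac{1}{2\pi} \int_{-S}^{S} \{F^*(s) - G^*(s)\} \, \frac{e^{-is(t^* + \tau)}}{is} \, K_S^*(s) \, ds,
\]
so that $|\phi| \leq \pi^{-1} \int_{-S}^{S} |(F^*(s) - G^*(s))/s| \, ds$ via $|K_S^*| \leq 1$.  Combining this with the lower bound and rearranging, after absorbing numerical constants into $c_2$ and $C_2$, gives the stated inequality.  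The main obstacle is that, unlike in the textbook Esseen inequality where $G$ is absolutely continuous and the Lipschitz control on $H$ holds uniformly, here both $F$ and $G$ may jump at the $t_i$, so the shift $\tau$ and the cutoff $S c_1 \geq c_2$ must be coordinated to keep the kernel-weighted integral within a single constancy interval of $F$; pinning down the universal constants $c_2$ and $C_2$ is the delicate part of this coordination, but is otherwise a routine calculation once the Fej\'er kernel tail estimates are in hand.
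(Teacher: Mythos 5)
First, be aware that the paper does not prove this statement: it is quoted directly from Esseen (1945, Chapter 2, Theorem 2b) and used as a black box in the proof of Proposition~\ref{cor--edgeworth}, so there is no internal argument to compare yours against. Your sketch follows the classical smoothing route --- convolve $H:=F-G$ with a Fej\'er-type kernel whose transform is supported in $[-S,S]$, lower-bound the convolution near an extremum of $H$ using local Lipschitz structure, and upper-bound it via Parseval --- which is essentially Esseen's own argument, and you correctly identify the one genuinely lattice-specific ingredient: the hypothesis $Sc_1\ge c_2$ is there to keep the kernel window inside a single constancy interval of $F$, on which $H$ is $C_1$-Lipschitz.

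However, two of your quantitative choices would break the proof as written. (i) The shift $\tau$ cannot be of order $\Delta/C_1$. Nothing prevents $\Delta/C_1>c_1$, in which case the window $[t^*,t^*+\tau+O(1/S)]$ necessarily leaves $[t_i,t_{i+1})$, and past $t_{i+1}$ the bound $H\ge\Delta-C_1(\cdot)$ fails outright because $G$ may jump there and send $H$ down to $-\Delta$; no choice of $c_2$ repairs this. The correct choice is $\tau\asymp r\asymp K/S$ for a large absolute constant $K$, so that under $Sc_1\ge c_2$ the whole window has length at most $c_1/2$, the Fej\'er tail mass outside $[-r,r]$ is at most $4/(\pi S r)=O(1/K)$, and the Lipschitz loss $C_1(\tau+r)=O(KC_1/S)$ is absorbed into the second term of the conclusion rather than degrading the main term. (You also need to allow shifting left instead of right when $t^*$ sits within $O(1/S)$ of $t_{i+1}$ --- one of the two sides always has room at least $c_1/2$ --- and to treat by reflection the case where the extremum is $-\Delta$, possibly approached only as a left limit.) (ii) The prefactor $1/\pi$ of the integral cannot be ``absorbed into $c_2$ and $C_2$'', since those constants enter only the threshold condition and the second term. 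With your stated bounds $\phi\ge c'\Delta-C_1C_2'/S$ and $|\phi|\le\pi^{-1}\int_{-S}^{S}\bigl|(F^*(s)-G^*(s))/s\bigr|\,ds$ you would conclude only $\Delta\le (c'\pi)^{-1}\int_{-S}^{S}\bigl|(F^*(s)-G^*(s))/s\bigr|\,ds+\cdots$, with $(c'\pi)^{-1}>\pi^{-1}$ unless $c'=1$. You must retain the sharp Parseval constant $(2\pi)^{-1}$ and verify $c'\ge 1/2$; with the window choice above one gets $c'=1-O(1/K)$, so this closes, but it is exactly the step you cannot wave through by ``absorbing numerical constants''.
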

Let $\sigma^2 := \theta(1-\theta)$, and let $\Phi$ and $\phi$ denote the standard normal distribution and density functions, respectively.  Moreover, for $t \in \mathbb{R}$, let 
$$p(t) = p(t, \theta) := \frac{(1-t^2)(1-2\theta)}{6\sigma},$$
and
$$q(t) = q(t,B_1,\theta) :=  \frac{1/2 -  \llbracket B_1\theta + B_1^{1/2}\sigma t \rrbracket}{\sigma}.$$ 
In Proposition \ref{cor--edgeworth} below we apply Theorem \ref{thm--esseen} to the following functions: 
\begin{equation}
\label{eq--FB1}
F_{B_1}(t) = F_{B_1}(t,\theta) :=  \mathbb{P}\biggl(\frac{T - B_1\theta}{B_1^{1/2}\sigma} < t \biggr), 
\end{equation}
and 
\begin{equation}
\label{eq--GB1}
G_{B_1}(t) = G_{B_1}(t,\theta) := \Phi(t) + \phi(t)\frac{p(t,\theta) + q(t,B_1, \theta)}{B_1^{1/2}}.
\end{equation}

\begin{proposition}
\label{cor--edgeworth}
Let $F_{B_1}$ and $G_{B_1}$ be as in (\ref{eq--FB1}) and (\ref{eq--GB1}). There exists a constant $C>0$ such that, for all $B_1 \in \mathbb{N}$,
\[
\sup_{\theta \in (0,1)} \sup_{t \in \mathbb{R}} \sigma^3 |F_{B_1}(t,\theta) - G_{B_1}(t,\theta)| \leq \frac{C}{B_1}.
\]
\end{proposition}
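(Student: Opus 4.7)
The plan is to apply Theorem~\ref{thm--esseen} (Esseen's smoothing lemma) to $F = F_{B_1}(\cdot,\theta)$ and $G = G_{B_1}(\cdot,\theta)$ for each fixed $\theta \in (0,1)$, and then to optimise the cutoff $S$. First I would verify the four hypotheses. The limit conditions and the integrability of $|F_{B_1}-G_{B_1}|$ on $\mathbb{R}$ follow because the skewness piece $\phi(t)p(t)/B_1^{1/2}$ decays super-polynomially and the sawtooth piece $\phi(t)q(t)/B_1^{1/2}$ is bounded and decays with $\phi$. The distribution function $F_{B_1}$ has jumps at $t_k := (k-B_1\theta)/(B_1^{1/2}\sigma)$ for $k=0,\ldots,B_1$, so the lattice spacing is $c_1 = 1/(B_1^{1/2}\sigma)$. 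Off this lattice, the dominant contribution to $\dot{G}_{B_1}$ comes from $\phi(t)\dot{q}(t)/B_1^{1/2}$; since $\dot{q}(t) = -B_1^{1/2}/\sigma$ where differentiable, one obtains $|\dot{G}_{B_1}(t)| \leq C/\sigma$, giving $C_1 = C/\sigma$.

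Next I would compute the two Fourier--Stieltjes transforms. The binomial gives
\[
F_{B_1}^*(s) = e^{-isB_1^{1/2}\theta/\sigma}\bigl\{(1-\theta) + \theta e^{is/(B_1^{1/2}\sigma)}\bigr\}^{B_1},
\]
while for $G_{B_1}^*$ I would use the Fourier series $\tfrac{1}{2}-\llbracket x\rrbracket = \sum_{k=1}^{\infty}\sin(2\pi kx)/(\pi k)$ to convert the contribution of $\phi(t)q(t)/B_1^{1/2}$ into a sum of Gaussians shifted to the frequencies $2\pi k B_1^{1/2}\sigma$. The skewness piece $\phi(t)p(t)/B_1^{1/2}$ contributes a polynomial multiple of $e^{-s^2/2}/B_1^{1/2}$ via standard Hermite identities.

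I would then estimate $|F_{B_1}^*(s) - G_{B_1}^*(s)|/|s|$ and integrate over $|s| \leq S := \pi B_1^{1/2}\sigma$, which satisfies $Sc_1 = \pi$ as required by Esseen's lemma. I would split into three frequency regimes. On the small range $|s| \leq \delta B_1^{1/2}\sigma$, Taylor-expand
\[
\log F_{B_1}^*(s) = -\frac{s^2}{2} + \frac{(is)^3(1-2\theta)}{6B_1^{1/2}\sigma} + O\Bigl(\frac{s^4}{B_1\sigma^2}\Bigr);
\]
the leading correction matches the Fourier transform of $\phi(t)p(t)/B_1^{1/2}$, and the remainder integrates to $O\bigl(1/(B_1\sigma^3)\bigr)$. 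On the intermediate range $\delta B_1^{1/2}\sigma \leq |s| \leq S$ away from $\pm 2\pi B_1^{1/2}\sigma$, both the binomial characteristic function and the smooth part of $G_{B_1}^*$ are Gaussian-bounded, giving negligible contribution. Near the lattice frequency $s = 2\pi B_1^{1/2}\sigma$ (the only nonzero Poisson peak within $[-S,S]$), the peak of $F_{B_1}^*$ is cancelled to leading order by the matching Fourier coefficient of the sawtooth series, and a Taylor expansion around the peak again gives an $O(1/(B_1\sigma^3))$ contribution. Combining with the Esseen residual $C_1 C_2/S$, which is of order $1/(B_1\sigma^3)$, yields $|F_{B_1}(t) - G_{B_1}(t)| \leq C/(B_1\sigma^3)$, and multiplying by $\sigma^3$ produces the claimed bound.

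The main obstacle is uniformity in $\theta$: when $\theta$ is close to $0$ or $1$, $\sigma$ is small and $c_1 = 1/(B_1^{1/2}\sigma)$ is large, so every estimate above must track its $\sigma$-dependence carefully, and the $\sigma^3$ prefactor on the left-hand side is precisely calibrated to absorb the natural $\sigma^{-3}$ blowup of the Edgeworth remainder. In the extreme regime $B_1\sigma^2 \leq 1$ (when the effective support of the binomial has very few lattice points), the analysis above breaks down, but here the trivial bound $\sigma^3|F_{B_1}-G_{B_1}| \leq 2\sigma^3 \leq 2/B_1$ suffices, so the delicate argument is only required when $B_1\sigma^2 \geq 1$. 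A secondary technical point is the cancellation of the sawtooth peaks in the third regime, which requires comparing the exact binomial characteristic function at $s \approx 2\pi B_1^{1/2}\sigma$ with the corresponding shifted Gaussian to the required order in $1/B_1$.
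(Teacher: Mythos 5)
Your overall architecture coincides with the paper's: Esseen's lemma with lattice spacing $c_1 = 1/(B_1^{1/2}\sigma)$, the Poisson/Fourier-series representation of the transform of the sawtooth term as a sum of Gaussians centred at the frequencies $2\pi l B_1^{1/2}\sigma$, a three-regime split of the frequency integral, and cancellation between the binomial characteristic function and the shifted Gaussians at the lattice peaks. However, there is a genuine quantitative error in your verification of the hypotheses that, as written, destroys the claimed rate. You assert that the dominant contribution to $\dot{G}_{B_1}$ comes from $\phi(t)\dot{q}(t)/B_1^{1/2}$ and conclude $C_1 = C/\sigma$. In fact $\dot{q}(t) = -B_1^{1/2}$ off the lattice (not $-B_1^{1/2}/\sigma$), so that $\phi(t)\dot{q}(t)/B_1^{1/2} = -\phi(t)$, which \emph{exactly cancels} the term $\dot{\Phi}(t) = \phi(t)$; the surviving terms are $-t\phi(t)\{p(t)+q(t)\}/B_1^{1/2} + \phi(t)\dot{p}(t)/B_1^{1/2}$, each of order $1/(B_1^{1/2}\sigma)$, giving $C_1 \leq 1/(3B_1^{1/2}\sigma)$ as in the paper. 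This cancellation is the whole reason the continuity-corrected expansion achieves accuracy $O(1/B_1)$: with your value $C_1 = C/\sigma$ and $S \asymp B_1^{1/2}\sigma$, the Esseen residual $C_1C_2/S$ is of order $1/(B_1^{1/2}\sigma^2)$, so $\sigma^3|F_{B_1}-G_{B_1}|$ is only controlled at the rate $\sigma/B_1^{1/2}$, which is $\Theta(B_1^{-1/2})$ for $\theta$ bounded away from $0$ and $1$. Your later assertion that $C_1C_2/S = O(1/(B_1\sigma^3))$ is inconsistent with your own $C_1$ and is only recovered once the cancellation is used.

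A secondary point: Theorem~\ref{thm--esseen} requires $Sc_1 \geq c_2$ for a constant $c_2$ that is not at your disposal, so you cannot simply take $S = \pi B_1^{1/2}\sigma$ and declare $Sc_1 = \pi$ sufficient. The paper takes $S = c_2B_1^{1/2}\sigma$ with (without loss of generality) $c_2 > \pi$, which forces the integration range to contain one or more of the Poisson peaks at $2\pi j B_1^{1/2}\sigma$, $j = 1,\ldots,J+1$; the periodicity identity $F_{B_1}^*\bigl(B_1^{1/2}\sigma(v+2\pi j)\bigr) = e^{-i2\pi B_1 j\theta}F_{B_1}^*(B_1^{1/2}\sigma v)$ is then used to cancel each peak against the corresponding term of the sawtooth series, not just the one at $j=1$. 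Your reduction to the regime $B_1\sigma^2 \geq 1$ via the trivial bound is a sensible simplification and is compatible with the paper's argument, but the two issues above need to be repaired before the proof goes through.
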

Proposition~\ref{cor--edgeworth}, whose proof is given after the proof of Proposition~\ref{Prop:A^*}, bounds uniformly in $\theta$ the error in the one-term Edgeworth expansion $G_{B_1}$ of the distribution function $F_{B_1}$.  Returning to the proof of Theorem \ref{thm--Bvar}, we will argue that the dominant contribution to the integral in (\ref{eq--rtp}) arises from the interval $(\max\{0,\alpha - \epsilon_1\}$,$\min\{\alpha + \epsilon_1,1\})$, where $\epsilon_{1} := B_1^{-1/2}\log B_1$. For the remainder of the proof we assume $B_1$ is large enough that $[\alpha - \epsilon_1,\alpha + \epsilon_1] \subseteq (0,1)$.

For the region $|\theta - \alpha| \geq \epsilon_{1}$, by Hoeffding's inequality, we have that
\begin{align*}
\sup_{|\theta - \alpha| \geq \epsilon_1} \bigl| \mathbb{P}(T < B_1\alpha) - \mathds{1}_{\{\theta < \alpha\}}\bigr|  \leq \sup_{|\theta - \alpha| \geq \epsilon_1} \exp\bigl(-2B_1(\theta - \alpha)^2\bigr) \leq e^{-2\log^{2}{B_1}} = O(B_1^{-M}),
\end{align*}
for each $M>0$, as $B_1 \to \infty$. Writing $I := [\alpha-\epsilon_1,\alpha+\epsilon_1]$, it follows that
\begin{equation}
\label{Eq:Hoeffding}
\int_{[0,1]} \bigl\{ \mathbb{P}(T < B_1\alpha) - \mathds{1}_{\{\theta < \alpha\}}\bigr\} \, dG_{n}^\circ(\theta) = \int_I  \bigl\{\mathbb{P}(T < B_1\alpha) -  \mathds{1}_{\{\theta < \alpha\}}\bigr\} \, dG_{n}^\circ(\theta) + O(B_1^{-M}),
\end{equation}
for each $M>0$, as $B_1 \to \infty$.

For the region $|\theta - \alpha| < \epsilon_{1}$, by Proposition~\ref{cor--edgeworth}, there exists $C'>0$ such that, for all $B_1$ sufficiently large,
\begin{align*} 
\sup_{|\theta-\alpha|< \epsilon_1} \biggl|\mathbb{P}(T < B_1\alpha) - \Phi\biggl( \frac{{B_1}^{1/2}(\alpha-\theta)}{\sigma}\biggr)  - \frac{1}{B_1^{1/2}}\phi\biggl( \frac{{B_1}^{1/2}(\alpha-\theta)}{\sigma}\biggr)  r\biggl( \frac{{B_1}^{1/2}(\alpha-\theta)}{\sigma}\biggr) \biggr| \leq \frac{C'}{B_1},
\end{align*}
where $r(t) := p(t) + q(t)$.  Hence, using the fact that for large $B_1$, $\sup_{|\theta - \alpha| < \epsilon_1} |g_n^\circ(\theta)| \leq |g_n^\circ(\alpha)| + 1 < \infty$ under assumption~1, we have
\begin{align}
\label{eq--main1}
\int_I  \bigl\{\mathbb{P}(&T < B_1\alpha) - \mathds{1}_{\{\theta < \alpha\}}\bigr\} \, dG_{n}^\circ(\theta) \nonumber \\ 
&= \int_I \biggl\{ \Phi\biggl(\frac{B_1^{1/2}(\alpha - \theta)}{\sigma}\biggr) -  \mathds{1}_{\{\theta < \alpha\}} \biggr\}\, dG_{n}^\circ(\theta) \nonumber \\ 
& \hspace{30 pt} + \frac{1}{B_1^{1/2}} \int_I \phi\biggl(\frac{B_1^{1/2}(\alpha - \theta)}{\sigma}\biggr) r\biggl(\frac{B_1^{1/2}(\alpha - \theta)}{\sigma}\biggr) \,  dG_{n}^\circ(\theta) + o\Bigl(\frac{1}{B_1}\Bigr),
\end{align}
as $B_1 \rightarrow \infty$.  To aid exposition, we will henceforth concentrate on the dominant terms in our expansions, denoting the remainder terms as $R_1, R_2,\ldots$.  These remainders are then controlled at the end of the argument.  For the first term in (\ref{eq--main1}), we write 
\begin{align}
\label{eq--main2}
\int_I \biggl\{ \Phi\biggl(&\frac{B_1^{1/2}(\alpha - \theta)}{\sigma}\biggr) - \mathds{1}_{\{\theta<\alpha\}}\biggr\} \, dG_n^\circ(\theta) \nonumber \\
&= \int_I \biggl\{ \Phi\biggl(\frac{B_1^{1/2}(\alpha - \theta)}{\sqrt{\alpha(1-\alpha)}}\biggr) - \mathds{1}_{\{\theta<\alpha\}}\biggr\} \,  dG_n^\circ(\theta) \nonumber \\ 
& \hspace{50 pt}  +\frac{(1-2\alpha)B_1^{1/2}}{2\{\alpha(1-\alpha)\}^{3/2}} \int_I (\alpha-\theta)^2 \phi\biggl(\frac{B_1^{1/2}(\alpha - \theta)}{\sqrt{\alpha(1-\alpha)}}\biggr) \, dG_n^\circ(\theta) + R_1.
\end{align}
Now, for the first term in (\ref{eq--main2}),  
\begin{align}
\label{eq--main3}
\int_I \biggl\{ &\Phi\biggl(\frac{B_1^{1/2}(\alpha - \theta)}{\sqrt{\alpha(1-\alpha)}}\biggr) - \mathds{1}_{\{\theta<\alpha\}}\biggr\} \, dG_n^{\circ}(\theta) \nonumber \\ 
&= \int_{\alpha-\epsilon_1}^{\alpha + \epsilon_1} \biggl\{ \Phi\biggl(\frac{B_1^{1/2}(\alpha - \theta)}{\sqrt{\alpha(1-\alpha)}}\biggr) - \mathds{1}_{\{\theta<\alpha\}}\biggr\} \bigl\{g_n^\circ(\alpha) + (\theta-\alpha) \dot{g}_{n}^\circ(\alpha)\bigr\}\,d\theta + R_2 \nonumber \\
&= \frac {\sqrt{\alpha(1-\alpha)}} {B_1^{1/2}} \int_{-\infty}^{\infty}  \{\Phi(-u) -  \mathds{1}_{\{u<0\}}\} \biggl\{g_n^\circ(\alpha) +  \frac{\sqrt{\alpha(1-\alpha)}}{B_1^{1/2}} u \dot{g}_{n}^\circ(\alpha) \biggr\} \, du + R_2 + R_3 \nonumber \\ 
&= \frac {\alpha(1-\alpha)} {2B_1} \, \dot{g}_{n}^\circ(\alpha) + R_2 + R_3.
\end{align}
For the second term in (\ref{eq--main2}), write
\begin{align}
\label{eq--dom3}
& \frac{(1-2\alpha)B_1^{1/2}}{2\{\alpha(1-\alpha)\}^{3/2}} \int_I (\alpha-\theta)^2 \phi\biggl(\frac{B_1^{1/2}(\alpha - \theta)}{\sqrt{\alpha(1-\alpha)}}\biggr) \, dG_n^\circ(\theta)  \nonumber
\\ & \hspace{30 pt} =  \frac{(1-2\alpha)B_1^{1/2}}{2\{\alpha(1-\alpha)\}^{3/2}}g_n^{\circ}(\alpha)\int_{\alpha - \epsilon_1}^{\alpha + \epsilon_1}(\alpha-\theta)^2  \phi\biggl(\frac{B_1^{1/2}(\alpha - \theta)}{\sqrt{\alpha(1-\alpha)}}\biggr) \, d\theta + R_4 \nonumber
\\ & \hspace{30 pt} = \frac{1/2 - \alpha}{B_1}  g_n^\circ(\alpha) \int_{-\infty}^{\infty} u^2 \phi(-u)  \, du + R_4 + R_5 =  \frac{1/2-\alpha}{B_1}  g_n^\circ(\alpha) + R_4 + R_5.
\end{align}
Returning to the second term in (\ref{eq--main1}), observe that
\begin{align}
\label{eq--main4}
\frac{1}{B_1^{1/2}}&\int_I \phi\biggl(\frac{B_1^{1/2}(\alpha - \theta)}{\sigma}\biggr) r\biggl(\frac{B_1^{1/2}(\alpha - \theta)}{\sigma}\biggr) \, dG_n^{\circ}(\theta) \nonumber \\ 
&=  \frac{1/2 -  \llbracket B_1\alpha \rrbracket}{B_1^{1/2}}\int_I   \frac{1}{\sigma} \phi\biggl(\frac{B_1^{1/2}(\alpha - \theta)}{\sigma}\biggr) \,dG_n^{\circ}(\theta) \nonumber \\
&\hspace{30 pt} + \frac{1}{6B_1^{1/2}}  \int_I \frac{(1-2\theta)}{\sigma} \biggl\{1 - \frac{B_1(\alpha - \theta)^2}{\sigma^2}\biggr\} \phi\biggl(\frac{B_1^{1/2}(\alpha - \theta)}{\sigma}\biggr) \,dG_n^{\circ}(\theta) \nonumber \\ 
&= \frac{1/2 -  \llbracket B_1\alpha \rrbracket}{B_1^{1/2}}\int_I   \frac{1}{\sigma} \phi\biggl(\frac{B_1^{1/2}(\alpha - \theta)}{\sigma}\biggr) \,dG_n^{\circ}(\theta) + R_6 \nonumber \\
&= \frac{1/2 -  \llbracket B_1\alpha \rrbracket}{B_1^{1/2}\sqrt{\alpha(1-\alpha)}} g_n^{\circ}(\alpha) \int_{\alpha - \epsilon_1}^{\alpha+\epsilon_1} \phi\biggl(\frac{B_1^{1/2}(\alpha - \theta)}{\sqrt{\alpha(1-\alpha)}}\biggr)  \, d\theta + R_6 + R_7 \nonumber \\
&= \frac{1/2 -  \llbracket B_1\alpha \rrbracket }{B_1} g_n^\circ(\alpha) + R_6 + R_7 + R_8.
\end{align}
The claim~\eqref{eq--rtp} will now follow from~(\ref{Eq:Hoeffding}), (\ref{eq--main1}), (\ref{eq--main2}), (\ref{eq--main3}), (\ref{eq--dom3}) and (\ref{eq--main4}), once we have shown that 
\begin{equation}
\label{Eq:AllErrorTerms}
\sum_{j=1}^8 |R_j| = o(B_1^{-1})
\end{equation}
as $B_1 \rightarrow \infty$.

\emph{To bound $R_1$}: For $\zeta \in (0,1)$, let $h_{\theta}(\zeta) := \Phi\bigl(\frac{B_1^{1/2}(\alpha - \theta)}{\sqrt{\zeta(1-\zeta)}}\bigr)$. Observe that, by a Taylor expansion about $\zeta = \alpha$, there exists $B_0 \in \mathbb{N}$, such that, for all $B_1 > B_0$ and all $\theta, \zeta \in (\alpha - \epsilon_1, \alpha + \epsilon_1)$,
\begin{align*}
& \biggl| \Phi\biggl(\frac{B_1^{1/2}(\alpha - \theta)}{\sqrt{\zeta(1-\zeta)}}\biggr) - \Phi\biggl(\frac{B_1^{1/2}(\alpha - \theta)}{\sqrt{\alpha(1-\alpha)}}\biggr)  + (\zeta-\alpha) \frac{(1-2\alpha)B_1^{1/2}(\alpha-\theta)}{2\{\alpha(1-\alpha)\}^{3/2}} \phi\biggl(\frac{B_1^{1/2}(\alpha - \theta)}{\sqrt{\alpha(1-\alpha)}}\biggr)\biggr| 
\\ & \hspace{60 pt} = |h_{\theta}(\zeta) - h_{\theta}(\alpha) - (\zeta-\alpha)\dot{h}_{\theta}(\alpha)|
\\ & \hspace{60 pt} \leq \frac{(\zeta-\alpha)^2}{2}\sup_{\zeta' \in [\alpha-\zeta, \alpha + \zeta]} |\ddot{h}_\theta(\zeta')|  \leq (\zeta-\alpha)^2\frac{\log^{3}{B_1}}{2\sqrt{2\pi} \{\alpha(1-\alpha)\}^{7/2}}.
\end{align*}
Using this bound with $\zeta = \theta$, we deduce that, for all $B_1$ sufficiently large,
\begin{align*}
|R_1| &= \biggl|\int_I  \biggl\{ \Phi\biggl(\frac{B_1^{1/2}(\alpha - \theta)}{\sigma}\biggr) - \Phi\biggl(\frac{B_1^{1/2}(\alpha - \theta)}{\sqrt{\alpha(1-\alpha)}}\biggr) \nonumber \\ & \hspace{4cm} -  \frac{(1-2\alpha)B_1^{1/2}(\alpha-\theta)^2}{2\{\alpha(1-\alpha)\}^{3/2}}\phi\biggl(\frac{B_1^{1/2}(\alpha - \theta)}{\sqrt{\alpha(1-\alpha)}}\biggr) \biggr\} \, dG_n^\circ(\theta)\biggr| \nonumber \\
& \leq \frac{\log^{3}{B_1}}{2\sqrt{2\pi} \{\alpha(1-\alpha)\}^{7/2}} \int_{\alpha - \epsilon_1}^{\alpha + \epsilon_1}  (\theta-\alpha)^2 |g_n^{\circ}(\theta)|  \, d\theta \nonumber
\\ & \leq \frac{\log^6{B_1}}{3\sqrt{2\pi}B_1^{3/2}\{\alpha(1-\alpha)\}^{7/2}} \sup_{|\theta - \alpha| \leq \epsilon_1} |g_n^{\circ}(\theta)| = o\Bigl(\frac{1}{B_1}\Bigr)
\end{align*}
as $B_1 \to \infty$.

\emph{To bound $R_2$}:  Since $g_n^{\circ}$ is differentiable at $\alpha$, given $\epsilon > 0$, there exists $\delta_\epsilon > 0$ such that
\[
|g_n^\circ(\theta) - g_n^\circ(\alpha) - (\theta-\alpha) \dot{g}_{n}^\circ(\alpha)| < \epsilon |\theta-\alpha|,
\]
for all $|\theta-\alpha| < \delta_\epsilon$.  It follows that, for all $B_1$ sufficiently large,
\begin{align*}
|R_2| &= \biggl|\int_I \biggl\{ \Phi\biggl(\frac{B_1^{1/2}(\alpha - \theta)}{\sqrt{\alpha(1-\alpha)}}\biggr) - \mathds{1}_{\{\theta<\alpha\}}\biggr\} \, dG_n^{\circ}(\theta) \\ 
&\hspace{2.5cm}- \int_{\alpha - \epsilon_1}^{\alpha + \epsilon_1} \biggl\{ \Phi\biggl(\frac{B_1^{1/2}(\alpha - \theta)}{\sqrt{\alpha(1-\alpha)}}\biggr) - \mathds{1}_{\{\theta<\alpha\}}\biggr\} \bigl\{g_n^\circ(\alpha) + (\theta-\alpha) \dot{g}_{n}^\circ(\alpha)\bigr\}\,d\theta\biggr| \\
&\leq \epsilon \int_{\alpha - \epsilon_1}^{\alpha + \epsilon_1} \biggl| \Phi\biggl(\frac{B_1^{1/2}(\alpha - \theta)}{\sqrt{\alpha(1-\alpha)}}\biggr) - \mathds{1}_{\{\theta<\alpha\}}\biggr| |\theta-\alpha| \, d\theta \\
&\leq \frac {\epsilon \alpha(1-\alpha)} {B_1} \int_{-\log{B_1}/\sqrt{\alpha(1-\alpha)}}^{\log{B_1}/\sqrt{\alpha(1-\alpha)}} \bigl| \Phi(-u) - \mathds{1}_{\{u<0\}}\bigr| |u| \, du \nonumber
\\ & \leq  \frac {2\epsilon \alpha(1-\alpha)} {B_1} \int_{0}^{\infty} u \Phi(-u) \, du = \frac {\epsilon \alpha(1-\alpha)} {2B_1}.
\end{align*}
We deduce that $|R_2| = o(B_1^{-1})$ as $B_1 \to \infty$.

\emph{To bound $R_3$}: For large $B_1$, we have
\begin{align*}
|R_3| &= \biggl|\int_{\alpha - \epsilon_1}^{\alpha + \epsilon_1} \biggl\{ \Phi\biggl(\frac{B_1^{1/2}(\alpha - \theta)}{\sqrt{\alpha(1-\alpha)}}\biggr) - \mathds{1}_{\{\theta<\alpha\}}\biggr\} \bigl\{g_n^\circ(\alpha) + (\theta-\alpha) \dot{g}_{n}^\circ(\alpha)\bigr\}\,d\theta \nonumber \\
&\hspace{2cm}- \frac {\sqrt{\alpha(1-\alpha)}} {B_1^{1/2}} \int_{-\infty}^{\infty}  \{\Phi(-u) -  \mathds{1}_{\{u<0\}}\} \biggl\{g_n^\circ(\alpha) +  \frac{\sqrt{\alpha(1-\alpha)}}{B_1^{1/2}} u \dot{g}_{n}^\circ(\alpha) \biggr\} \, du\biggr| \nonumber \\ 
&= \frac {2\alpha(1-\alpha)} {B_1} |\dot{g}_{n}^\circ(\alpha)| \int_{\epsilon_1B_1^{1/2}/\{\alpha(1-\alpha)\}^{1/2}}^{\infty}  u\Phi(-u) \, du \nonumber
\\ & \leq \frac {2\{\alpha(1-\alpha)\}^{3/2}} {B_1 \log{B_1}} |\dot{g}_{n}^\circ(\alpha)| \int_{0}^{\infty}  u^2\Phi(-u) \, du = \frac {2\sqrt{2}\{\alpha(1-\alpha)\}^{3/2}} {3\sqrt{\pi}B_1 \log{B_1}} |\dot{g}_{n}^\circ(\alpha)| =  o(B_1^{-1})
\end{align*}
as $B_1 \to \infty$.

\emph{To bound $R_4$}: Since $g_n^{\circ}$ is continuous at $\alpha$, given $\epsilon > 0$, there exists $B_0'\in \mathbb{N}$ such that, for all $B_1 > B_0'$,
\begin{equation}
\label{eq--cont}
\sup_{|\theta-\alpha | \leq \epsilon_1}|g_{n}^\circ(\theta) - g_n^\circ(\alpha)| < \epsilon.
\end{equation}
Hence, given $\epsilon >0$, for all $B_1 > B_0'$,
\begin{align*}
|R_4| &= \biggl|\frac{(1-2\alpha)B_1^{1/2}}{2\{\alpha(1-\alpha)\}^{3/2}} \int_{\alpha - \epsilon_1}^{\alpha + \epsilon_1}  (\alpha-\theta)^2 \phi\biggl(\frac{B_1^{1/2}(\alpha - \theta)}{\sqrt{\alpha(1-\alpha)}}\biggr)\{g_n^\circ(\theta) - g_n^\circ(\alpha)\} \, d\theta \biggr| \nonumber \\
&\leq  \frac{\epsilon |1-2\alpha|}{2B_1} \int_{-\infty}^{\infty} u^2 \phi(-u) \, du = \frac{\epsilon |1-2\alpha|}{2B_1}.
\end{align*}

\emph{To bound $R_5$}: For all $B_1$ sufficiently large, 
\begin{align*}
|R_5| &= \frac{|1-2\alpha|}{B_1}  |g_n^\circ(\alpha)| \int_{\log{B_1}/\sqrt{\alpha(1-\alpha)}}^{\infty} u^2 \phi(-u)  \, du \nonumber
\\ & \leq  \frac{\sqrt{\alpha(1-\alpha)}}{B_1\log{B_1}}  |g_n^\circ(\alpha)| \int_{0}^{\infty} u^3 \phi(-u)  \, du  =  \frac{\sqrt{2\alpha(1-\alpha)}}{\sqrt{\pi}B_1\log{B_1}}  |g_n^\circ(\alpha)| = o\Bigl(\frac{1}{B_1}\Bigr)
\end{align*}
as $B_1 \to \infty$.

\emph{To bound $R_6$}: We write $R_6  = R_{61} + R_{62},$ where
\begin{align*}
R_{61} :=  \frac{(1-2\alpha)}{6B_1^{1/2}\sqrt{\alpha(1-\alpha)}}\int_I \biggl\{1 - \frac{B_1(\alpha - \theta)^2}{\alpha(1-\alpha)}\biggr\} \phi\biggl(\frac{B_1^{1/2}(\alpha - \theta)}{\sqrt{\alpha(1-\alpha)}}\biggr) \,dG_n^{\circ}(\theta)
\end{align*}
and 
\begin{align*}
R_{62} &:= \frac{1}{6B_1^{1/2}}\int_I \frac{(1-2\theta)}{\sigma} \biggl\{1 - \frac{B_1(\alpha - \theta)^2}{\sigma^2}\biggr\} \phi\biggl(\frac{B_1^{1/2}(\alpha - \theta)}{\sigma}\biggr) \,dG_n^{\circ}(\theta) 
\\ & \hspace{30 pt}-  \frac{(1-2\alpha)}{6B_1^{1/2}\sqrt{\alpha(1-\alpha)}}\int_I \biggl\{1 - \frac{B_1(\alpha - \theta)^2}{\alpha(1-\alpha)}\biggr\} \phi\biggl(\frac{B_1^{1/2}(\alpha - \theta)}{\sqrt{\alpha(1-\alpha)}}\biggr) \,dG_n^{\circ}(\theta).
\end{align*}
By~\eqref{eq--cont}, it follows that, for $B_1 > B_0'$ sufficiently large,
\begin{align*}
|R_{61}| &\leq \frac{|1-2\alpha|}{6B_1^{1/2}\sqrt{\alpha(1-\alpha)}} |g_n^{\circ}(\alpha)| \biggl|\int_{\alpha - \epsilon_1}^{\alpha+\epsilon_1} \biggl\{1 - \frac{B_1(\alpha - \theta)^2}{\alpha(1-\alpha)}\biggr\} \phi\biggl(\frac{B_1^{1/2}(\alpha - \theta)}{\sqrt{\alpha(1-\alpha)}}\biggr) \,  d\theta \biggr| 
\\ & \hspace{30 pt} + \epsilon  \frac{|1-2\alpha|}{6B_1^{1/2}\sqrt{\alpha(1-\alpha)}}\int_{\alpha - \epsilon_1}^{\alpha+\epsilon_1} \biggl|1 - \frac{B_1(\alpha - \theta)^2}{\alpha(1-\alpha)}\biggr| \phi\biggl(\frac{B_1^{1/2}(\alpha - \theta)}{\sqrt{\alpha(1-\alpha)}}\biggr) \,d\theta.
\\ & \leq \frac{|1-2\alpha|}{6B_1} |g_n^{\circ}(\alpha)| \biggl|\int_{-\log{B_1}/\sqrt{\alpha(1-\alpha)}}^{\log{B_1}/\sqrt{\alpha(1-\alpha)}} (1 - u^2) \phi(-u) \, du \biggr| 
\\ & \hspace{30 pt} + \epsilon \frac{|1-2\alpha|}{6B_1} \int_{-\infty}^{\infty} (1 + u^2) \phi(-u) \, du \leq \frac{\epsilon}{B_1}.
\end{align*}
We deduce that $R_{61} = o(B_1^{-1})$ as $B_1 \to \infty$.

To control $R_{62}$, by the mean value theorem, we have that for all $B_1$ sufficiently large and all $\zeta \in [\alpha-\epsilon_1,\alpha+\epsilon_1]$,
\begin{align*}
 \sup_{|\theta -\alpha | <\epsilon_1} \Biggl|\frac{(1-2\zeta)}{\sqrt{\zeta(1-\zeta)}} \biggl\{1 - &\frac{B_1(\alpha - \theta)^2}{\zeta(1-\zeta)}\biggr\} \phi\biggl(\frac{B_1^{1/2}(\alpha - \theta)}{\sqrt{\zeta(1-\zeta)}}\biggr)
\\ &  \hspace{30 pt} - \frac{(1-2\alpha)}{\sqrt{\alpha(1-\alpha)}} \biggl\{1 - \frac{B_1(\alpha - \theta)^2}{\alpha(1-\alpha)}\biggr\} \phi\biggl(\frac{B_1^{1/2}(\alpha - \theta)}{\sqrt{\alpha(1-\alpha)}}\biggr)\Biggr| 
\\ & \hspace{6cm} \leq \frac{\log^4{B_1}}{\sqrt{2\pi}\{\alpha(1-\alpha)\}^{7/2}} |\zeta - \alpha|.
\end{align*}  
Thus, for large $B_1$,
\begin{align*}
|R_{62}| &\leq  \frac{\log^4{B_1}}{6 \sqrt{2\pi} B_1^{1/2}\{\alpha(1-\alpha)\}^{7/2}} \sup_{|\theta-\alpha| \leq \epsilon_1} |g_n^{\circ}(\theta)| \int_{\alpha-\epsilon_1}^{\alpha + \epsilon_1}|\theta - \alpha| \, d\theta 
\\ & \leq \frac{ \log^6{B_1} \{1+|g_n^{\circ}(\alpha)|\}}{6\sqrt{2\pi} B_1^{3/2}\{\alpha(1-\alpha)\}^{7/2}} = o\Bigl(\frac{1}{B_1}\Bigr).
\end{align*}
We deduce that $|R_6| = o(B_1^{-1})$ as $B_1 \to \infty$.

\emph{To bound $R_7$}: write $R_7 = R_{71} + R_{72}$, where
\[
R_{71} :=  \frac{1/2 -  \llbracket B_1\alpha \rrbracket}{B_1^{1/2}\sqrt{\alpha(1-\alpha)}} \int_{\alpha - \epsilon_1}^{\alpha+\epsilon_1}   \phi\biggl(\frac{B_1^{1/2}(\alpha - \theta)}{\sqrt{\alpha(1-\alpha)}}\biggr)\{ g_n^{\circ}(\theta) - g_n^{\circ}(\alpha)\} \, d\theta, 
\]
and
\begin{align*}
R_{72} &:= \frac{1/2 -  \llbracket B_1\alpha \rrbracket}{B_1^{1/2}} \int_I \biggl\{ \frac{1}{\sigma} \phi\biggl(\frac{B_1^{1/2}(\alpha - \theta)}{\sigma}\biggr) - \frac{1}{\sqrt{\alpha(1-\alpha)}} \phi\biggl(\frac{B_1^{1/2}(\alpha - \theta)}{\sqrt{\alpha(1-\alpha)}}\biggr)\biggr\} \,dG_n^{\circ}(\theta).
\end{align*}
By the bound in (\ref{eq--cont}), given $\epsilon>0$, for all $B_1$ sufficiently large,
\begin{align*}
|R_{71}| &\leq  \frac{\epsilon}{2 B_1^{1/2}\sqrt{\alpha(1-\alpha)}} \int_{-\infty}^{\infty}   \phi\biggl(\frac{B_1^{1/2}(\alpha - \theta)}{\sqrt{\alpha(1-\alpha)}}\biggr) \, d\theta = \frac{\epsilon}{2 B_1}.
\end{align*}
Moreover, by the mean value theorem, for all $B_1$ sufficiently large and all $|\zeta - \alpha | \leq \epsilon_1$,
\begin{align*}
 \sup_{|\theta -\alpha | <\epsilon_1} \Bigg|\frac{1}{\sqrt{\zeta(1-\zeta)}} \phi\biggl(\frac{B_1^{1/2}(\alpha - \theta)}{\sqrt{\zeta(1-\zeta)}}\biggr) - \frac{1}{\sqrt{\alpha(1-\alpha)}}&\phi\biggl(\frac{B_1^{1/2}(\alpha - \theta)}{\sqrt{\alpha(1-\alpha)}}\biggr)\Bigg| \\ 
&\leq \frac{\log^2{B_1}}{\sqrt{2\pi} \{\alpha(1-\alpha)\}^{5/2}} |\zeta - \alpha|.
\end{align*} 
It follows that, for all $B_1$ sufficiently large,
\begin{align*}
|R_{72}| &\leq \frac{\log^2{B_1}}{2\sqrt{2\pi}B_1^{1/2} \{\alpha(1-\alpha)\}^{5/2}} \sup_{|\theta-\alpha|\leq \epsilon_1} |g_n^{\circ}(\theta)| \int_{\alpha-\epsilon_1}^{\alpha+\epsilon_1} |\theta-\alpha| \, d\theta 
\\ & \leq  \frac{\log^4{B_1} \{1+|g_n^{\circ}(\alpha)|\}}{2\sqrt{2\pi}B_1^{3/2} \{\alpha(1-\alpha)\}^{5/2}}.
\end{align*}
We deduce that $|R_7| = o(B_1^{-1})$ as $B_1 \to \infty$.

\emph{To bound $R_8$}: We have
\[
|R_8| = \frac{2(1/2 -  \llbracket B_1\alpha \rrbracket)}{B_1} |g_n^\circ(\alpha)|\int_{\epsilon_1B_1^{1/2}/\{\alpha(1-\alpha)\}^{1/2}}^\infty \phi(-u) \, du = o\Bigl(\frac{1}{B_1}\Bigr)
\]
as $B_1 \to \infty$.

We have now established the claim at~\eqref{Eq:AllErrorTerms}, and the result follows.
\end{prooftitle}

\begin{prooftitle}{of Theorem~\ref{Thm:BaggingBound}.}
In the case where $B_1 < \infty$, we have
\begin{align*}
R({C}_n^{\mathrm{RP}}) &- R(C^{\mathrm{Bayes}})
\\ &=  \int_{\mathbb{R}^p}\bigl[\eta(x) (\mathds{1}_{\{{C}_n^{\mathrm{RP}}(x) = 0\}} - \mathds{1}_{\{C^{\mathrm{Bayes}}(x) = 0\}})  + \{1-\eta(x)\} (\mathds{1}_{\{{C}_n^{\mathrm{RP}}(x) =1\}} - \mathds{1}_{\{C^{\mathrm{Bayes}}(x) =1\}})\bigl] \, dP_{X}(x) \\
&= \int_{\mathbb{R}^p}\bigl\{|2\eta(x)-1| |\mathds{1}_{\{\nu_n(x) < \alpha\}} - \mathds{1}_{\{\eta(x) < 1/2\}}|\bigr\} \, dP_{X}(x) \\
&=  \int_{\mathbb{R}^p}\bigl\{|2\eta(x)-1| \mathds{1}_{\{\nu_n(x) \geq \alpha\}}\mathds{1}_{\{\eta(x) < 1/2\}} + |2\eta(x)-1| \mathds{1}_{\{\nu_n(x) < \alpha\}}\mathds{1}_{\{\eta(x) \geq 1/2\}}\bigr\}\, dP_{X}(x) \\
&\leq \int_{\mathbb{R}^p}\Bigl[\frac{1}{\alpha}|2\eta(x)-1| \nu_n(x)\mathds{1}_{\{\eta(x) < 1/2\}} + \frac{1}{1-\alpha}|2\eta(x)-1| \{1-\nu_n(x)\}\mathds{1}_{\{\eta(x) \geq 1/2\}} \Bigr]\, dP_{X}(x).
\end{align*}
It follows that
\begin{align*}
\mathbf{E}\{R({C}_n^{\mathrm{RP}})\} - R(C^{\mathrm{Bayes}}) & \leq \mathbf{E}\biggl\{\int_{\mathbb{R}^p} \frac{1}{\alpha}|2\eta(x)-1|\mathds{1}_{\{C_n^{\mathbf{A}_{1}} (x) = 1\}} \mathds{1}_{\{\eta(x) < 1/2\}} \\
&\hspace{3cm}+ \frac{1}{1-\alpha}|2\eta(x)-1|\mathds{1}_{\{C_n^{\mathbf{A}_{1}} (x) = 0\}} \mathds{1}_{\{\eta(x) \geq 1/2\}} \, dP_{X}(x)\biggr\} \\
&\leq \frac{1}{\min(\alpha,1-\alpha)} \mathbf{E}\biggl\{\int_{\mathbb{R}^p} |2\eta(x)-1|\bigl|\mathds{1}_{\{C_n^{\mathbf{A}_{1}} (x) = 0\}}  - \mathds{1}_{\{\eta(x) < 1/2\}}\bigr| \, dP_{X}(x)\biggr\} \\
& =\frac{1}{\min(\alpha,1-\alpha)} \bigl[ \mathbf{E}\{R(C_n^{\mathbf{A}_1})\} - R(C^{\mathrm{Bayes}})\bigr],
\end{align*}
as required.  When $B_1 = \infty$, we replace both occurrences of $R({C}_n^{\mathrm{RP}})$ with $R({C}_n^{\mathrm{RP}^*})$ and the argument goes through in almost identical fashion after changing $\nu_n$ to $\mu_n$.
\end{prooftitle} 

\begin{prooftitle}{of Theorem~\ref{Thm:Main}.}
First write
\[
\mathbf{E}\{R(C_n^{\mathbf{A}_1})\} - R(C^{\mathrm{Bayes}}) = \mathbf{E}({R}_n^{\mathbf{A}_1}) - R(C^{\mathrm{Bayes}}) + \epsilon_n.
\]
Using assumption~2, we have that
\begin{align*}
\mathbf{E}({R}_n^{\mathbf{A}_1}) &= \mathbf{E}\bigl(R_n^{\mathbf{A}_1}\mathds{1}_{\{R_n^{\mathbf{A}_1} \leq R_n^* + |\epsilon_n|\}}\bigr) + \mathbf{E}\bigl(R_n^{\mathbf{A}_1}\mathds{1}_{\{R_n^{\mathbf{A}_1} > R_n^* + |\epsilon_n|\}}\bigr) \\
&\leq R_n^* + |\epsilon_n| + \mathbf{P}(R_n^{\mathbf{A}_1} > R_n^* + |\epsilon_n|) \\
&= R_n^* + |\epsilon_n| + \mathbf{P}(R_n^{\mathbf{A}_{1,1}} > R_n^* + |\epsilon_n|)^{B_2} \\
&\leq R_n^* + |\epsilon_n| + (1-\beta)^{B_2}.
\end{align*}
But, for any $A \in \mathcal{A}$ and by definition of ${R}_n^*$ and $\epsilon_n^A$, we have ${R}_n^* \leq {R}_n^A = R({C}_n^{A}) - \epsilon_n^{A}$.  It therefore follows by Theorem~\ref{Thm:BaggingBound} that
\begin{align*}
\mathbf{E}\{R({C}_n^{\mathrm{RP}})\} - R(C^{\mathrm{Bayes}}) &\leq \frac{1}{\min(\alpha,1-\alpha)}\bigl[\mathbf{E}\{R({C}_n^{\mathbf{A}_1})\} - R(C^{\mathrm{Bayes}})\bigr] \\
&\leq \frac{R({C}_n^{A}) - R(C^{\mathrm{Bayes}})}{\min(\alpha,1-\alpha)} + \frac{2|\epsilon_n| - \epsilon_n^{A}}{\min(\alpha,1-\alpha)} + \frac{(1-\beta)^{B_{2}}}{\min(\alpha,1-\alpha)},
\end{align*}
as required.
\end{prooftitle}

\begin{prooftitle}{of Proposition~\ref{Prop:Cond}.}
For a Borel set $C \subseteq \mathbb{R}^d$, let $P_{A^*X}(C) := \int_{\{x:A^*x \in C\}} \, dP_X(x)$, so that $P_{A^*X}$ is the marginal distribution of $A^*X$.  Further, for $z \in \mathbb{R}^d$, write $P_{X|A^*X = z}$ for the conditional distribution of $X$ given $A^*X =z$.  If $Y$ is independent of $X$ given $A^*X$, and if $B$ is a Borel subset of $\mathbb{R}^p$, then
\begin{align*}
\label{Eq:CondProb}
\int_B \eta^{A^*}(A^*x) \, dP_X(x) &= \int_{\mathbb{R}^d} \int_{B \cap \{w:A^*w=z\}} \eta^{A^*}(A^*w) \, dP_{X|A^*X=z}(w) \, dP_{A^*X}(z) \\
&= \int_{\mathbb{R}^d} \eta^{A^*}(z)\mathbb{P}(X \in B|A^*X=z) \, dP_{A^*X}(z) \\
&= \int_{\mathbb{R}^d} \mathbb{P}(Y=1,X \in B|A^*X=z) \, dP_{A^*X}(z) \\
&= \mathbb{P}(Y=1,X \in B) = \int_B \eta(x) \, dP_X(x).
\end{align*}
We deduce that $P_X(\{x \in \mathbb{R}^p:\eta(x) \neq \eta^{A^*}(A^*x)\}) = 0$; in particular, assumption~3 holds, as required.
\end{prooftitle}

\begin{prooftitle}{of Proposition~\ref{Prop:A^*}.}
We have
\begin{align*}
R(C^{A^*-\mathrm{Bayes}}) &= \int_{\mathbb{R}^p \times \{0,1\}} \mathds{1}_{\{C^{A^*-\mathrm{Bayes}}(A^*x) \neq y\}} \, dP(x,y) \\
&= \int_{\mathbb{R}^p} \eta(x)\mathds{1}_{\{\eta^{A^*}(A^*x) < 1/2\}} \, dP_X(x) + \int_{\mathbb{R}^p} \{1-\eta(x)\}\mathds{1}_{\{\eta^{A^*}(A^*x) \geq 1/2\}} \, dP_X(x) \\
&= \int_{\mathbb{R}^p} \eta(x)\mathds{1}_{\{\eta(x) < 1/2\}} \, dP_X(x) + \int_{\mathbb{R}^p} \{1-\eta(x)\}\mathds{1}_{\{\eta(x) \geq 1/2\}} \, dP_X(x) 
\\ & = R(C^{\mathrm{Bayes}}),
\end{align*}
where we have used assumption~3 to obtain the penultimate equality.  
\end{prooftitle}

\begin{prooftitle}{of Proposition~\ref{cor--edgeworth}.}
Recall that $\sigma^2 := \theta(1-\theta)$.  Let 
\begin{align*}
F_{B_1}^*(s) = F_{B_1}^*(s,\theta) &:= \int_{-\infty}^{\infty} e^{is t} \, dF_{B_1}(t) = \biggl\{(1-\theta)\exp\biggl(-\frac{is\theta}{B_1^{1/2}\sigma}\biggr) + \theta \exp\biggl(\frac{is(1-\theta)}{B_1^{1/2}\sigma}\biggr)\biggr\}^{B_1}.
\end{align*}
Moreover, let  $P(t) := \frac{\phi(t)p(t)}{B_1^{1/2}}$ and $Q(t) := \frac{\phi(t)q(t)}{B_1^{1/2}}$.  By, for example, \citet[Chapter 8, Section 43]{Gnedenko:54}, we have
\[
\Phi^*(s) := \int_{\mathbb{R}} \exp(i s t) \, d\Phi(t) = \exp(-s^2/2), 
\]
\[
P^*(s) := \int_{\mathbb{R}} \exp(ist) \, dP(t) = -\frac{1-2\theta}{6B_1^{1/2}\sigma} i s^3 \exp(-s^2/2) 
\]
and
\[
Q^*(s) := \int_{\mathbb{R}} \exp(ist) \, dQ(t) = - \frac{s}{2\pi B_1^{1/2}\sigma} \sum_{l \in \mathbb{Z} \setminus \{0\}} \frac{\exp(i 2\pi B_1 l \theta)}{l} \exp\Bigl\{-\frac{1}{2}\bigl(s + 2\pi B_1^{1/2} \sigma l\bigr)^2\Bigr\}.
\]
Thus
\begin{align*}
G_{B_1}^*(s) &= G_{B_1}^*(s,\theta) := \int_{\mathbb{R}} \exp(i  s t) \, dG_{B_1}(t) = \Phi^*(s) + P^*(s) + Q^*(s)
\\  &= \exp(-s^2/2) - \frac{1-2\theta}{6B_1^{1/2}\sigma} is^3 \exp(-s^2/2) 
\\ &\hspace{3.5 cm} - \frac{s}{2\pi B_1^{1/2}\sigma} \sum_{l \in \mathbb{Z} \setminus \{0\}} \frac{\exp(i2\pi B_1l \theta)}{l} \exp\Bigl\{-\frac{1}{2}\bigl(s+2\pi B_1^{1/2} \sigma l\bigr)^2\Bigr\}.
\end{align*}
Letting $c_2 > 0$ be the constant given in the statement of Theorem~\ref{thm--esseen} (in fact we assume without loss of generality that $c_2 > \pi$), we show that there exists a constant $C' > 0$ such that, for all $B_1 \in \mathbb{N}$,
\begin{equation}
\label{eq--fourierbound} 
\sup_{\theta \in (0,1)} \sigma^3 \int_{-c_2 B_1^{1/2}\sigma}^{c_2 B_1^{1/2}\sigma} \biggl|\frac{F_{B_1}^*(s,\theta) - G_{B_1}^*(s,\theta)}{s}\biggr|\, ds \leq \frac {C'}{B_1}.
\end{equation}
To show (\ref{eq--fourierbound}), write
\begin{align}
\label{eq--interval}
 \int_{-c_2 B_1^{1/2}\sigma}^{c_2 B_1^{1/2}\sigma} &\biggl|\frac{F_{B_1}^*(s) - G_{B_1}^*(s)}{s}\biggr|\, ds  = \int_{-S_1}^{S_1} \biggl|\frac{F_{B_1}^*(s) - G_{B_1}^*(s)}{s}\biggr| \, ds  \nonumber
\\ & + \int_{S_1 \leq |s| \leq S_2} \biggl|\frac{F_{B_1}^*(s) - G_{B_1}^*(s)}{s}\biggr|\, ds  + \int_{S_2 \leq |s| \leq c_2 B_1^{1/2}\sigma} \biggl|\frac{F_{B_1}^*(s) - G_{B_1}^*(s)}{s}\biggr|\, ds,
\end{align}
where $S_1 := \frac{B_1^{1/2}\sigma^{3/2}}{32 (3\theta^2 - 3\theta +1)^{3/4}}$ and $S_2 := \pi B_1^{1/2} \sigma$. Note that $S_1 \leq S_2/2$ for all $\theta \in (0,1)$.

We bound each term in (\ref{eq--interval}) in turn.  By \citet[Theorem ~1, Section ~41]{Gnedenko:54}, there exists a universal constant $C_3 >0$, such that, for all $|s| \leq S_1$,
\begin{align*}
&|F_{B_1}^*(s,\theta) - \Phi^*(s) - P^*(s)|  \leq \frac{C_3}{B_1 \sigma^3}(s^4 + s^6)\exp(-s^2/4).
\end{align*}
Thus
\begin{equation}
\label{eq--S12}
\int_{-S_1}^{S_1} \biggl|\frac{F_{B_1}^*(s) - \Phi^*(s) - P^*(s)}{s}\biggr| \, ds \leq \frac{C_3}{B_1 \sigma^3} \int_{-\infty}^{\infty} (|s|^3 + |s|^5) \exp(-s^2/4) \, ds = \frac{144 C_3}{B_1\sigma^{3}}.
\end{equation}
Moreover, observe that $\bigl(s+2\pi B_1^{1/2} \sigma l\bigr)^2 \geq s^2 + 2\pi^2 B_1 \sigma^2 l^2$ for all $|s|\leq S_1$.  Thus, for $|s|\leq S_1$,
\begin{align*}
\biggl|\frac{Q^*(s)}{s}\biggr| & \leq  \frac{1}{2\pi B_1^{1/2}\sigma} \biggl|\sum_{l \in \mathbb{Z} \setminus \{0\}} \frac{\exp(i 2\pi B_1 l \theta)}{l} \exp\Bigl\{-\frac{1}{2}\bigl(s+2\pi B_1^{1/2} \sigma l\bigr)^2\Bigr\}\biggr|
\\ & \leq \frac{\phi(s)}{\sqrt{2\pi} B_1^{1/2}\sigma} \int_{-\infty}^{\infty} \exp\bigl(-\pi^2 B_1 \sigma^2 u^2\bigr) \, du = \frac{\phi(s)}{\sqrt{2} \pi B_1\sigma^2}.
\end{align*}
It follows that
\begin{equation}
\label{Eq:Q*}
\int_{-S_1}^{S_1} \biggl|\frac{Q^*(s)}{s}\biggr| \, ds \leq \frac{1}{\sqrt{2} \pi B_1\sigma^2}.
\end{equation}
For $|s| \in [S_1,S_2]$, observe that
\[
|F_{B_1}^*(s)| = \biggl[1-2\sigma^2\biggl\{1-\cos\Bigl(\frac{s}{B_1^{1/2}\sigma}\Bigr)\biggr\}\biggr]^{B_1/2} \leq \exp(-s^2/8).
\] 
Thus
\begin{equation}
\label{Eq:F*middle}
\int_{S_1 \leq |s| \leq S_2} \biggl|\frac{F_{B_1}^*(s)}{s}\biggr| \, ds \leq  \frac{2}{S_1^2} \int_{S_1}^{S_2} s \exp(-s^2/8) \, ds \leq \frac{2^{13}}{B_1 \sigma^3}.
\end{equation}
Now,
\begin{equation}
\label{Eq:Phi*middle}
\int_{S_1 \leq |s| \leq S_2} \biggl|\frac{\Phi^*(s)}{s}\biggr| \, ds \leq \frac{2}{S_1^2} \int_0^\infty s\exp(-s^2/2)\, ds \leq \frac{2^{11}}{B_1 \sigma^3},
\end{equation}
and
\begin{equation}
\label{Eq:P*middle}
\int_{S_1 \leq |s| \leq S_2} \biggl|\frac{P^*(s)}{s}\biggr| \, ds \leq \frac{1}{3 S_1 B_1^{1/2} \sigma} \int_0^\infty s^3\exp(-s^2/2)\, ds \leq \frac{2^6}{3\sqrt{2} B_1\sigma^3}.
\end{equation}
To bound the final term, observe that, for all $|s| \in [S_1, S_2]$, since $(a+b)^2 \geq (a^2+b^2)/5$ for all $|a| \leq |b|/2$, we have
\begin{equation}
\label{eq--S1S2}
\int_{S_1 \leq |s| \leq S_2} \biggl|\frac{Q^*(s)}{s}\biggr| \, ds  \leq \frac{1}{2\pi B_1^{1/2}\sigma} \int_{S_1 \leq |s| \leq S_2} e^{-s^2/10} \int_{-\infty}^\infty e^{-2\pi^2B_1\sigma^2u^2/5} \, du \, ds \leq \frac{5}{4\pi B_1\sigma^3}.
\end{equation}

Finally, for $|s| \in  [S_2, c_2 B_1^{1/2}\sigma]$, note that
\begin{align}
\label{Eq:PhiPend}
\int_{S_2 \leq |s| \leq c_2 B_1^{1/2}\sigma} \biggl|\frac{\Phi^*(s) + P^*(s)}{s}\biggr| \, ds &\leq \frac{2}{S_2^2} \int_{0}^{\infty} s e^{-s^2/2} \, ds + \frac{1}{3 S_2 B_1^{1/2}\sigma} \int_{0}^{\infty} s^3 e^{-s^2/2} \, ds \nonumber
\\ & \leq \frac{1}{\pi^2 B_1\sigma^3} \biggl(1 + \frac{\pi}{3}\biggr). 
\end{align}
To bound the remaining terms, by substituting $s = B_1^{1/2}\sigma u$, we see that
\begin{align}
\label{eq--Ij1}
\int_{S_2}^{c_2 B_1^{1/2}\sigma} \biggl|\frac{F_{B_1}^*(s) - Q_{B_1}^*(s)}{s}\biggr| \, ds &=  \int_{\pi}^{c_2} \biggl|\frac{F_{B_1}^*(B_1^{1/2} \sigma u) - Q_{B_1}^*(B_1^{1/2} \sigma u)}{u}\biggr|\, du \nonumber
\\ & = \sum_{j=1}^J \int_{\pi(2j-1)}^{\pi(2j+1)} \biggl|\frac{F_{B_1}^*(B_1^{1/2}\sigma u) - Q_{B_1}^*(B_1^{1/2}\sigma u)}{u}\biggr|\, du \nonumber
\\ & \hspace{30 pt}  +  \int_{\pi(2J+1)}^{c_2} \biggl|\frac{F_{B_1}^*(B_1^{1/2}\sigma u) - Q_{B_1}^*(B_1^{1/2}\sigma u)}{u}\biggr|\, du,
\end{align} 
where $J := \lfloor \frac{c_2-\pi}{2\pi}\rfloor$.
Let 
\begin{align}
\label{Eq:Ij}
I_j :=& \int_{\pi(2j-1)}^{\pi(2j+1)} \biggl|\frac{F_{B_1}^*(B_1^{1/2}\sigma u) - Q_{B_1}^*(B_1^{1/2}\sigma u)}{u}\biggr|\, du \nonumber
\\  =& \int_{-\pi}^{\pi} \biggl|\frac{F_{B_1}^*\bigl(B_1^{1/2}\sigma (v+2\pi j)\bigr) - Q_{B_1}^*\bigl(B_1^{1/2}\sigma (v+2\pi j)\bigr)}{v+2\pi j}\biggr|\, dv.
\end{align}
Observe that 
\begin{align*}
F_{B_1}^*\bigl(B_1^{1/2}\sigma (v+2\pi j)\bigr) &=  \Bigl[(1-\theta)\exp\bigl\{-{i(v+2\pi j)\theta}\bigr\} + \theta \exp\bigl\{{i(v+2\pi j)(1-\theta)}\bigr\}\Bigr]^{B_1}
\\ & = \exp(-i 2\pi B_1 j\theta) \bigl[(1-\theta)\exp(-iv\theta) + \theta \exp\{iv(1-\theta)\}\bigr]^{B_1}
\\ & =  \exp(-i 2\pi  B_1j\theta) F_{B_1}^*(B_1^{1/2}\sigma v).
\end{align*}
Similarly,
\begin{align*}
Q_{B_1}^*\bigl(B_1^{1/2}\sigma (v+2\pi j)\bigr) &= - \frac{(v+2\pi j)}{2\pi} \sum_{l \in \mathbb{Z}\setminus\{0\}} \frac{\exp(i 2\pi  B_1 l \theta)}{l} \exp\biggl\{-\frac{B_1\sigma^2}{2}\bigl(v+2\pi j + 2\pi l\bigr)^2\biggr\}
\\ & =  \frac{(v+2\pi j)\exp(-i 2 \pi  B_1 j \theta)}{2\pi j}  \exp\biggl(-\frac{B_1\sigma^2 v^2}{2}\biggr) 
\\ & \hspace{-10 pt} - \frac{(v+2\pi j)}{2\pi} \sum_{l \in \mathbb{Z}\setminus\{0,-j\}} \frac{\exp(i 2\pi  B_1 l \theta)}{l} \exp\biggl\{-\frac{B_1\sigma^2}{2}\bigl(v+2\pi j + 2\pi l\bigr)^2\biggr\}.
\end{align*}
But, for $v \in [-\pi,\pi]$,
\begin{align*}
 \biggl| \frac{1}{2\pi} \sum_{l \in \mathbb{Z}\setminus\{0,-j\}} &\frac{e^{i 2\pi  B_1 l \theta}}{l} \exp\biggl\{-\frac{B_1\sigma^2}{2}\bigl(v+2\pi j + 2\pi l\bigr)^2\biggr\}\biggr| \leq \frac{1}{2\pi} \sum_{m \in \mathbb{Z}\setminus\{0\}} e^{-\frac{B_1\sigma^2}{2}(v+2\pi m)^2} \\
&\leq \frac{e^{-B_1\sigma^2v^2/10}}{2\pi} \sum_{m \in \mathbb{Z}\setminus\{0\}} e^{-2\pi^2B_1\sigma^2m^2/5} \leq \frac{e^{-B_1\sigma^2v^2/10}}{\pi(e^{2\pi^2B_1\sigma^2/5}-1)} \leq \frac{5e^{-B_1\sigma^2v^2/10}}{2\pi^3B_1\sigma^2}. 
\end{align*} 
It follows that 
\begin{align}
\label{eq--Ij2}
I_j & \leq \int_{-\pi}^{\pi} \biggl|\frac{F_{B_1}^*(B_1^{1/2}\sigma v) - \bigl( \frac{v}{2\pi j} + 1 \bigr) \exp\bigl(-\frac{B_1\sigma^2 v^2}{2}\bigr)}{v+2\pi j}\biggr|\, dv + \frac{5\sqrt{5}}{\sqrt{2}\pi^{5/2}B_1^{3/2}\sigma^3}.
\end{align}
Now
\begin{align}
\label{eq--Ij3}
\int_{-\pi}^{\pi} &\biggl|\frac{F_{B_1}^*(B_1^{1/2}\sigma v) -   \exp\bigl(-\frac{B_1\sigma^2 v^2}{2}\bigr)}{v+2\pi j}\biggr|\, dv \leq \frac{1}{\pi j B_1^{1/2}\sigma}\int_{-\pi B_1^{1/2} \sigma}^{\pi  B_1^{1/2}\sigma} \big|F_{B_1}^*(u) -  e^{-u^2/2}\big|\, du \nonumber \\
 = & \frac{1}{\pi j B_1^{1/2}\sigma}\int_{-S_3}^{S_3} \big|F_{B_1}^*(u) -  e^{-u^2/2}\big|\, du + \frac{1}{\pi j B_1^{1/2}\sigma}\int_{S_3 \leq |u| \leq \pi B_1^{1/2} \sigma} \big|F_{B_1}^*(u) -  e^{-u^2/2}\big| \, du,
\end{align}
where $S_3 := \frac{ B_1^{1/2}\sigma}{5(2\theta^2-2\theta + 1)} \geq S_1$. By \citet[Theorem 2, Section 40]{Gnedenko:54}, we have that
\begin{align}
\label{eq--Ij4}
 \frac{1}{\pi j B_1^{1/2}\sigma}\int_{-S_3}^{S_3} \big|F_{B_1}^*(u) -  e^{-u^2/2}\big|\, du & \leq \frac{7}{6\pi j B_1\sigma^2}\int_{-S_3}^{S_3} |u|^3 e^{-u^2/4} \, du \leq \frac{56}{3 \pi jB_1\sigma^2}.
\end{align}
Moreover,
\begin{align}
\label{eq--Ij5}
\frac{1}{\pi j B_1^{1/2}\sigma}\int_{S_3 \leq |u| \leq \pi B_1^{1/2} \sigma} \big|F_{B_1}^*(u) -  e^{-u^2/2}\big| \, du \leq   \frac{2}{\pi j S_3 B_1^{1/2}\sigma}  \int_{0}^{\infty} u (e^{-u^2/8} + e^{-u^2/2}) \, du \leq   \frac{50}{\pi j B_1 \sigma^2}.
\end{align}
Finally,
\begin{align}
\label{eq--Ij6}
 \frac 1 {2\pi j} \int_{-\pi}^{\pi} \frac{|v|}{|v|+2\pi j} \exp\biggl(-\frac{B_1\sigma^2 v^2}{2}\biggr)\, dv & \leq \frac{1}{2\pi^{2}j^{2}} \int_{0}^{\pi} v \exp\biggl(-\frac{B_{1}\sigma^{2}v^{2}}{2}\biggr) \, dv \leq \frac{1}{2\pi^{2} j^{2}B_1\sigma^2}.
\end{align}     
By (\ref{eq--Ij1}), (\ref{Eq:Ij}), (\ref{eq--Ij2}), (\ref{eq--Ij3}), (\ref{eq--Ij4}), (\ref{eq--Ij5}) and (\ref{eq--Ij6}), it follows that 
\begin{align}
\label{eq--S2B1}
\int_{S_2 \leq |s| \leq c_2 B_1^{1/2}\sigma} \biggl|\frac{F_{B_1}^*(s) - Q_{B_1}^*(s)}{s}\biggr| \, ds & \leq \frac{10\sqrt{5}(J+1)}{\sqrt{2}\pi^{5/2}B_1^{3/2}\sigma^3} + \frac{140}{\pi B_1 \sigma^2} \sum_{j=1}^{J+1} \frac{1}{j} \nonumber
\\ &  \leq \frac{10\sqrt{5}(J+1)}{\sqrt{2}\pi^{5/2}B_1^{3/2}\sigma^3} + \frac{140}{\pi B_1 \sigma^2}\{1 + \log{(J+1)}\}.
\end{align}
By (\ref{eq--interval}), (\ref{eq--S12}), (\ref{Eq:Q*}), (\ref{Eq:F*middle}), (\ref{Eq:Phi*middle}), (\ref{Eq:P*middle}), (\ref{eq--S1S2}), (\ref{Eq:PhiPend}) and (\ref{eq--S2B1}), we conclude that~(\ref{eq--fourierbound}) holds.  The result now follows from Theorem~\ref{thm--esseen}, by taking $c_1 = \frac{1}{B_1^{1/2}\sigma}$, $C_1 = \frac{1}{3B_1^{1/2} \sigma}$ and $S =  c_2 B_1^{1/2}\sigma$ in that result.
\end{prooftitle}

\section*{Acknowledgements}

Both authors are supported by an Engineering and Physical Sciences Research Council Fellowship EP/J017213/1; the second author is also supported by a Philip Leverhulme prize.  The authors thank Rajen Shah, Ming Yuan and the anonymous reviewers for helpful comments.

\clearpage

{\Large \textbf{Random-projection ensemble classification: supplementary material}}

\medskip

This is the supplementary material for \citet{CanningsSamworth:16Main}, hereafter referred to as the main text.

\section{A bound on the Monte Carlo variance of $R(C_n^{\mathrm{RP}})$}

The following bound on the asymptotic Monte Carlo variance of $R(C_n^{\mathrm{RP}})$ complements the result on its Monte Carlo expectation presented in Theorem~\ref{thm--Bvar}:
\begin{proposition}
\label{Prop:variance}
Assume assumption~1.  Then
\[
\limsup_{B_1 \to \infty} B_1 \mathbf{Var}\{R({C}_n^{\mathrm{RP}})\} \leq  \alpha(1 - \alpha)\bar{g}_n^2(\alpha),
\]
where $\bar{g}_n(\alpha) := \pi_0 g_{n,0}(\alpha) + \pi_1 g_{n,1}(\alpha)$.
\end{proposition}
\begin{proof}
Recall that the training data are considered fixed.  First write
\[
R({C}_n^{\mathrm{RP}}) = \pi_0 \int_{\mathbb{R}^p} \mathds{1}_{\{\nu_n(x) \geq \alpha\}} \, dP_0(x) + \pi_1 \int_{\mathbb{R}^p} \mathds{1}_{\{\nu_n(x) < \alpha\}} \, dP_1(x).
\]
Now, for $r=0,1$,
\begin{align*}
\mathbf{Var}&\Bigl(\int_{\mathbb{R}^p} \mathds{1}_{\{\nu_n(x) < \alpha\}} \, dP_r(x)\Bigr) = \mathbf{E}\Bigl\{\Bigl(\int_{\mathbb{R}^p} \mathds{1}_{\{\nu_n(x) < \alpha\}} - \mathbf{P}\{\nu_n(x) < \alpha\} \, dP_r(x)\Bigr)^2\Bigr\} \\
& = \mathbf{E} \Bigl(\int_{\mathbb{R}^p}\int_{\mathbb{R}^p}[\mathds{1}_{\{\nu_n(x) < \alpha\}} - \mathbf{P}\{\nu_n(x) < \alpha\}][\mathds{1}_{\{\nu_n(x') < \alpha\}} - \mathbf{P}\{\nu_n(x') < \alpha\} ]\, dP_r(x)\,dP_r(x')\Bigr) \\
& = \int_{\mathbb{R}^p}\int_{\mathbb{R}^p} \mathbf{P}\{\nu_n(x) < \alpha, \nu_n(x') < \alpha\} - \mathbf{P}\{\nu_n(x) < \alpha\} \mathbf{P}\{\nu_n(x') < \alpha\} \, dP_r(x)\,dP_r(x') \\
& \leq \int_{\mathbb{R}^p}\int_{\mathbb{R}^p} \min\bigl[\mathbf{P}\{\nu_n(x) < \alpha\},\mathbf{P}\{\nu_n(x') < \alpha\}\bigr] - \mathbf{P}\{\nu_n(x) < \alpha\} \mathbf{P}\{\nu_n(x') < \alpha\} \, dP_r(x)\,dP_r(x'),
\end{align*} 
where we used Fubini's theorem for the final equality.  Similarly to the proof of Theorem~\ref{thm--Bvar}, and letting $T \sim \mathrm{Bin}(B_1,\theta)$ and $T' \sim \mathrm{Bin}(B_1,\theta')$,
\begin{align*}
\int_{\mathbb{R}^p} & \int_{\mathbb{R}^p} \min\bigl[\mathbf{P}\{\nu_n(x) < \alpha\},\mathbf{P}\{\nu_n(x') < \alpha\}\bigr] - \mathbf{P}\{\nu_n(x) < \alpha\} \mathbf{P}\{\nu_n(x') < \alpha\} \, dP_r(x)\,dP_r(x')
\\ & = \int_{[0,1]} \int_{[0,1]}  \min\bigl\{\mathbb{P} (T < B_1\alpha),\mathbb{P} (T' < B_1\alpha)\bigr\} - \mathbb{P} (T' < B_1\alpha)\mathbb{P} (T < B_1\alpha)  \, d{G}_{n,r}(\theta) \, d{G}_{n,r}(\theta')
\\ &  \leq 2\int_{[0,1]} \int_{[0,\theta']}  \mathbb{P} (T' < B_1\alpha)\mathbb{P}(T \geq B_1\alpha) \, d{G}_{n,r}(\theta) \, d{G}_{n,r}(\theta').
\end{align*}
Now, again similarly to the proof of Theorem~\ref{thm--Bvar}, and assuming $B_1$ is large enough that $[\alpha - \epsilon_1,\alpha+\epsilon_1] \subseteq (0,1)$, where $\epsilon_1 := B_1^{-1/2}\log{B_1}$, we have 
\begin{align}
\label{Eq:Step1}
\int_{[0,1]} \int_{[0,\theta']}  \mathbb{P} &(T' < B_1\alpha)\mathbb{P}(T \geq B_1\alpha) \, d{G}_{n,r}(\theta) \, d{G}_{n,r}(\theta') \nonumber \\
&= \int_{[\alpha - \epsilon_1,\alpha + \epsilon_1]} \int_{[\alpha-\epsilon_1,\theta']} \mathbb{P}(T' < B_1\alpha)\mathbb{P}(T \geq B_1\alpha) \, d{G}_{n,r}(\theta) \, d{G}_{n,r}(\theta') + O(B_1^{-M}),
\end{align}
for all $M > 0$, as $B_1 \to \infty$.  By the Berry--Esseen theorem \citep[e.g.][]{Gnedenko:54}, there exists $C>0$ such that
\begin{align*}
& \sup_{|\theta-\alpha|< \epsilon_1} \biggl|\mathbb{P}(T < B_1\alpha) - \Phi\biggl( \frac{{B_1}^{1/2}(\alpha-\theta)}{\sigma}\biggr) \biggr|   \leq \frac{C}{B_1^{1/2}}.
\end{align*}
Hence, for $\theta' \in [\alpha - \epsilon_1,\alpha + \epsilon_1]$, and for large $B_1$,
\begin{equation}
\label{Eq:Step2}
\biggl|\int_{[\alpha-\epsilon_1,\theta']} \mathbb{P}(T \geq B_1\alpha) - \Phi\biggl( - \frac{{B_1}^{1/2}(\alpha-\theta)}{\sigma}\biggr)  \, d{G}_{n,r}(\theta)\biggr| \leq \frac{2C\log B_1}{B_1} \{g_{n,r}(\alpha) + 1\}.
\end{equation}
Now, by similar arguments to those bounding $R_1$ in the proof of Theorem~1, we have that, by a Taylor expansion about $\zeta = \alpha$, there exists $B_0 \in \mathbb{N}$, such that, for all $B_1 > B_0$ and all $\theta, \zeta \in [\alpha - \epsilon_1, \alpha + \epsilon_1]$,
\begin{align*}
& \biggl| \Phi\biggl(\frac{B_1^{1/2}(\alpha - \theta)}{\sqrt{\zeta(1-\zeta)}}\biggr) - \Phi\biggl(\frac{B_1^{1/2}(\alpha - \theta)}{\sqrt{\alpha(1-\alpha)}}\biggr) \biggr|   \leq |\zeta-\alpha| \frac{\log{B_1}}{2\sqrt{2\pi} \{\alpha(1-\alpha)\}^{3/2}}.
\end{align*}
Using this bound with $\zeta = \theta$, we deduce that, for all $B_1$ sufficiently large and for all  $\theta' \in [\alpha - \epsilon_1, \alpha + \epsilon_1]$,
\begin{align}
\label{Eq:Step3}
\biggl|\int_{[\alpha-\epsilon_1,\theta']} \Phi\biggl( - \frac{{B_1}^{1/2}(\alpha-\theta)}{\sigma}\biggr)  - & \Phi\biggl( - \frac{{B_1}^{1/2}(\alpha-\theta)}{\sqrt{\alpha(1-\alpha)}}\biggr)  \, d{G}_{n,r}(\theta) \biggr| 
\nonumber \\ & \leq \frac{\log{B_1}}{2\sqrt{2\pi} \{\alpha(1-\alpha)\}^{3/2}}\int_{[\alpha-\epsilon_1,\alpha + \epsilon_1]}|\theta-\alpha| \, dG_{n,r}(\theta)
\nonumber \\ & \leq \frac{\log^3{B_1}}{2\sqrt{2\pi} \{\alpha(1-\alpha)\}^{3/2} B_1} \{g_{n,r}(\alpha)+1\}.
\end{align} 
Moreover, using the fact that $G_{n,r}$ is continuously differentiable at $\alpha$, we have for large $B_1$ and uniformly for $\theta' \in [\alpha - \epsilon_1, \alpha + \epsilon_1]$ that
\begin{align}
\label{Eq:Step4}
\int_{\alpha-\epsilon_1}^{\theta'} &\Phi\biggl( - \frac{{B_1}^{1/2}(\alpha-\theta)}{\sqrt{\alpha(1-\alpha)}}\biggr)g_{n,r}(\theta) \, d\theta  
\nonumber \\ & = g_{n,r}(\alpha) \int_{\alpha-\epsilon_1}^{\theta'} \Phi\biggl( - \frac{{B_1}^{1/2}(\alpha-\theta)}{\sqrt{\alpha(1-\alpha)}}\biggr) \, d\theta + O(B_1^{-1} \log^2{B_1}) \nonumber \\ 
& = g_{n,r}(\alpha) \biggl\{(\theta'-\alpha) \Phi\biggl(\frac{-B_1^{1/2} (\alpha-\theta')}{\sqrt{\alpha(1-\alpha)}}\biggr)  + \frac{\sqrt{\alpha(1-\alpha)}}{B_1^{1/2}}\phi\biggl(\frac{-B_1^{1/2} (\alpha-\theta')}{\sqrt{\alpha(1-\alpha)}}\biggr) \biggr\} + O(B_1^{-1} \log^2{B_1}).
\end{align}
We deduce from~\eqref{Eq:Step1},~\eqref{Eq:Step2},~\eqref{Eq:Step3} and~\eqref{Eq:Step4} that
\begin{align*}
\int_{[0,1]} & \int_{[0,\theta']} \mathbb{P} (T' < B_1\alpha) \mathbb{P}(T \geq B_1\alpha) \, d{G}_{n,r}(\theta) \, d{G}_{n,r}(\theta') 
\\ & =  {g}_{n,r}^2(\alpha) \int_{\alpha-\epsilon_1}^{\alpha+\epsilon_1} \mathbb{P} (T' < B_1\alpha) \biggl\{(\theta' - \alpha) \Phi\biggl(\frac{-B_1^{1/2} (\alpha-\theta')}{\sqrt{\alpha(1-\alpha)}}\biggr) 
\\ & \hspace{180 pt} +  \frac{\sqrt{\alpha(1-\alpha)}}{B_1^{1/2}}\phi\biggl(\frac{-B_1^{1/2} (\alpha-\theta')}{\sqrt{\alpha(1-\alpha)}}\biggr) \biggr\}  \, d\theta' +  o(B_1^{-1})
\\ & = {g}_{n,r}^2(\alpha) \int_{\alpha-\epsilon_1}^{\alpha+\epsilon_1} \Phi\biggl(\frac{B_1^{1/2} (\alpha-\theta')}{\sqrt{\alpha(1-\alpha)}}\biggr)\biggl\{(\theta' - \alpha) \Phi\biggl(\frac{-B_1^{1/2} (\alpha-\theta')}{\sqrt{\alpha(1-\alpha)}}\biggr) 
\\ & \hspace{180 pt} +  \frac{\sqrt{\alpha(1-\alpha)}}{B_1^{1/2}}\phi\biggl(\frac{-B_1^{1/2} (\alpha-\theta')}{\sqrt{\alpha(1-\alpha)}}\biggr) \biggr\}  \, d\theta'  +  o(B_1^{-1}) 
\\ & = \frac{\alpha(1-\alpha)}{B_1} g_{n,r}^2(\alpha) \int_{-\infty}^{\infty} u \Phi(u)\Phi(-u) + \Phi(-u) \phi(u) \, du + o(B_1^{-1}) = \frac{\alpha(1-\alpha)}{2B_1} {g}_{n,r}^2(\alpha) + o(B_1^{-1}).
\end{align*}
We conclude that 
\begin{align*} 
\mathbf{Var}\{R({C}_n^{\mathrm{RP}})\} & \leq \biggl\{\pi_0\mathbf{Var}^{1/2}\Bigl(\int_{\mathbb{R}^p} \mathds{1}_{\{\nu_n(x) < \alpha\}} \, dP_0(x)\Bigr) + \pi_1 \mathbf{Var}^{1/2}\Bigl(\int_{\mathbb{R}^p} \mathds{1}_{\{\nu_n(x) < \alpha\}} \, dP_1(x)\Bigr)\biggr\}^2
\\ & = \Bigl\{ \pi_0\Bigl(\frac{\alpha(1-\alpha)}{B_1} {g}_{n,0}^2(\alpha)\Bigr)^{1/2}  + \pi_1\Bigl(\frac{\alpha(1-\alpha)}{B_1} {g}_{n,1}^2(\alpha)\Bigr)^{1/2} \Bigr\}^2 + o(B_1^{-1}) 
\\ & = \frac{\alpha(1-\alpha)}{B_1} \bar{g}_{n}^2(\alpha) + o(B_1^{-1}),
\end{align*} 
as required. 
\end{proof}

\section{Further discussion of assumptions}

In this section we investigate empirically assumptions~1~and~2 which, unlike assumption~3, depend on the configuration of the training data pairs.  

\subsection{Assumption~1}
\label{sec--A1} 
Assumption~1 asks that the distribution functions $G_{n,0}$ and $G_{n,1}$ are twice differentiable at $\alpha$.  Our interest here is in showing that the quantity $\gamma_n(\alpha)$, which appears in the conclusion of Theorem~1, is not too large, so that we can regard $\gamma_n(\alpha)/B_1$ as negligible for our recommended choice of $B_1 = 500$.

To this end, we approximate $\gamma_n(\hat{\alpha})$ over 100 training datasets generated from Model 2 as follows:  given a training dataset and a set of $B_1B_2$ projections with $B_1 \in \{50,51,\ldots,500\}$, $B_2 = 50$, we can estimate $R(C_{n}^{\mathrm{RP}})$ using an independent test set of size 1000.  Taking the average estimate over 10 independent sets of $B_1B_2$ projections (with the same training data), yields an estimate $\hat{R}_{B_1}$ of $\mathbf{E}\{R(C_{n}^{\mathrm{RP}})\}$ for the different values of $B_1$.  We then find the least squares estimator $(\hat{a},\hat{b})$ of $(a,b)$ in the model $\hat{R}_{B_1} \sim a + b/B_1$, so that $\hat{b}$ can be regarded as an approximation to $\gamma_n(\hat{\alpha})$.   We took $d=5$, and the cutoff $\hat{\alpha}$ was chosen via the method discussed in Section~5.1 for $B_1= 500$, and then kept fixed as we vary $B_1$. 

In Figure~\ref{Fig:A1}, we present histograms of these approximations to $\gamma_n(\hat{\alpha})$ over 100 realisations training data sets of size 50 for $p \in \{100, 1000\}$.  In this case, we see that $\gamma_n(\hat{\alpha})$ is bounded with high probability by $1$ when $p = 100$.  Therefore, the expansion in Theorem~1 gives us a test error approximation within $0.002$ of the test error of the infinite-simulation version of the RP ensemble classifier.  When $p = 1000$, we find that $\gamma_n(\hat{\alpha}) \leq 8$ with high probability; note that $8/500 < 0.02$.  For a few realisations of the training data, we have $\gamma_n(\hat{\alpha}) < 0$; indeed, while for the particular choice $\alpha = \alpha^*$, we have $\gamma_n(\alpha^*) \geq 0$, there is no reason for $\gamma_n(\alpha)$ to be non-negative for all values of $\alpha$.
 
\begin{figure}[h]
  \centering
\makebox{\includegraphics[width=\textwidth]{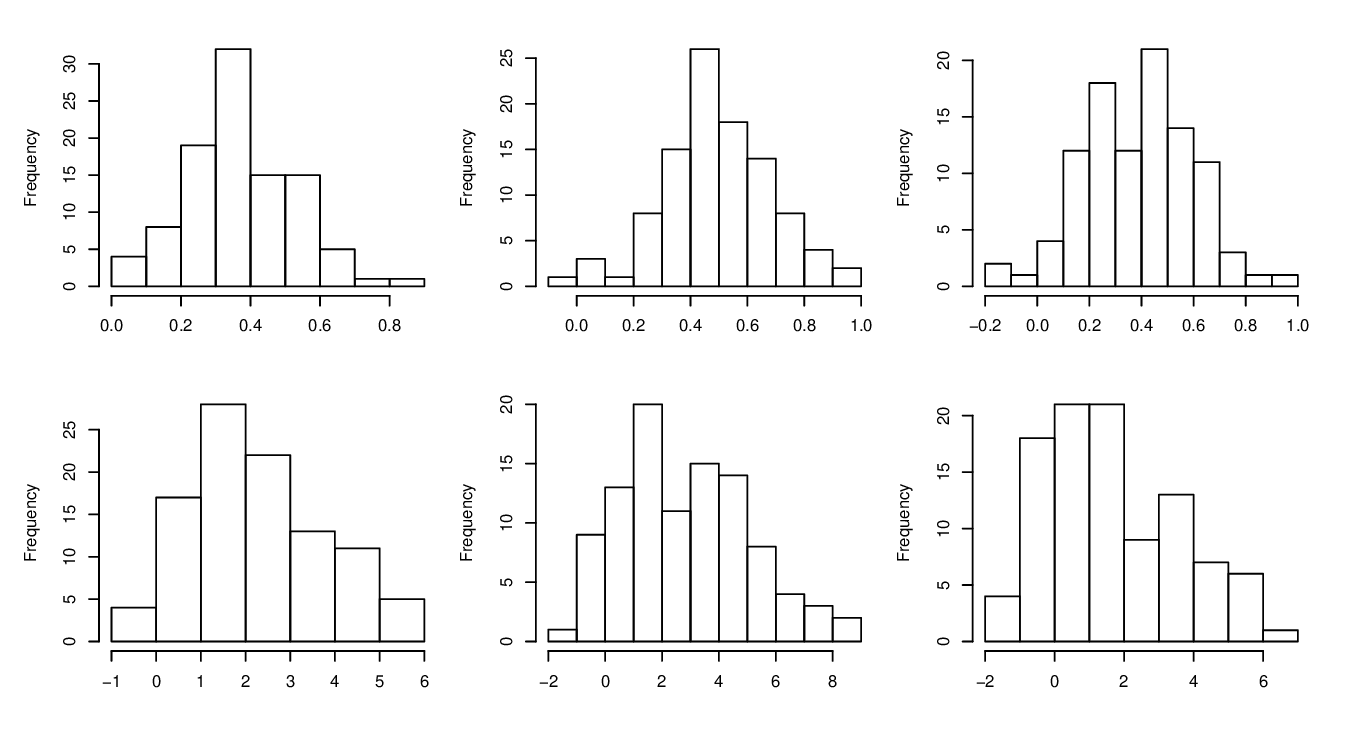}}
 \caption{\label{Fig:A1} Histograms of approximations to $\gamma_n(\hat{\alpha})$ constructed as described in Section~\ref{sec--A1} with $n = 50$, and $p = 100$ (top), $1000$ (bottom).  The base classifiers used were LDA (left), QDA (middle) and $k$nn (right).}
\end{figure}

\subsection{Assumption~2}
\label{sec--A2} 
Assumption~2 asks that there exists $\beta \in (0,1]$ such that $\mathbf{P}\bigl({R}_n^{\mathbf{A}_{1,1}} \leq {R}_n^* + |\epsilon_n| \bigr) \geq \beta$.  Note that if this probability were zero, then it would require both that the set of projections on which the minimum test error estimate is attained had zero Haar measure, and that the method of estimating test error is exactly unbiased as an estimate of the expected test error.

We can also estimate $\mathbf{P}\bigl({R}_n^{\mathbf{A}_{1,1}} \leq {R}_n^* + |\epsilon_n| \bigr)$ empirically as follows: 
\begin{enumerate}
\item First, compute the random projection ensemble classifier using $B_1B_2$ projections, but after selecting each projection, $\{\mathbf{A}_{{b_{1}}}:b_{1} = 1, \ldots, B_{1}\}$, we approximate $R({C}_{n}^{\mathbf{A}_{b_{1}}})$ using an independent validation set of size 1000.  Recalling that we also calculate ${R}_{n}^{\mathbf{A}_{b_{1}}}$ when selecting the projections, let 
\[
\hat{\epsilon}_{n} := \frac{1}{B_{1}} \sum_{b_{1} = 1}^{B_{1}}  \bigl\{R({C}_{n}^{\mathbf{A}_{b_{1}}}) - {R}_{n}^{\mathbf{A}_{b_{1}}}\bigr\}.
\]
\item Let $\hat{R}_n^* := \min_{b_1 ,b_2} R_{n}^{\mathbf{A}_{b_{1}, b_2}}$.
\item Generate a new set of $B_{3}$ independent projections $\mathbf{A}_{1}', \ldots, \mathbf{A}'_{B_{3}}$ having the same distribution as $\mathbf{A}_{1,1}$, and estimate $\mathbf{P}\bigl({R}_n^{\mathbf{A}_{1,1}} \leq {R}_n^* + |\epsilon_n| \bigr)$ by $B_3^{-1}\sum_{b_3=1}^{B_3} \mathds{1}_{\bigl\{{R}_n^{\mathbf{A}'_{b_{3}}} \leq \hat{R}_n^* + |\hat{\epsilon}_n|\bigr\}}$.
\end{enumerate} 

In Figure \ref{fig--A2histogram}, we present histograms of our estimates of $\mathbf{P}\bigl({R}_n^{\mathbf{A}_{1,1}} \leq {R}_n^* + |\epsilon_n| \bigr)$, constructed via the procedure above for data simulated from Model~1. The other parameters are $n = 50$, $d=5$, $B_1 = 500$, $B_2 = 50$ and $B_3 = 1000$. For the LDA base classifier, assumption~2 holds in each of the 100 repeats of the experiment with $\beta > 3/10$.  For the QDA and $k$nn base classifier, we find that $\beta > 0.05$ with high probability.  The results are stable for the different choices of dimension.

Noting that $(1-1/20)^{50} \approx 0.077$ while $(1-1/20)^{100} \approx 0.006$, the bound in Theorem~\ref{Thm:Main} suggests that we should choose $B_2$ to be slightly larger than 50 in this case.  Moreover, $|\epsilon_n|$ typically increases with $B_2$ as well, which gives larger values of $\beta$.  However, as discussed in Section~\ref{sec--B1B2choice}, we found that in practice choosing $B_2 = 50$ was sufficient, and little was gained by increasing it further.  
 
\begin{figure}
  \centering
\makebox{\includegraphics[width=\textwidth]{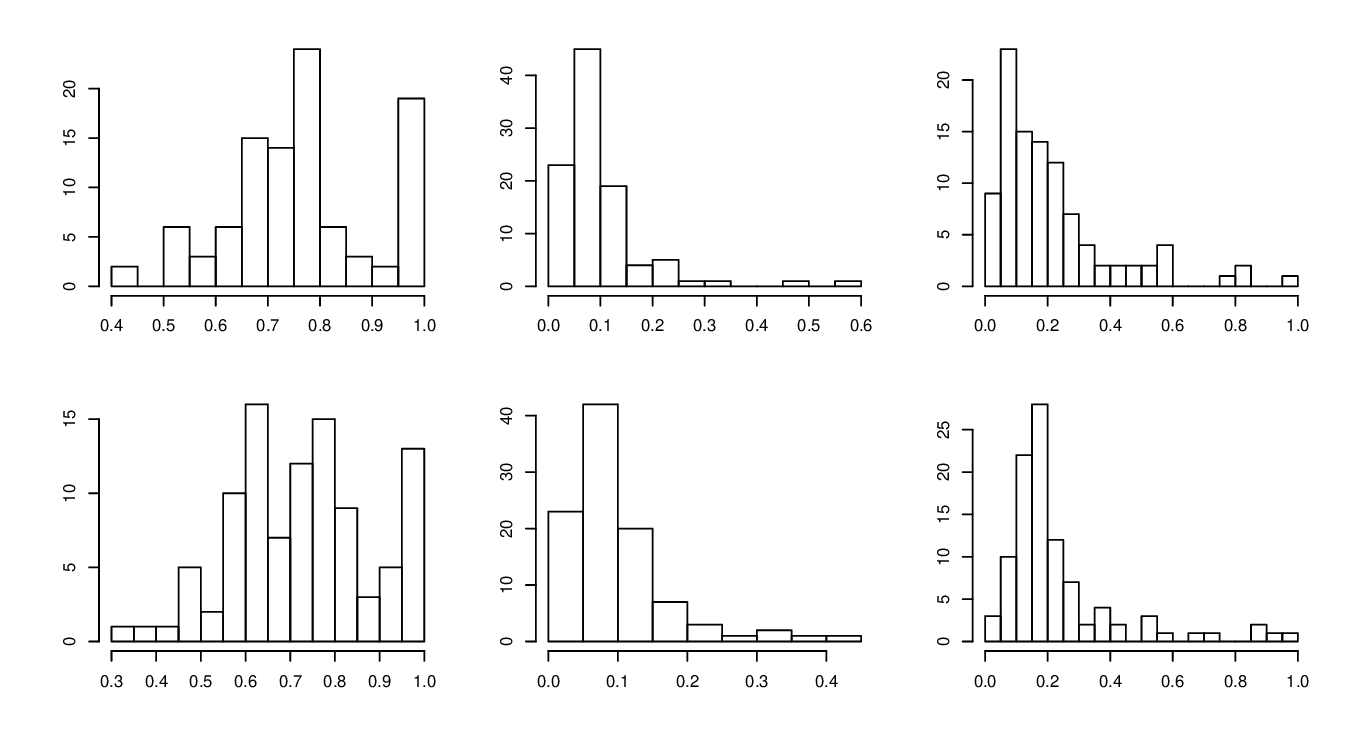}}
 \caption{\label{fig--A2histogram}Histograms of estimates of $\mathbf{P}\bigl({R}_n^{\mathbf{A}_{1,1}} \leq {R}_n^* + |\epsilon_n| \bigr)$ constructed as described in Section~\ref{sec--A2} over 100 simulated training datasets from Model~1, with $n = 50$, and $p = 100$ (top), $1000$ (bottom).  The base classifiers used were LDA (left), QDA (middle) and $k$nn (right).}
\end{figure} 

\section{Choice of $B_1$ and $B_2$} 
\label{sec--B1B2choice}
Here we elaborate on the discussion in Section~\ref{sec--B1B2} of the main text.   In Figures~\ref{fig--Model1.100.1000B1}~and~\ref{fig--Model1.100.1000B2}, we present the risk of the random projection ensemble classifier with LDA, QDA and $k$nn base classifiers for different values of $B_1$ and $B_2$ for data simulated from Model 2 (cf.\ Section~\ref{sec--rotated} of the main text).  We see that, as we expect from our theoretical results, the risk decreases as we increase $B_{1}$ and $B_{2}$.  In fact, increasing $B_{1}$ has a greater effect than increasing $B_{2}$, and it was this observation (which was also observed in other settings) that informed our recommendation of $B_1=500$ and $B_2=50$ as sensible default values.

\begin{figure}
  \centering
  \makebox{\includegraphics[width=\textwidth]{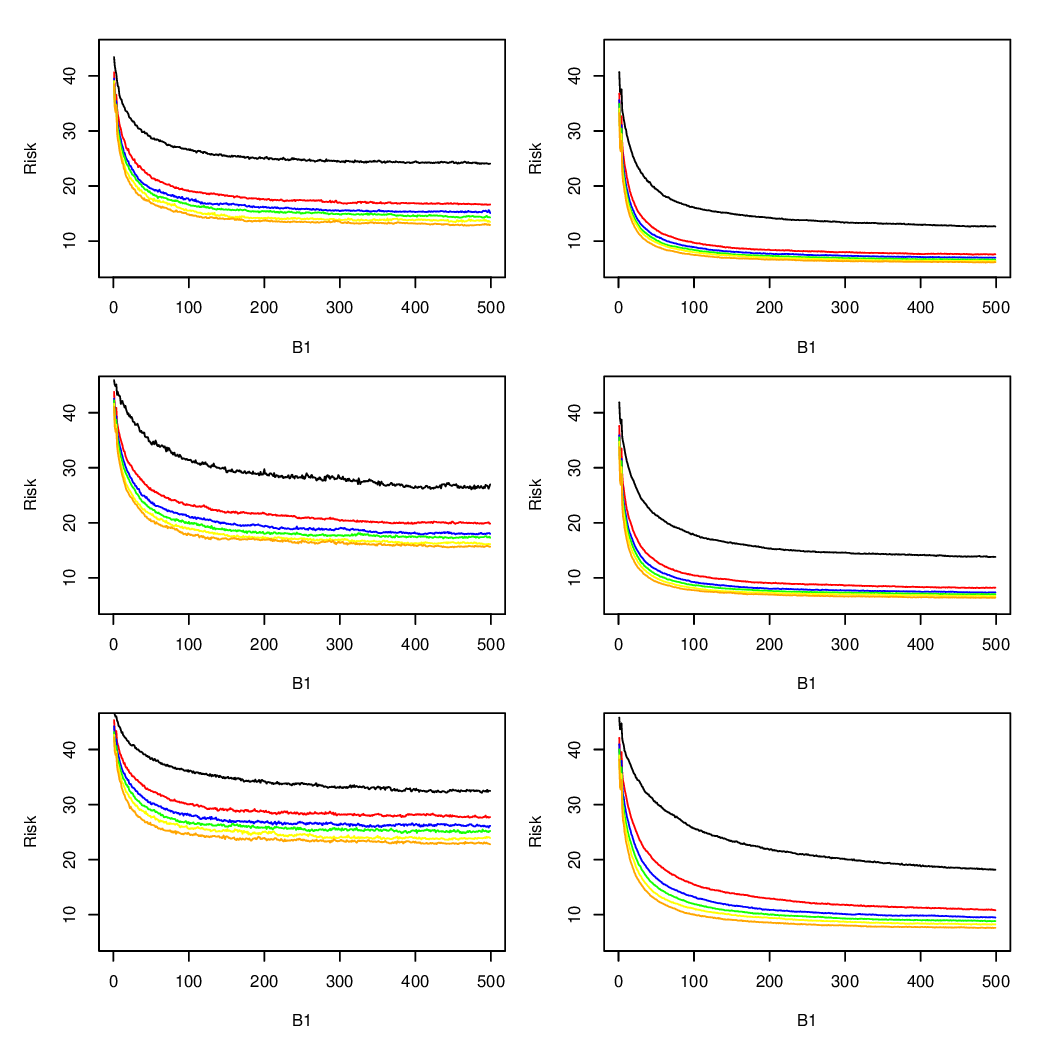}}
 \caption{\label{fig--Model1.100.1000B1}Risk estimates of the random projection ensemble classifier for Model~2 with $n = 100$ (left), $n=1000$ (right), $p=1000$, $\pi_1 = 0.5$.  The base classifier is LDA (top), QDA (middle) and $k$nn (bottom)  and we set $ d = 5$.  Here, $B_1$ varies from $2$ to $500$ on the $x$-axis, with $B_2=1$ (black), $10$ (red),  $25$ (blue), $50$ (green),  $100$ (yellow), $250$ (orange).  The risk was estimated as the average of 100 repeats of the experiment, with an independent test set of size $1000$ in each run.}
\end{figure} 

\begin{figure}
  \centering
  \makebox{\includegraphics[width=\textwidth]{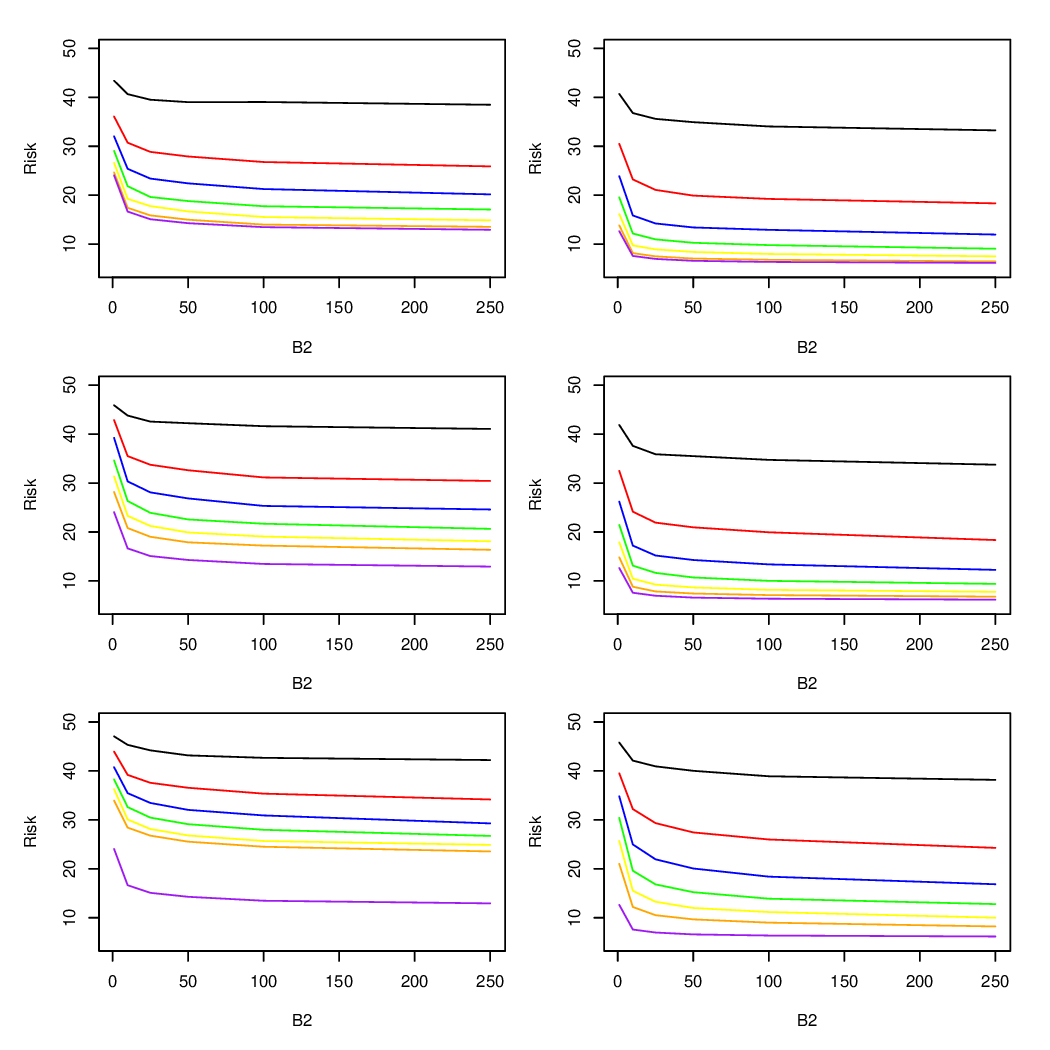}}
 \caption{\label{fig--Model1.100.1000B2}Risk estimates of the random projection ensemble classifier for Model~2 with $n = 100$ (left), $n=1000$ (right), $p=1000$, $\pi_1 = 0.5$.  The base classifier is LDA (top), QDA (middle) and $k$nn (bottom)  and we set $ d = 5$.  Here, $B_2$ varies from $1$ to $250$ on the $x$-axis, with $B_1=1$ (black), $10$ (red),  $25$ (blue), $50$ (green),  $100$ (yellow), $250$ (orange) and $500$ (purple).  The risk was estimated as the average of 100 repeats of the experiment, with an independent test set of size $1000$ in each run.}
\end{figure} 

\section{Further simulation results} 
\label{Sec:FurtherSim}

Here we present the remaining simulation results: the model numbers refer to those in Section~6.1 of the main text, with the covariance matrix $\Sigma$ in Model~1 when $p=1000$ given by $\mathrm{diag}(1,1,1/16,\ldots,1/16)$.

Tables~\ref{tab--rotateddist100.66} and~\ref{tab--indep100.66} examine cases of unbalanced priors, where $ \pi_{1} = 0.66$.  In Tables~\ref{tab--rotateddist1000.5} and \ref{tab--indep1000.5}, we present the results when the dimension of the feature space is $p = 1000$ and where we project into either $d=5$ or $d=10$ dimensions.   The results are consistent with the conclusions drawn in Section~\ref{Sec:Conc} of the main text.

\begin{table}
  \caption{  \label{tab--rotateddist100.66}Misclassification rates for Models 1 and 2, with  $p =100$ and $\pi_{1} = 0.66$.}
  \centering
\fbox{%
  \begin{tabular}{l |c c c | c c c }
       &\multicolumn{3}{c|}{Model 1, Bayes risk = 4.15} &\multicolumn{3}{c}{ Model 2,  Bayes risk = 3.67 }  \\
  \multicolumn{1}{r|}{$n$}  &$50$            &$200$      & $1000$ &$50$            &$200$      & $1000$\\
    \hline
    RP-LDA$_{2}$&$ 46.91 _{ 0.71 }$&$ 34.61 _{ 0.31 }$&$ 34.05 _{ 0.22 }$&$ 7.49 _{ 0.29 }$&$\mathbf{5.03} _{ 0.11 }$&$\mathbf{4.69} _{ 0.10 }$\\
RP-LDA$_{5}$&$ 47.03 _{ 0.65 }$&$ 40.05 _{ 0.61 }$&$ 34.05 _{ 0.22 }$&$\mathbf{6.92}_{ 0.20 }$&$\mathbf{5.03} _{ 0.11 }$&$ \mathbf{4.67} _{ 0.10 }$\\
RP-QDA$_{2}$&$ 40.49 _{ 0.67 }$&$ 33.54 _{ 0.31 }$&$ 32.00 _{ 0.58 }$&$ 7.16 _{ 0.25 }$&$ \mathbf{4.96} _{ 0.11 }$&$ 4.73 _{ 0.10 }$\\
RP-QDA$_{5}$&$ 36.14 _{ 0.55 }$&$\mathbf{22.52} _{ 0.40 }$&$\mathbf{12.17}_{ 0.17 }$&$ 7.57 _{ 0.25 }$&$ 5.18 _{ 0.13 }$&$\mathbf{ 4.61} _{ 0.10 }$\\
RP-$k$nn$_{2}$&$ 39.53 _{ 0.65 }$&$ 33.02 _{ 0.42 }$&$ 32.03 _{ 0.52 }$&$ 7.86 _{ 0.31 }$&$ 5.18 _{ 0.11 }$&$ 4.76 _{ 0.10 }$\\
RP-$k$nn$_{5}$&$ 40.79 _{ 0.62 }$&$ 28.21 _{ 0.56 }$&$\mathbf{12.30} _{ 0.33 }$&$ 7.76 _{ 0.26 }$&$ 5.37 _{ 0.16 }$&$\mathbf{ 4.69} _{ 0.10 }$\\
LDA          &N/A                    &$ 45.74 _{ 0.26 }$&$ 39.26 _{ 0.24 }$&N/A                     &$12.71 _{ 0.21 }$&$ 5.50 _{ 0.11 }$\\
QDA          &N/A&N/A                                   &$ 26.88 _{ 0.23 }$&N/A             &N/A            &$18.08 _{ 0.25 }$\\
$k$nn        &$\mathbf{30.57} _{ 0.32 }$&$ 23.23 _{ 0.26 }$&$ 17.10 _{ 0.23 }$&$12.95 _{ 0.29 }$&$ 8.84 _{ 0.16 }$&$ 6.91 _{ 0.13 }$\\
RF           &$ 36.06 _{ 0.43 }$&$ 34.37 _{ 0.21 }$&$ 34.03 _{ 0.22 }$&$12.55 _{ 0.42 }$&$ 7.61 _{ 0.16 }$&$ 6.04 _{ 0.11 }$\\
Radial SVM   &$ 34.86 _{ 0.43 }$&$ 34.11 _{ 0.20 }$&$ 34.05 _{ 0.22 }$&$30.52 _{ 0.76 }$&$12.24 _{ 0.42 }$&$ 6.01 _{ 0.12 }$\\
Linear SVM   &$ 44.66 _{ 0.39 }$&$ 44.48 _{ 0.25 }$&$ 34.05 _{ 0.22 }$&$ 8.33 _{ 0.20 }$&$ 7.89 _{ 0.16 }$&$ 7.07 _{ 0.13 }$\\
Radial GP    &$ 35.10 _{ 0.38 }$&$ 33.99 _{ 0.21 }$&$ 31.44 _{ 0.22 }$&$19.76 _{ 0.74 }$&$ 8.29 _{ 0.19 }$&$ 5.03 _{ 0.10 }$\\
PenLDA       &$ 42.66 _{ 0.41 }$&$ 41.83 _{ 0.28 }$&$ 38.02 _{ 0.23 }$&$10.07 _{ 0.54 }$&$ 5.88 _{ 0.17 }$&$ 5.17 _{ 0.12 }$\\
NSC          &$ 36.56 _{ 0.42 }$&$ 34.42 _{ 0.21 }$&$ 34.05 _{ 0.22 }$&$12.01 _{ 0.64 }$&$ 6.44 _{ 0.22 }$&$ 5.25 _{ 0.12 }$\\
PenLog       &$ 36.28 _{ 0.53 }$&$ 34.77 _{ 0.26 }$&$ 35.05 _{ 0.23 }$&$10.60 _{ 0.22 }$&$ 7.05 _{ 0.14 }$&$ 5.42 _{ 0.10 }$\\
SDR5-LDA&N/A     &$ 37.80 _{ 0.48 }$&$ 35.31 _{ 0.30 }$&N/A                &$13.37 _{ 0.23 }$&$ 5.66 _{ 0.10 }$\\
SDR5-$k$nn&N/A      &$ 32.22 _{ 0.71 }$&$ 21.83 _{ 1.08 }$&N/A              &$16.57 _{ 0.26 }$&$ 7.12 _{ 0.12 }$\\
OTE          &$ 39.60 _{ 0.51 }$&$ 31.92 _{ 0.70 }$&$ 15.52 _{ 1.03 }$&$17.88 _{ 0.44 }$&$11.62 _{ 0.24 }$&$ 8.56 _{ 0.15 }$\\
ES$k$nn      &$ 43.62 _{ 0.47 }$&$ 43.51 _{ 0.47 }$&$ 44.04 _{ 0.46 }$&$35.52 _{ 0.59 }$&$33.31 _{ 0.62 }$&$32.30 _{ 0.55 }$\\
 \end{tabular}}
\end{table}

\begin{table}
  \caption{  \label{tab--indep100.66} Misclassification rates for Model 3 and 4, with $p = 100$ and $\pi_1 = 0.66$.}
  \centering
\fbox{%
  \begin{tabular}{l|c c c | c c c  }
\hline 
  &\multicolumn{3}{c|}{Model 3, Bayes risk = 1.05}   &\multicolumn{3}{c}{Model 4, Bayes risk = 10.93}  \\ 
  \multicolumn{1}{r|}{$n$}  &$50$       &$200$       & $1000$ &$50$       &$200$       & $1000$ \\
    \hline
RP-LDA$_{2}$  &$ 53.48 _{ 1.59 }$&$ 34.10 _{ 0.21 }$&$ 34.02 _{ 0.22 }$ &$ 40.05 _{ 0.64 }$&$ 34.72 _{ 0.27 }$&$ 34.08 _{ 0.23 }$\\
RP-LDA$_{5}$  &$ 59.14 _{ 0.81 }$&$ 37.27 _{ 0.89 }$&$ 34.02 _{ 0.22 }$&$ 44.28 _{ 0.52 }$&$ 37.75 _{ 0.38 }$&$ 34.20 _{ 0.23 }$\\
RP-QDA$_{2}$  &$ 15.76 _{ 0.91 }$&$ 13.83 _{ 1.16 }$&$ 23.96 _{ 1.38 }$&$ 38.80 _{ 0.80 }$&$ 44.10 _{ 0.65 }$&$ 46.66 _{ 0.49 }$\\
RP-QDA$_{5}$&$ \mathbf{13.75} _{ 0.88 }$&$  \mathbf{4.65} _{ 0.13 }$&$ \mathbf{3.68} _{ 0.09 }$&$ 39.74 _{ 0.60 }$&$ 47.02 _{ 0.49 }$&$ 50.00 _{ 0.31 }$\\
RP-$k$nn$_{2}$&$ 21.80 _{ 1.14 }$&$  9.01 _{ 0.33 }$&$  7.35 _{ 0.41 }$&$ 29.55 _{ 0.81 }$&$ 20.84 _{ 0.37 }$&$ 16.61 _{ 0.17 }$\\
RP-$k$nn$_{5}$&$ 19.31 _{ 0.43 }$&$  6.18 _{ 0.14 }$&$  4.10 _{ 0.10 }$&$ 27.60 _{ 0.71 }$&$ 18.67 _{ 0.29 }$&$ 15.80_{ 0.16 }$\\
LDA          &N/A                          &$ 46.29 _{ 0.24 }$&$ 39.62 _{ 0.22 }$  &N/A                  &$ 34.28 _{ 0.30 }$&$ 30.82 _{ 0.26 }$\\
QDA          &N/A&N/A                 &$ 21.42 _{ 0.24 }$&N/A&N/A                                  &$ 47.47 _{ 0.22 }$\\
$k$nn        &$ 65.23 _{ 0.20 }$&$ 65.53 _{ 0.22 }$&$ 65.70 _{ 0.21 }$   &$ 30.11 _{ 0.53 }$&$ 24.48 _{ 0.32 }$&$ 21.91 _{ 0.21 }$\\
RF           &$ 36.29 _{ 0.39 }$&$ 32.51 _{ 0.27 }$&$ 16.78 _{ 0.33 }$          &$\mathbf{26.06}_{ 0.57 }$&$\mathbf{17.29} _{ 0.22 }$&$15.82_{ 0.17 }$\\
Radial SVM   &$ 32.92 _{ 0.56 }$&$ 22.83 _{ 1.02 }$&$\mathbf{3.70} _{ 0.09 }$&$ 34.62 _{ 0.46 }$&$ 33.91 _{ 0.22 }$&$ 34.05 _{ 0.23 }$\\
Linear SVM   &$ 47.79 _{ 0.28 }$&$ 45.37 _{ 0.25 }$&$ 34.02 _{ 0.22 }$   &$ 32.95 _{ 0.43 }$&$ 31.33 _{ 0.38 }$&$ 34.56 _{ 0.25 }$\\
Radial GP    &$ 39.72 _{ 0.90 }$&$ 37.37 _{ 0.32 }$&N/A   &$ 31.20 _{ 0.62 }$&$ 23.00 _{ 0.30 }$&$ 18.71 _{ 0.17 }$\\
PenLDA       &$ 46.21 _{ 0.32 }$&$ 42.98 _{ 0.25 }$&$ 38.84 _{ 0.21 }$  &$ 50.03 _{ 1.19 }$&$ 35.92 _{ 0.75 }$&$ 34.96 _{ 0.23 }$\\
NSC          &$ 37.58 _{ 0.53 }$&$ 34.47 _{ 0.23 }$&$ 34.05 _{ 0.22 }$ &$ 35.84 _{ 0.47 }$&$ 34.28 _{ 0.23 }$&$ 34.15 _{ 0.23 }$\\
PenLog       &$ 37.52 _{ 0.60 }$&$ 38.14 _{ 0.33 }$&$ 38.03 _{ 0.23 }$&N/A&N/A&N/A\\
SDR5-LDA     &N/A&$ 46.23 _{ 0.26 }$&$ 39.56 _{ 0.21 }$&N/A          &$ 34.89 _{ 0.30 }$&$ 30.79 _{ 0.26 }$\\
SDR5-$k$nn   &N/A&$ 46.23 _{ 0.29 }$&$ 34.15 _{ 0.23 }$&N/A                    &$ 37.98 _{ 0.34 }$&$ 23.91 _{ 0.24 }$\\
OTE          &$ 41.42 _{ 0.46 }$&$ 34.63 _{ 0.26 }$&$ 13.43 _{ 0.20 }$ &$ 28.31 _{ 0.61 }$&$ 18.00 _{ 0.25 }$&$\mathbf{15.43}_{ 0.17 }$\\
ES$k$nn      &$ 44.79 _{ 0.45 }$&$ 42.74 _{ 0.27 }$&$ 41.12 _{ 0.24 }$ &$ 42.26 _{ 0.57 }$&$ 41.22 _{ 0.44 }$&$ 39.54 _{ 0.46 }$\\
 \end{tabular}}
\end{table}

\begin{table}
  \caption{  \label{tab--rotateddist1000.5} Misclassification rates for Models 1 and 2, with $ p =1000$ and $\pi_{1}  = 0.5$.}
  \centering
\fbox{%
  \begin{tabular}{l|c c c | c c c   }
       &\multicolumn{3}{c|}{Model 1, Bayes risk = 4.45 }     &\multicolumn{3}{c|}{Model 2, Bayes risk = 4.09 }  \\
  \multicolumn{1}{r|}{$n$}  &$50$    &$200$          & $1000$  &$50$    &$200$          & $1000$\\
    \hline
RP-LDA$_{5}$&$ 51.84 _{ 0.52 }$&$ 51.23 _{ 0.65 }$&$ 40.17 _{ 0.68 }$&$ 21.34 _{ 0.77 }$&$ 10.20 _{ 0.41 }$&$  6.66 _{ 0.17 }$\\
RP-LDA$_{10}$&$ 51.21 _{ 0.52 }$&$ 52.02 _{ 0.40 }$&$ 44.81 _{ 0.59 }$&$ 20.38 _{ 0.76 }$&$\mathbf{8.31} _{ 0.27 }$&$\mathbf{5.82}_{0.11} $\\
RP-QDA$_{5}$&$ 26.02 _{ 0.67 }$&$ 9.81 _{ 0.25 }$&$ 5.94 _{ 0.12 }$&$ 27.51 _{ 0.79 }$&$ 11.59 _{ 0.42 }$&$  7.04 _{ 0.17 }$\\
RP-QDA$_{10}$&$ 23.64 _{ 0.81 }$&$ 8.00 _{ 0.17 }$&$ \mathbf{5.44} _{ 0.10 }$&$ 26.23 _{ 0.78 }$&$9.62 _{ 0.30 }$&$ 6.17 _{0.12}$\\
RP-$k$nn$_{5}$&$ 33.91 _{ 0.81 }$&$ 11.10 _{ 0.33 }$&$ 6.34 _{ 0.12 }$&$ 32.47 _{ 0.75 }$&$ 17.29 _{ 0.63 }$&$  8.72 _{ 0.24 }$\\
RP-$k$nn$_{10}$&$ 27.01 _{ 0.86 }$&$ 8.45 _{ 0.24 }$&$\mathbf{ 5.52} _{ 0.10 }$&$31.68_{ 0.75 }$&$16.95 _{0.96}$&$8.34_{ 0.19 } $\\
$k$nn        &$ \mathbf{9.97 }_{ 0.26 }$&$ \mathbf{6.61} _{ 0.14 }$&$ 5.75 _{ 0.11 }$&$ 31.04 _{ 0.37 }$&$ 21.19 _{ 0.24 }$&$ 14.66 _{ 0.17 }$\\
RF           &$ 49.99 _{ 0.22 }$&$ 49.98 _{ 0.23 }$&$ 49.70 _{ 0.23 }$&$ 34.84 _{ 0.87 }$&$ 16.79 _{ 0.56 }$&$  9.17 _{ 0.16 }$\\
Radial SVM   &$ 49.94 _{ 0.22 }$&$ 50.01 _{ 0.23 }$&$ 50.08 _{ 0.22 }$&$ 48.51 _{ 0.58 }$&$ 36.03 _{ 1.00 }$&$  6.75 _{ 0.13 }$\\
Linear SVM   &$ 50.04 _{ 0.23 }$&$ 50.19 _{ 0.23 }$&$ 50.08 _{ 0.21 }$&$ \mathbf{13.16} _{ 0.39 }$&$  8.67 _{ 0.13 }$&$  8.64 _{ 0.13 }$\\
Radial GP    &$ 45.69 _{ 0.63 }$&$ 31.79 _{ 0.68 }$&$ 9.62 _{ 0.20 }$&$ 41.81 _{ 0.80 }$&$ 17.43 _{ 0.63 }$&$  6.13 _{ 0.12 }$\\
PenLDA       &$ 49.85 _{ 0.22 }$&$ 49.94 _{ 0.23 }$&$ 49.94 _{ 0.22 }$&$ 33.09 _{ 1.14 }$&$ 21.21 _{ 0.96 }$&$ 14.00 _{ 0.62 }$\\
NSC          &$ 49.78 _{ 0.23 }$&$ 49.97 _{ 0.24 }$&$ 49.55 _{ 0.26 }$&$ 37.58 _{ 0.96 }$&$ 25.58 _{ 0.92 }$&$ 16.64 _{ 0.67 }$\\
PenLog       &$ 49.32 _{ 0.39 }$&$ 48.64 _{ 0.56 }$&$ 49.96 _{ 0.33 }$&$ 29.96 _{ 0.61 }$&$ 13.70 _{ 0.19 }$&$  8.54 _{ 0.12 }$\\
OTE          &$ 49.88 _{ 0.22 }$&$ 48.77 _{ 0.35 }$&$ 42.37 _{ 0.92 }$ &$ 35.02 _{ 0.50 }$&$ 22.23 _{ 0.32 }$&$ 17.96 _{ 0.26 }$\\
ES$k$nn      &$ 49.93 _{ 0.21 }$&$ 50.20 _{ 0.23 }$&$ 49.78 _{ 0.23 }$&$ 48.59 _{ 0.29 }$&$ 48.27 _{ 0.28 }$&$ 47.99 _{ 0.28 }$\\

  \end{tabular}}
\end{table}

\begin{table}
  \caption{  \label{tab--indep1000.5}Misclassification rates for Models 3 and 4, with $p = 1000$ and $\pi_1= 0.5$.}
  \centering
\fbox{%
  \begin{tabular}{l|c c c | c c c }
       &\multicolumn{3}{c|}{Model 3, Bayes risk = 0.00}        &\multicolumn{3}{c}{Model 4, Bayes risk = 6.96 }  \\
         \multicolumn{1}{r|}{$n$}  &$50$        &$200$       & $1000$ &$50$        &$200$       & $1000$ \\
    \hline
RP-LDA$_{5}$  &$ 44.36 _{ 1.30 }$&$ 45.52 _{ 1.55 }$&$ 42.12 _{ 1.64 }$&$ 44.74 _{ 0.47 }$&$ 44.33 _{ 0.67 }$&$ 44.09 _{ 0.75 }$\\
RP-LDA$_{10}$&$ 41.55 _{ 1.05 }$&$45.20 _{1.00}$&$44.61 _{1.25}$&$44.12 _{0.42}$&$40.87_{0.55}$ &$42.00_{0.79}$ \\
RP-QDA$_{5}$  &$\mathbf{1.04}_{ 0.35 }$&$  0.02 _{ 0.01 }$&$\mathbf{0.00}_{ 0.00 }$ &$ 43.60 _{ 0.35 }$&$ 42.21 _{ 0.23 }$&$ 42.50 _{ 0.24 }$\\
RP-QDA$_{10}$&$10.07_ {1.43 }$& $\mathbf{0.01}_{0.004}$&$ \mathbf{0.00}_{0.00} $&$44.17_{0.37}$&$42.09_{ 0.21} $&$42.44 _{ 0.24 }$\\
RP-$k$nn$_{5}$&$ 20.63 _{ 1.81 }$&$  0.73 _{ 0.14 }$&$\mathbf{0.00} _{ 0.00 }$&$ 42.83 _{ 0.53 }$&$ 38.49 _{ 0.43 }$&$ 34.54 _{ 0.32 }$\\
RP-$k$nn$_{10}$&$31.23_{ 1.40 }$& $0.81 _{0.15} $ &$0.05_{0.03}$ &$44.08 _{0.58}$ & $39.20 _{0.61} $& $36.84_{ 0.41}$\\
$k$nn         &$ 50.15 _{ 0.23 }$&$ 50.09 _{ 0.22 }$&$ 49.85 _{ 0.23 }$&$ 48.36 _{ 0.37 }$&$ 47.25 _{ 0.52 }$&$ 48.68 _{ 0.47 }$\\
RF            &$ 47.55 _{ 0.32 }$&$ 18.11 _{ 0.77 }$&$  0.10 _{ 0.02 }$      &$ 40.74 _{ 0.56 }$&$ 27.02 _{ 0.44 }$&$ 21.42 _{ 0.20 }$\\
Radial SVM    &$ 42.03 _{ 1.61 }$&$\mathbf{0.00} _{ 0.00 }$&$\mathbf{0.00} _{ 0.00 }$ &$ 48.37 _{ 0.37 }$&$ 47.22 _{ 0.37 }$&$ 46.74 _{ 0.33 }$\\
Linear SVM    &$ 46.84 _{ 0.26 }$&$ 45.51 _{ 0.23 }$&$ 46.11 _{ 0.23 }$  &$ 41.82 _{ 0.42 }$&$ 35.93 _{ 0.35 }$&$ 34.09 _{ 0.38 }$\\
Radial GP     &$ 49.88 _{ 0.24 }$&$ 49.91 _{ 0.23 }$&$ 49.05 _{ 0.24 }$&$ 45.75 _{ 0.45 }$&$ 40.33 _{ 0.71 }$&$ 30.05 _{ 0.46 }$\\
PenLDA        &$ 47.95 _{ 0.29 }$&$ 47.93 _{ 0.31 }$&$ 46.79 _{ 0.29 }$  &$ 49.72 _{ 0.32 }$&$ 49.45 _{ 0.29 }$&$ 49.23 _{ 0.26 }$\\
NSC           &$ 47.74 _{ 0.30 }$&$ 46.57 _{ 0.28 }$&$ 45.55 _{ 0.29 }$  &$ 49.35 _{ 0.32 }$&$ 48.51 _{ 0.41 }$&$ 48.82 _{ 0.32 }$\\
PenLog        &$ 48.84 _{ 0.29 }$&$ 47.39 _{ 0.25 }$&$ 45.91 _{ 0.24 }$&N/A&N/A&N/A\\
OTE            &$ 47.85 _{ 0.25 }$&$ 30.79 _{ 0.33 }$&$ 4.46 _{ 0.16 }$&$\mathbf{37.66}_{ 0.63 }$&$\mathbf{23.85}_{ 0.27 }$&$\mathbf{20.86}_{ 0.20 }$\\
ES$k$nn        &$ 49.00 _{ 0.28 }$&$ 46.62 _{ 0.25 }$&$ 45.12 _{ 0.23 }$&$ 48.63 _{ 0.32 }$&$ 48.12 _{ 0.30 }$&$ 47.52 _{ 0.30 }$\\
  \end{tabular}}
\end{table}

\section{Computational timings}
In Section~\ref{sec--compcomp} of the main text, we discuss the computational complexity of the random projection ensemble classifier and the scope for incorporating parallel computing.  We plot the average run times for our classifier as we vary the number of processors used in Figure~\ref{fig--Times}, which reveals that it is possible to significantly speed up the procedure using multiple CPUs. 

\begin{figure}
  \centering
  \makebox{\includegraphics[width=\textwidth]{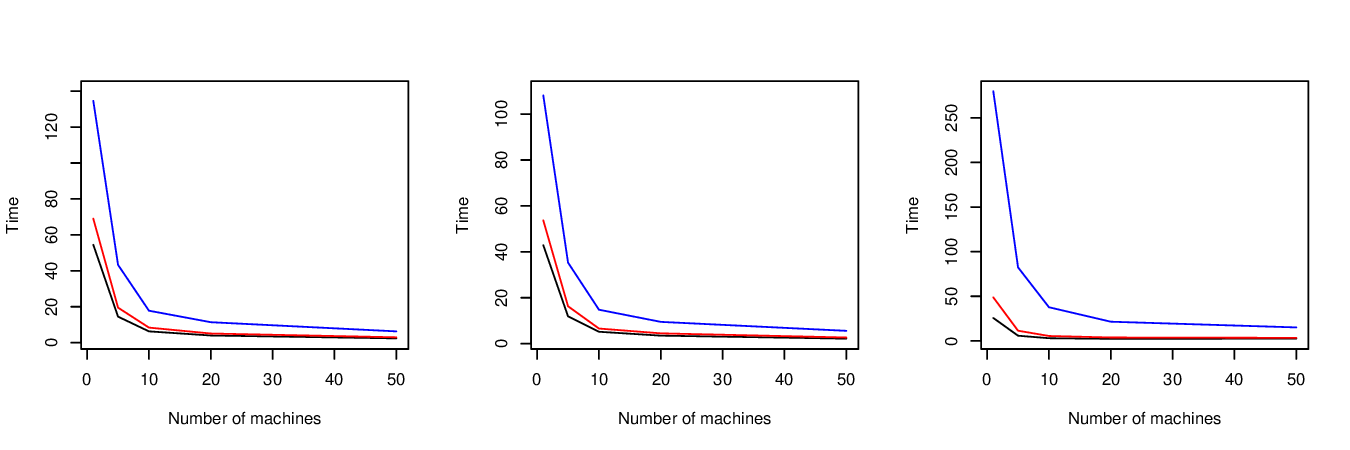}}
 \caption{\label{fig--Times} Average elapsed times (in seconds) for the random projection ensemble classifier with LDA (left), QDA (middle) and $k$nn (right) for Model~2, with $p = 100$, $n = 50$ (black), $200$ (red) and $1000$ (blue).  The other parameters are $d = 5$, $B_1 = 500$ and $B_2 = 50$.}
\end{figure} 

In Table~\ref{tab:computation}, we compare the running time of the random projection ensemble classifier with the alternative methods considered with data from Model~2.  For the random projection ensemble classifiers, we used $B_1 = 500$ and $B_2 = 50$, and present the elapsed time running parallelised code on 50 Intel$^\textsuperscript{\textregistered}$ Core\texttrademark \ i5-4690 CPU @ 3.50GHz [4 CPU] machines with $n \in \{50,200,1000\}$ and $p \in \{100,1000\}$.  For the comparators, we present the elapsed time on one such machine, though in some cases, these times could also be reduced through parallel computing.  The code is written in \texttt{R} and makes use of the \texttt{parallel} package\footnote{see \url{https://stat.ethz.ch/R-manual/R-devel/library/parallel/doc/parallel.pdf}}.  Our \texttt{R} package \texttt{RPEnsemble} \citep{CanningsSamworth:16b} makes it straightforward for the user to set up a computing cluster on which to run the RP procedure in parallel.

Our method is typically slower for the small sample size and dimension, but it scales well with both $n$ and $p$.  For instance, when $n$ increases from 50 to 1000 with $p=100$ for the LDA and QDA classifiers, the computational cost only increases by a factor of about 3.  Similarly, when $n = 200$, and we increase $p$ from 100 to 1000, the run time increases by a factor of less than 5. 

\begin{table}
  \caption{  \label{tab:computation} Average elapsed time (in seconds) for Model 2.}
  \centering
\fbox{%
  \begin{tabular}{l|c c c | c c c }
 &\multicolumn{3}{c|}{$p = 100$}&\multicolumn{3}{c}{$p = 1000$}\\
  \multicolumn{1}{r|}{$n$}    &$50$&$200$ & $1000$ &$50$&$200$ & $1000$\\
    \hline
RP-LDA$_{5}$   & 2.4 & 3.0 & 6.3 & 9.7 & 12 & 23 \\
RP-QDA$_{5}$    & 2.2 & 2.7 & 5.6 & 9.6 & 11 & 22 \\
RP-$k$nn$_{5}$ & 2.6 & 3.4 &15 &10 & 13 & 43 \\ 
LDA                   & N/A & 0.01 & 0.02 & N/A & N/A & N/A  \\
QDA                   & N/A & N/A & 0.03 & N/A & N/A & N/A  \\
$k$nn                 & 0.01 & 0.04 & 0.51 & 0.09 & 0.55 &10 \\
RF                    & 0.16 & 0.60 &3.8 & 0.91 &4.9 &37 \\
Radial SVM            & 0.02 & 0.04 & 0.23 & 0.15 & 0.42 & 3.8 \\
Linear SVM            & 0.01 & 0.02 & 0.10 & 0.14 & 0.33 & 1.7 \\
Radial GP             & 0.04 & 0.10 &4.8 & 0.07 & 0.15 & 8.0 \\
PenLDA                & 0.06 & 0.09 & 0.23 & 0.27 & 0.44 & 1.8 \\
NSC                   & 0.13 & 0.15 & 0.26 & 0.20 & 0.33 & 1.0 \\
PenLog                & 0.78 &1.4 &3.2 &3.9 &21&130\\ 
SDR5-LDA              & N/A & 0.13 &1.1 & N/A & N/A & N/A  \\
SDR5-$k$nn            & N/A & 0.14 &1.1 & N/A & N/A & N/A  \\
OTE                   &1.4 & 3.0&15&4.0 &17&140\\
ES$k$nn               & 0.11 &  0.11&  0.13& 0.14 & 0.15 & 0.21 \\
  \end{tabular}}
\end{table}

\clearpage

{\Large \textbf{Rejoinder to `Random-projection ensemble classification'}}

\medskip

We are very grateful to the discussants for their insightful comments on our work, and are glad to find a broad consensus that methods based on random projections offer considerable promise for high-dimensional data analysis.  The comments are extremely wide-ranging, and we apologise in advance for the fact that, due to space limitations, we are unable to address all of them.  It is clear, however, that there is considerable scope for future research in this area, and we look forward to witnessing and contributing to its development.

\medskip

\emph{Correlation between features}
 
Kent presents an interesting toy example, which focuses on the effect of the correlation between the features.  As we discuss in Figure~1 of the main text, it is usually only sensible to aggregate over carefully selected (rather than all) projections.  Even in Kent's high correlation case ($\rho = 0.99$), where only $5\%$ of projections result in a base classifier with at least half the discriminatory power, we still expect with $B_2=50$ to find such a projection in most blocks.  We carried out a small simulation study on Gaussian class-conditional distributions with $\pi_0=\pi_1=1/2$:
\begin{itemize} 
\item Case 1a: $p =2$, $\rho = 0, \mu_{1} = a_{1}(1,0)^{T}, \mu_{0} = a_{1}(-1,0)^{T}$, where $a_{1}$ is such that the Bayes risk is $14.44\%$;
\item Case 1b: $p =2$, $\rho = 0.99, \mu_{1} = a_{2}(1,-1)^{T}, \mu_{0} = a_{2}(-1,1)^{T}$, where $a_{2}$ is such that the Bayes risk is $14.44\%$.
\end{itemize} 
In Table~\ref{tab--toy} we present the misclassification errors for LDA applied to the original data and the random projection ensemble classifier with $d = 1$, $B_{1} = 500$, $B_{2} = 50$, $n=200$, and both Gaussian and axis-aligned projections.  We also present the average test error of the LDA classifier applied on the chosen projections.   LDA is tailored to these setups, and indeed it performs very well;  the RP-LDA$_1$ classifier has similar performance in both cases.  The extreme correlation ($\rho  = 0.99$) does not greatly affect the performance of the RP-LDA$_{1}$ (Gaussian) classifier; in particular, while the high correlation does have a small effect on the average error base classifier applied on the chosen projections, this is overcome in the ensemble step.  This illustrates what we believe to be the advantage of aggregation over the choice of a single projection (discussed by Carvalho, Page and Barney). 

We now repeat the experiment with $p = 100$, $d = 5$, and all other parameters kept as before.  The class-conditional covariance matrices have ones on the diagonal and $\rho$ on the off-diagonal.
\begin{itemize} 
\item Case 2a: $p =100$, $\rho = 0, \mu_{1} = a_{3}(1,0, \ldots, 0)^{T}, \mu_{0} = a_{3}(-1,0, \ldots, 0)^{T}$, where $a_{3}$ is such that Bayes risk $= 14.44\%$ 
\item Case 2b: $p =100$, $\rho = 0.99, \mu_{1} = a_{4}(1,-1,0, \ldots,0)^{T}, \mu_{0} = a_{4}(-1,1,0, \ldots, 0)^{T}$, where $a_{4}$ is such that Bayes risk $= 14.44\%$ 
\end{itemize} 

Here, the sample covariance matrix is ill-conditioned, so LDA performs poorly, and the random projection ensemble classifier offers considerable improvement.   Except in Case 1b, Assumption~3 holds with an axis-aligned projection.  The axis-aligned version performs better here since we restrict the set of projections, so we have a greater chance of finding good ones.  However, in Case 1b there is no axis-aligned projection that results in a classifier significantly better than a random guess, and the resulting random projection ensemble classifier is also close to a random guess. 

\medskip

\emph{Methodological variations}

Many discussants suggested alternatives to our basic methodological proposal.  These included the assignment of weights to the selected projections, based on their empirical performance (Chen and Shah; Feng; Zhang; Josh, Fan and James), choosing projections via projection pursuit (Janson), consideration of the underlying algebraic and topological structure (Stehl\'ik and St\v{r}elec), decoupling rotation and dimension reduction (Blaser and Fryzlewicz) or averaging over class probability estimates rather than classifiers (Gneiting and Lerch).  These are attractive and sensible ideas, though similarly to Chen and Shah, we found in our experiments that more sophisticated weighting schemes led to only relatively minor (if any) improvements.  One advantage of our proposal is that it is amenable to theoretical understanding, through the independence of the selected projections, conditional on the training data.  Meanwhile, Tomal, Welch and Zamar highlight their ensemble of phalanxes method, where features are clustered hierarchically into subsets, Casarin, Frattorolo and Rossini and Stander and Dalla Valle suggest copula-based discriminant analysis and Tong discusses neural network approaches, which are also attractive but currently seem less amenable to theoretical understanding.

\begin{table}
\caption{\label{tab--toy}Misclassification rates for the Gaussian toy example.}
\centering
\fbox{%
\begin{tabular}{l | c | c c | c  c }
& &\multicolumn{2}{c|}{Gaussian} &\multicolumn{2}{c}{axis-aligned}\\
& LDA       &RP-LDA$_{d}$ &  $\frac{1}{B_{1}} \sum_{b_{1} =1}^{B_{1}} R(C_{n}^{\mathbf{A}_{b_{1}}})$ & RP-LDA$_{d}$ &  $\frac{1}{B_{1}} \sum_{b_{1} =1}^{B_{1}} R(C_{n}^{\mathbf{A}_{b_{1}}})$ \\
\hline
Case 1a &$14.2_{0.2}$&$15.1_{0.4}$&$15.3_{0.5}$&$14.2_{0.3}$&$14.2_{0.3}$\\
Case 1b &$14.8_{0.3}$&$15.3_{0.3}$&$17.6_{0.3}$&$47.1_{0.4}$&$47.1_{0.4}$\\
\hline
Case 2a & $27.1_{0.8}$&$19.7_{0.6}$&$38.4_{0.3}$&$14.9_{0.6}$&$18.0_{0.4}$\\
Case 2b & $27.7_{0.9}$&$21.6_{0.9}$&$38.8_{0.3}$&$19.4_{0.8}$&$25.1_{0.3}$\\
\end{tabular}}
\end{table}

Some contributors discussed the axis-aligned version of our proposal in more detail (Janson; Ling, Yang and Xue).   Another popular alternative was to generate the projections from different distributions with the aim of finding \emph{good} projections more efficiently (Blaser and Fryzlewicz; Zhang; Derenski, Fan and James).  Other ideas included choosing new projections to be dissimilar to those already chosen; either orthogonal (Feng) or by adding some similarity penalty (Lu and Xue).  We remark that, in our experience and in high dimensions, the selected projections tend to be nearly orthogonal anyway.  Thulin suggests including a random rescaling when generating the projections; on the other hand, both Critchley and Durrant discuss deterministic rescaling or standardising of the variables.  While one could construct examples where such renormalisation would lead to poor performance, these ideas are certainly worth investigating further.

Our paper focuses on zero-one error loss, where the two types of misclassification are assumed equally serious. As pointed out by both Hand and Tong and Li, in practice it is often the case that one type of error is more serious than the other.  Suppose now that for some $m > 0$,
\[
R(C) = \pi_1 \int_{\mathbb{R}^p} \mathds{1}_{\{C(x) = 0\}} \, dP_1(x) + m\pi_0 \int_{\mathbb{R}^p} \mathds{1}_{\{C(x) = 1\}} \, dP_0(x),
\]
so that assigning a class zero observation to class one is $m$ times more serious than the other error.  Three modifications should be made to the methodology.  First, the base classifier should target the misclassification imbalance; for example, for LDA the projected data base classifier would be
\[
C_{n}^{A - \mathrm{LDA}}(x) := \left\{ \begin{array}{l l} 
1 & \text{if} \ \log\bigl(\frac{\hat{\pi}_{1}}{m\hat{\pi}_{0}} \bigr) + \bigl(A x - \frac{\hat{\mu}_{1}^{A} + \hat{\mu}_{0}^{A}}{2}\bigr)^{T} \hat{\Omega}^{A}(\hat{\mu}_{1}^{A} - \hat{\mu}_{0}^{A}) \geq 0, \\
0 & \text{otherwise.} \end{array} \right.
\]
Second, the projections should be selected based on the corresponding weighted estimate (cf. (7) in the main text), 
for example using the training error 
\[
R_n^A := \frac{1}{n_{1} + m n_{0}} \biggl\{ \sum_{\{i: Y_i = 1\}} \mathds{1}_{\{C_n^{A}(X_i) = 0\}} +   m \sum_{\{i: Y_i = 0\} } \mathds{1}_{\{C_n^{A}(X_i) = 1\}} \biggr\}.   
\]
Finally, $\alpha$ should be chosen to mimic the weighted version of equation (5), i.e.
\[
\alpha^* = \argmin_{\alpha' \in [0,1]} \Bigl[  \pi_1 G_{n,1} (\alpha') + m \pi_0 \{1 - G_{n,0}(\alpha') \} \Bigr].
\]

\medskip

\emph{Theoretical extensions}

Several discussants (Critchley; Fan and Zhu; Feng; Kong; Shi, Song and Lu; Tong and Li; Wang and Leng) comment on our theoretical assumptions, and in particular the quantity $\beta$ in our Assumption~2.  Since the training data are considered fixed in the corresponding section of the paper, $\beta$ can depend on the training data (and therefore $n$ and $p$).  In the online supplement, we show that in practice we can typically expect Assumption~2 to hold with $\beta$ not too small.  We see in particular that increasing $p$ does not necessarily lead to $\beta \searrow 0$ (recall that the Johnson--Lindenstrauss Lemma guarantees that, regardless of the magnitude of $p$, we can reduce dimension from $p$ to $O(\log n)$ while nearly preserving pairwise distances).  

Assumption~3 is at the population level.  A natural relaxation is to assume that the oracle projection $A^*$ does not perfectly preserve the class information, but instead allow for a region where the projected classifier disagrees with the Bayes classifier.  This can be formalised through the existence of a projection $A^* \in \mathcal{A}$ and $\tau \geq 0$ such that 
\[
P_X(\{x \in \mathbb{R}^p: \eta(x) \geq 1/2\} \triangle \{x \in \mathbb{R}^p: \eta^{A^*}(A^*x) \geq 1/2\}) = \tau.
\]
Then, by a straightforward extension to Proposition 2, we have that $R(C^{\mathrm{Bayes}}) \leq R(C^{A^* - \mathrm{Bayes}}) \leq R(C^{\mathrm{Bayes}}) + \tau$.

Bing and Wegkamp suggest a possible alternative approach to our theoretical analysis, which involves regarding the RP classifier as a \emph{plug-in} rule with $\nu_{n}(x) + 1/2 - \alpha$ acting as an estimate of $\eta(x)$.  We have found that $\nu_{n}(x)$ is not a good estimate of $\eta(x)$ (even with the suggested bias correction), though it would be interesting to find conditions under which we can hope to estimate $\eta$ using our RP methodology (cf.~Gneiting and Lerch).  

\medskip

\emph{Numerical comparisons}

We welcome the contributions which added to our numerical work, aiding the understanding of the practical properties of the random projection ensemble classifier.  For instance, Gallaugher and McNicholas compare with mixture discriminant analysis, while Stander and Dalla Valle apply the random projection ensemble classifier to a trip advisor dataset.  

Hennig and Viroli found that our proposal performed poorly compared with their quantile-based classifier in two of their setups.  In their Setup 2, Class 1 has $p$ independent, log-normal components, whereas (in the $100 q\%$ signal variables case) class 0 has $p$ independent components, $qp$ log-normal components shifted by 0.2, and $(1-q)p$ log-normal components.  A key characteristic of the data in this setup is that all variables are skewed and positive.  In this example, our Assumption~3 does not hold for $d = 5$, and in fact the best low-dimensional projection has high test error (compared with a Bayes risk of almost zero when $q=1$).  Nevertheless, we can check for skewness and include a marginal logarithmic transformation as a preprocessing step in this instance.  In Table~\ref{tab--Quantile}, we present error rates when data are generated from Hennig and Viroli's Setup~2 with $p =100$, $n =50$, and we take componentwise logarithms of the data before applying the RP methodology.  For reference we also present the performance of the Quantile based methods (QCG, QCS) from Hennig and Viroli's discussion.   Our transformation works very well when $q =1$ (it should be noted that many of the other methods discussed by Hennig and Viroli may also benefit from this preprocessing).  In the case $q=0.1$ and when $n$ is this small, the problem is very challenging and all methods struggle; in particular, we are unable to retain many of the signal projections because our overfitting term $\epsilon_n$ is large.  

Bergsma and Jamil only use $B_{1} = 30$, $B_{2} =  5$ when using the RP methodology in conjunction with Gaussian process regression with fractional Brownian motion for reasons of computational cost.  We have found that larger values of $B_1$ and $B_2$ give considerably better results, but fortunately simple (and quick to compute) base classifiers usually suffice.  Hand suggests a comparison with a weighted $k$-nearest neighbour classifier.  One option is the bagged nearest neighbour classifier, which is essentially a weighted nearest neighbour classifier with geometrically decaying weights \citep{HallSamworth2005,BiauDevroye2010}.  An alternative is to use the optimal weighting scheme, which produces an asymptotic improvement of $5-10\%$ in excess risk over the unweighted $k$-nearest neighbour classifier when $d \leq 15$ \citep{Samworth2012}.  It would be interesting to see if similar improvements are obtained when used in conjunction with the RP methodology.

\begin{table}
\caption{\label{tab--Quantile}Misclassification rates for the random projection ensemble classifier for Setup 2 with log preprocessing ($B_{1} = 500$, $B_{2} = 50$, Gaussian projections).}
\centering
\fbox{%
\begin{tabular}{c | c c c | c c}
& RP-LDA$_{5}$ & RP-QDA$_{5}$ & RP-$k$nn$_{5}$ & QCG & QCS \\
\hline
$q =1$&$18.4_{0.9}$&$12.6_{1.6}$&$16.4_{2.2}$&$25.7$&$21.3$ \\
                                 $q = 0.1$&$46.7_{0.7}$&$46.4_{0.6}$&$46.1_{0.5}$&$44.3$&$41.5$\\
                                 \hline
\end{tabular}}
\end{table}

\medskip

\emph{Other statistical problems}

It was particularly pleasing to see many contributions that discuss using the random projection ensemble framework to tackle other high-dimensional statistical problems.  Several contributors suggested ways in which the information in the chosen projections can be aggregated to provide measures of variable importance (Fortunato, Anderlucci and Montanari; Derenski, Fan and James; Gataric).  Li and Yu, Critchley and Murtagh and Contreas considered clustering (unsupervised learning) problems, where the labels of the training data are unknown.  Here we require both a (sample) measure of the performance of the base method in order to select the projections analogously to~(7) in the main text, and a suitable method for aggregating the chosen projections.  Fan and Zhu discuss the use of random projections for the estimation of the top $k$ left singular space of a data matrix; the result they state together with an appropriate version of Wedin's theorem \citep{Wedin1972,YWS2015,Wang2016} may allow the control of the sine angle distance they seek.  Other interesting new directions discussed include interaction network learning (Demirkaya and Lv), regression (Kong; Shin, Zheng and Wu), feature detection (Mateu) and estimation of central subspaces in the context of sufficient dimension reduction (Sabolov\'a and Marriott).

\medskip

\emph{Which random projection ensemble classifier?} 

We are grateful to Switzer for pointing out two early references to the use of random projections for classification.  As noted by some discussants (Hand; Hennig and Viroli; Critchley), the flexibility offered by our random projection ensemble classification framework naturally poses the question of when a particular base classifier should be used (of course, analogous questions arise regardless of whether methods are used in conjunction with random projections).  If no natural choice is suggested from understanding of the data generating mechanism, one possible approach is to randomise the choice of base classifier for each projection, say choosing between LDA, QDA and $k$nn, each with probability $1/3$.  Alternatively, we can try all three base methods on each projection and retain the projection, base method pair that minimises the leave-one-out error estimate.  If one of these three original classifiers is clearly best, then it should emerge as the `winner' within most blocks of $B_2$ projections.  This strategy therefore provides additional robustness, and the theory goes through unchanged for these versions of the random projection ensemble classifiers.  Post-pruning, as suggested by Fortunato, is another option, but we do not pursue that here.  We implement both methods proposed above (denoted RP-Random$_{d}$ and RP-Min$_{d}$, respectively) in a small simulation study, summarised in Table~3, where the model numbers refer to the settings described in Section 6.1.  For Models 2, 3 and 4, the risks of both variants of the classifier are comparable to (or better than) that of the best performing choice of base method.  For Model 1 there is only a slight deterioration in performance.  Taking these ideas further, and addressing comments from Bing and Wegkamp, Critchley and Liu and Cheng, one could even add randomisation over $d$ and/or Gaussian/axis-aligned projections.

\begin{table}
\caption{\label{tab--base}Misclassification rates for the randomised and selected base classifier variants, with $p = 100$, $n = 200$, $B_{1} = 500$, $B_{2} = 50$, $d = 5$ and Gaussian projections.  For comparison, in the bottom row we present the risk of the best performing version of the random projection ensemble classifier as seen in Section 6.1 of the main text.}
\centering
\fbox{%
\begin{tabular}{l | l l l l }
& Model 1 & Model 2 & Model 3 & Model 4 \\
\hline
RP-Random$_{5}$&$26.2_{0.7}$&$6.0_{0.2}$&$3.6_{0.1}$&$23.6_{0.5}$\\
RP-Min$_{5}$        &$23.6_{0.7}$&$6.1_{0.3}$&$3.7_{0.2}$&$23.9_{0.6}$\\
\hline
Best RP               &$22.32_{0.32}$&$5.58_{0.12}$&$4.23_{0.14}$&$24.02_{0.30}$\\
\end{tabular}}
\end{table}

\medskip

\emph{Ultrahigh dimensional problems}

Tong discusses the applicability of our random projection methodology in contemporary machine learning problems.  He correctly points out that some modern datasets have potentially millions of features and observations, far larger than the problem sizes we investigate in our numerical studies in Section~6.   Of course, the fact that such large datasets exist does not mean that we should neglect the (still relevant) smaller problems.  Moreover, in ultrahigh-dimensional problems it is often reasonable to assume that only a subset of the features are relevant. Indeed, many studies of such problems focus on reducing the data dimension by attempting to screen out the noise variables \citep[e.g.][]{FanLv2008,FSW2009,MeinshausenBuhlmann2010,ShahSamworth2013}.  If high dimension is still a problem, another common technique is to use a single random projection \citep[e.g.][]{Achlioptas2003} into a lower dimensional space.  Either or both these techniques can be used as a preprocessing step to give thousands, say, rather than millions of features, and then the RP methodology can be applied.  In fact, in the \citet{Dahl2013} paper cited by Tong, in order to make the problem more manageable, the authors apply feature screening and a sparse random projection to reduce dimension to 4000, before applying a neural net classifier.  

\medskip

\emph{Responses to direct questions}

Gallaugher and McNicholas seek clarification about our real data settings -- we used the Hill-Valley data set without noise, pooled the training and test sets, then subsampled at random our own training and test sets as described in Section~6.2.  The missing values in the Mice dataset were imputed as the sample average value for that feature for the non-missing entries.  Kong asks why the performance improves as $p$ increases for Model 1.  One reason  is that, while the signal is the same (the Bayes risk is 4.45\% in both cases), the variance of the noise components is reduced in the higher-dimensional setting; see also the explanation of Yatracos.  In answer to Zhang, penalised logistic regression does not perform well in Setting 1 because, despite the fact the model is highly sparse (only two features are relevant for classification), the class boundaries are non-linear.  Stander and Dalla Valle ask whether it is possible to quantify classification uncertainty using $C_n^{\mathbf{A}_1},\ldots,C_n^{\mathbf{A}_{B_1}}$.  Regarding the training data as fixed and having observed $\nu_n(x)=t < \alpha$, say, one can indeed obtain a simple bound on the probability of observing $\nu_n(x)$ at least as small as $t$ when $C_n^{\mathrm{RP}^*}(x) = 1$ (a kind of `$p$-value'), via the fact that $\nu_n(x) \sim B_1^{-1}\mathrm{Bin}\bigl(B_1,\mu_n(x)\bigr)$.


\begin{thebibliography}{}
\bibitem[{Ailon and Chazelle(2006)}]{Ailon:06} Ailon, N. and Chazelle, B. (2006). Approximate nearest neighbours and the fast Johnson--Lindenstrauss transform. \newblock \emph{Proceedings of the Symposium on Theory of Computing}, pp.~557--563.

\bibitem[{Bickel and Levina(2004)}]{Bickel:04} Bickel, P.~J. and Levina, E. (2004). Some theory for {F}isher's linear discriminant function, `naive {B}ayes', and some alternatives when there are more variables than observations.
\newblock \emph{Bernoulli}, \textbf{10}, 989--1010.

\bibitem[{Blaser and Fryzlewicz(2015)}]{BlaserFryzlewicz2015}Blaser, R. and Fryzlewicz, P. (2015). Random rotation ensembles.
\newblock \emph{J. Mach. Learn. Res.}, \textbf{17}, 1--26.

\bibitem[{Breiman(1996)}]{Breiman:96} Breiman, L. (1996). Bagging Predictors.
\newblock \emph{Machine Learning}, \textbf{24}, 123--140.

\bibitem[{Breiman(2001)}]{Breiman2001} Breiman, L. (2001).  Random Forests.
\newblock \emph{Machine Learning}, \textbf{45}, 5--32.

\bibitem[{Breiman et~al.(1984)Breiman, Friedman, Stone and Olshen}]{Breiman:84} Breiman, L., Friedman, J., Stone, C.~J. and Olshen, R.~A. (1984). \emph{Classification and Regression Trees}.
\newblock {Chapman and Hall}, New York.

\bibitem[{Cannings and Samworth(2016)}]{CanningsSamworth2016b} Cannings, T. I. and Samworth, R. J. (2016). \texttt{RPEnsemble}: Random projection ensemble classification. \newblock \texttt{R} package, v. 0.3. \url{https://cran.r-project.org/web/packages/RPEnsemble/index.html}

\bibitem[{Cannings and Samworth(2017)}]{CanningsSamworth:16} Cannings, T. I. and Samworth, R. J. (2017). Random-projection ensemble classification: supplementary material.  \newblock \emph{Available online}.

\bibitem[{Chikuse(2003)}]{Chikuse:03} Chikuse, Y. (2003). \emph{Statistics on Special Manifolds.} Lecture notes in statistics, volume 174.
\newblock {Springer--Verlag}, New York. 

\bibitem[{Cook(1998)}]{Cook:98} Cook, R.~D. (1998). \emph{Regression Graphics: Ideas for
Studying Regressions through Graphics}.
\newblock {Wiley}, New York.

\bibitem[{Cortes and Vapnik(1995)}]{Cortes:95} Cortes, C. and Vapnik, V. (1995). Support-vector networks.
\newblock \emph{Machine Learning}, \textbf{20}, 273--297.

\bibitem[{Dasgupta(1999)}]{Dasgupta:99} Dasgupta, S. (1999). Learning mixtures of Gaussians.
\newblock \emph{Proc. 40th Annual Symposium on Foundations of Computer Science}, 634--644.

\bibitem[{Dasgupta and Gupta(2002)}]{Dasgupta:02} Dasgupta, S. and Gupta, A. (2002). An elementary proof of the Johnson--Lindenstrauss Lemma.
\newblock \emph{Random Struct. Alg.},  \textbf{22}, 60--65.

\bibitem[{Devroye and Wagner(1976)}]{DevroyeWagner1976}Devroye, L. P. and Wagner, T. J. (1976). A distribution-free performance bound in error estimation.
\newblock \emph{IEEE Trans. Info. Th.}, \textbf{22}, 586--587.

\bibitem[{Devroye and Wagner(1979)}]{DevroyeWagner:1979}Devroye, L. P. and Wagner, T. J. (1979). Distribution-free inequalities for the deleted and hold-out error estimates.
\newblock \emph{IEEE Trans. Info. Th.}, \textbf{25}, 202--207.

\bibitem[{Devroye et~al.(1996)Devroye, Gy\"orfi and Lugosi}]{PTPR:96} Devroye, L., Gy\"orfi, L. and Lugosi, G. (1996). \emph{A Probabilistic Theory of Pattern Recognition}. \newblock {Springer}, New York.

\bibitem[{Durrant and Kab\'{a}n(2013)}]{Durrant:13} Durrant, R.~J. and Kab\'{a}n, A. (2013). Sharp generalization error bounds for randomly-projected classifiers.
\newblock \emph{J. Mach. Learn. Res. W \& CP}, \textbf{28}, 693--701.

\bibitem[{Durrant and Kab\'{a}n(2015)}]{Durrant:15} Durrant, R.~J. and Kab\'{a}n, A. (2015). Random projections as regularizers: learning a linear discriminant from fewer observations than dimensions. 
\newblock \emph{Machine Learning}, \textbf{99}, 257--286.

\bibitem[{Efron(1975)}]{Efron:75} Efron, B. (1975). The efficiency of logistic regression compared to normal discriminant analysis. 
\newblock \emph{J. Amer. Statist. Assoc.}, \textbf{70}, 892--898.

\bibitem[{Esseen(1945)}]{Esseen:45} Esseen, C.-G. (1945). Fourier analysis of distribution functions. A mathematical study of the Laplace--Gaussian law.
\newblock \emph{Acta Mathematica}, \textbf{77}, 1--125.

\bibitem[{Fan and Fan(2008)}]{Fan:08} Fan, J. and Fan, Y. (2008). High-dimensional classification using features annealed independence rules.
\newblock \emph{Ann. Statist.}, \textbf{36}, 2605--2637.

\bibitem[{Fan et~al.(2012)Fan, Feng and Tong}]{Fan:12}Fan, J., Feng, Y. and Tong, X. (2012). A road to classification in high dimensional space: the regularized optimal affine discriminant.
\newblock \emph{J. Roy. Statist. Soc., Ser. B.}, \textbf{72}, 745--771.

\bibitem[{Fisher(1936)}]{Fisher:36} Fisher, R.~A. (1936). The use of multiple measurements in taxonomic problems.
\newblock \emph{Annals of Eugenics}, \textbf{7}, 179--188.

\bibitem[{Fix and Hodges(1951)}]{Fix:51} Fix, E. and Hodges, J.~L. (1951). Discriminatory analysis -- nonparametric discrimination: Consistency properties.
\newblock Technical Report 4, USAF School of Aviation Medicine, Randolph Field, Texas.

\bibitem[{Friedman(1989)}]{Friedman:89}Friedman, J. (1989). Regularized discriminant analysis.
\newblock \emph{J. Amer. Statist. Assoc.}, \textbf{84}, 165--175.

\bibitem[{Goeman et~al.(2015)Goeman, Meijer and Chaturvedi}]{Goeman:15} Goeman, J., Meijer, R. and Chaturvedi, N. (2015). \texttt{penalized}: {L1} (lasso and fused lasso) and {L2} (ridge) penalized estimation in {GLM}s and in the {Cox} model. \newblock \texttt{R} package version 0.9-45, \url{http://cran.r-project.org/web/packages/penalized/}

\bibitem[{Gnedenko and Kolmogorov(1954)}]{Gnedenko:54} Gnedenko, B.~V. and Kolmogorov, A.~N. (1954). \emph{Limit Distributions for Sums of Independent Random Variables}, \newblock Addison-Wesley, Cambridge MA.

\bibitem[{Gul et~al.(2015)Gul, Perperoglou, Khan, Mahmoud, Miftahuddin, Adler and Lausen}]{Gul:15} Gul, A., Perperoglou, A., Khan, Z., Mahmoud, O., Miftahuddin, M., Adler, W. and Lausen, B. (2015). \texttt{ESKNN}: Ensemble of subset of K-nearest neighbours classifiers for classification and class membership probability estimation.  \newblock \texttt{R} package version 1.0, \url{http://cran.r-project.org/web/packages/ESKNN/} 

\bibitem[{Gul et~al.(2016)Gul, Perperoglou, Khan, Mahmoud, Miftahuddin, Adler and Lausen}]{Gul:16} Gul, A., Perperoglou, A., Khan, Z., Mahmoud, O., Miftahuddin, M., Adler, W. and Lausen, B. (2016). Ensemble of a subset of $k$NN classifiers. \newblock \emph{Adv. Data Anal. Classif.}, 1-14.

\bibitem[{Hall and Kang(2005)}]{Hall:05a} Hall, P. and Kang, K.-H. (2005). Bandwidth choice for nonparametric classification.
\newblock \emph{Ann. Statist.},  \textbf{33}, 284--306.

\bibitem[{Hall et~al.(2008)Hall, Park and Samworth}]{Hall:08}Hall, P., Park, B.~U. and Samworth, R.~J. (2008). Choice of neighbour order in nearest-neighbour classification.
\newblock \emph{Ann. Statist.},  \textbf{36}, 2135--2152.

\bibitem[{Hall and Samworth(2005)}]{Hall:05b}Hall, P. and Samworth, R.~J. (2005). Properties of bagged nearest neighbour classifiers.
\newblock \emph{J. Roy. Statist. Soc., Ser. B.}, \textbf{67}, 363--379.

\bibitem[{Hastie et~al.(1995)Hastie, Buja and Tibshirani}]{Hastie:95}Hastie, T., Buja, A. and Tibshirani, R. (1995). Penalized discriminant analysis.
\newblock \emph{Ann. Statist.}, \textbf{23}, 73--102.

\bibitem[{Hastie et~al.(2009)Hastie, Tibshirani and Friedman}]{ESL:09}Hastie, T., Tibshirani, R., and Friedman, J. (2009).
\newblock \emph{The Elements of Statistical Learning: Data Mining, Inference, and Prediction.} \newblock Springer Series in Statistics (2nd ed.). \newblock Springer, New York.

\bibitem[Hastie et~al.(2015) Hastie, Tibshirani, Narasimhan and Chu]{Hastie:2015} Hastie, T., Tibshirani, R., Narisimhan, B. and Chu, G. (2015). \texttt{pamr}: Pam: prediction analysis for microarrays.
\newblock \texttt{R} package version 1.55, \url{http://cran.r-project.org/web/packages/pamr/}

\bibitem[Karatzoglou, Smola and Hornik(2015)]{Karatzoglou:2015} Karatzoglou, A., Smola A. and Hornik, K. (2015). \texttt{kernlab}: Kernel-based Machine Learning Lab. \newblock \texttt{R} package version 0.9-20, \url{http://cran.r-project.org/web/packages/kernlab/}

\bibitem[{Khan et~al.(2015{\natexlab{a}})Khan, Gul, Mahmoud, Miftahuddin, Perperoglou, Adler and Lausen}]{Khan:15} Khan, Z., Gul, A., Mahmoud, O., Miftahuddin, M., Perperoglou, A., Adler, W. and Lausen, B. (2015{\natexlab{a}}). An ensemble of optimal trees for class membership probability estimation. \newblock In: Wilhelm, A., Kestler, H. A. (eds.), \emph{Analysis of Large and Complex Data, European Conference on Data Analysis, Bremen, July, 2014. Series: Studies in Classification, Data Analysis, and Knowledge Organization}. Springer-Verlag, Berlin.

\bibitem[{Khan et~al.(2015{\natexlab{b}})Khan, Gul, Mahmoud, Miftahuddin, Perperoglou, Adler and Lausen}]{Khan:15a} Khan, Z., Gul, A., Mahmoud, O., Miftahuddin, M., Perperoglou, A., Adler, W. and Lausen, B. (2015{\natexlab{b}}). \texttt{OTE}: Optimal trees ensembles for regression, classification and class membership probability estimation. \newblock \texttt{R} package version 1.0, \url{http://cran.r-project.org/web/packages/OTE/} 

\bibitem[{Larsen and Nelson(2016)}]{Larsen:16}Larsen, K. G. and Nelson, J. (2016). The Johnson--Lindenstrauss lemma is optimal for linear dimensionality reduction.
\newblock \emph{43rd International Colloquium on Automata, Languages and Programming}, \textbf{82}, 1-11.

\bibitem[{Le, Sarlos and Smola(2013)}]{LSS2013}Le, Q., Sarlos, T. and Smola, A. (2013). Fastfood --- approximating kernel expansions in loglinear time.
\newblock \emph{J. Mach. Learn. Res.}, W \& CP \textbf{28}, 244--252. 

\bibitem[{Lee et~al.(2013)Lee, Li and Chiaromonte}]{Lee:13}Lee, K.-Y., Li, B. and Chiaromonte, F. (2013). A general theory for nonlinear sufficient dimension reduction: formulation and estimation.
\newblock \emph{Ann. Statist.},  \textbf{41}, 221--249.

\bibitem[{Li(1991)}]{Li:91}Li, K.-C. (1991). Sliced inverse regression for dimension reduction. 
\newblock \emph{J. Amer. Statist. Assoc.},  \textbf{86}, 316--342.

\bibitem[Liaw and Wiener(2014)]{Liaw:2014} Liaw, A. and Wiener, M. (2014). \texttt{randomForest}: Breiman and Cutler's random forests for classification and regression. \newblock \texttt{R} package version 4.6-10, \url{http://cran.r-project.org/web/packages/randomForest/}

\bibitem[{Lopes(2016)}]{Lopes:16}Lopes, M. (2016). A sharp bound on the computation-accuracy tradeoff for majority voting ensembles. 
\newblock \emph{arXiv e-prints}, \texttt{1303.0727v2}.

\bibitem[{Lopes et~al.(2011)Lopes, Jacob and Wainwright}]{Lopes:11}Lopes, M., Jacob, L. and Wainwright, M.~J. (2011). A more powerful two-sample test in high dimensions using random projection.
\newblock \emph{Advances in Neural Information Processing Systems (NIPS)}.

\bibitem[{Marzetta et~al.(2011)Marzetta, Tucci and Simon}]{Marzetta:11}Marzetta, T., Tucci, G. and Simon, S. (2011). A random matrix-theoretic approach to handling singular covariance estimates.
\newblock \emph{IEEE Trans. Inf. Th.}, \textbf{57}, 6256--6271.

\bibitem[{McWilliams et~al.(2014)}]{McWilliams:14} McWilliams, B., Heinze, C., Meinshausen, N., Krummenacher, G. and Vanchinathan, H.~P. (2014). {LOCO}: distributing ridge regression with random projections.
\newblock \emph{arXiv e-prints}, \texttt{1406.3469v2}.

\bibitem[{Meinshausen and B\"uhlmann(2010)}]{MeinshausenBuhlmann2010}Meinshausen, N. and B\"uhlmann, P. (2010). Stability selection.
\newblock \emph{J. Roy. Statist. Soc., Ser. B (with discussion)}, \textbf{72}, 417--473.

\bibitem[Meyer et~al.(2015)]{Meyer:2015} Meyer, D., Dimitriadou, E., Hornik, K., Weingessel A., Leisch, F., Chang, C.-C. and  Lin, C.-C. (2015). \texttt{e1071}: Misc Functions of the Department of Statistics (e1071), TU Wien. \newblock \texttt{R} package version 1.6-4, \url{http://cran.r-project.org/web/packages/e1071/}

\bibitem[{Samworth(2012)}]{Samworth:12} Samworth, R.~J. (2012). Optimal weighted nearest neighbour classifiers.
\newblock \emph{Ann. Statist.}, \textbf{40}, 2733--2763.

\bibitem[{Shah and Samworth(2013)}]{ShahSamworth2013}Shah, R. D. and Samworth, R. J. (2013). Variable selection with error control: another look at stability selection.
\newblock \emph{J. Roy. Statist. Soc., Ser. B}, \textbf{75}, 55--80. 

\bibitem[Shin et~al.(2014) Shin, Wu, Zhang and Liu]{Shin:2014} Shin, S.~J., Wu, Y., Zhang, H.~H. and Liu, Y. (2014). Probability-enhanced sufficient dimension reduction for binary classification. \newblock \emph{Biometrics}, \textbf{70}, 546--555.

\bibitem[{Trefethen and Bau(1997)}]{Trefethen:97} Trefethen, L., N. and Bau, D., III (1997). \textit{Numerical Linear Algebra}. \newblock Society for Industrial and Applied Mathematics, Philadelphia.

\bibitem[{Tibshirani et~al.(2002)Tibshirani, Hastie, Narisimhan and Chu}]{Tibshirani:02}Tibshirani, R., Hastie, T., Narisimhan, B. and Chu, G. (2002). Diagnosis of multiple cancer types by shrunken centroids of gene expression.
\newblock \emph{Proceedings of the Natural Academy of Science, USA},  \textbf{99}, 6567--6572.

\bibitem[{Tibshirani et~al.(2003)Tibshirani, Hastie, Narisimhan and Chu}]{Tibshirani:03} Tibshirani, R., Hastie, T., Narisimhan, B. and Chu, G. (2003). Class prediction by nearest shrunken centroids, with applications to {DNA} microarrays.
\newblock \emph{Statist. Science}, \textbf{18}, 104--117.

\bibitem[{Vershynin(2012)}]{Vershynin2012}Vershynin, R. (2012). Introduction to the non-asymptotic analysis of random matrices.
\newblock In \emph{Compressed Sensing} (Eds. Y. C. Eldar and G. Kutyniok), pp. 210--268.  Cambridge University Press, Cambridge.


\bibitem[{Williams and Barber(1998)}]{Williams:98} Williams,  C. K. I.  and Barber, D. (1998). Bayesian classification with Gaussian processes. \newblock \emph{IEEE Transactions on Pattern Analysis and Machine Intelligence}, \textbf{20}, 1342--1351.

\bibitem[Witten(2011)]{Witten:2011} Witten, D. (2011). \textit{penalizedLDA}: Penalized classification using Fisher's linear discriminant.
\newblock \texttt{R} package version 1.0, \url{http://cran.r-project.org/web/packages/penalizedLDA/}

\bibitem[{Witten and Tibshirani(2011)}]{Witten:11}Witten, D.~M. and Tibshirani, R. (2011). Penalized classification using Fisher's linear discriminant.
\newblock \emph{J. Roy. Statist. Soc., Ser. B.}, \textbf{73}, 753--772.

\end{thebibliography}

\begin{thebibliography}{}
\bibitem[{Cannings and Samworth(2016)}]{CanningsSamworth:16b} Cannings, T. I. and Samworth, R. J. (2016). \texttt{RPEnsemble}: Random projection ensemble classification. \newblock \texttt{R} package, v. 0.3. \url{https://cran.r-project.org/web/packages/RPEnsemble/index.html}

\bibitem[{Cannings and Samworth(2017)}]{CanningsSamworth:16Main} Cannings, T. I. and Samworth, R. J. (2017). Random-projection ensemble classification.  \emph{J. Roy. Statist. Soc., Ser. B (with discussion)}, to appear.

\bibitem[{Gnedenko and Kolmogorov(1954)}]{Gnedenko:54} Gnedenko, B.~V. and Kolmogorov, A.~N. (1954). \emph{Limit Distributions for Sums of Independent Random Variables}, \newblock Addison-Wesley, Cambridge MA.

\end{thebibliography}

\begin{thebibliography}{}

\bibitem[{Achlioptas(2003)}]{Achlioptas2003} Achlioptas, D. (2003). Database-friendly random projections: Johnson--Lindenstrauss with binary coins.
\newblock \emph{J. Comp. Sys. Sci.}, \textbf{66}, 671--687.

\bibitem[{Biau and Devroye(2010)}]{BiauDevroye2010}Biau, G. and Devroye, L. (2010). On the layered nearest neighbour estimate, the bagged nearest neighbour estimate and the random forest method in regression and classification.
\newblock \emph{J. Mult. Anal.}, \textbf{101}, 2499--2518.

\bibitem[{Dahl et al.(2013)}]{Dahl2013}Dahl, G. E., Stokes, J. W., Deng, L. and Yu, D. (2013). 
Large-scale malware classification using random projections and neural networks. 
\newblock \emph{IEEE International Conference on Acoustics, Speech and Signal Processing, Vancouver}, 3422-3426.

\bibitem[{Fan and Lv(2008)}]{FanLv2008}Fan, J. and Lv, J. (2008). Sure independence screening for ultrahigh dimensional feature space. 
\newblock \emph{J. Roy. Statist. Soc. Ser. B. (with discussion)}, \textbf{70}, 849--911.

\bibitem[{Fan, Samworth and Wu(2009)}]{FSW2009}Fan, J., Samworth, R. and Wu, Y. (2009). Ultrahigh dimensional feature selection: beyond the linear model. 
\newblock \emph{J. Machine Learning Research}, \textbf{10}, 2013--2038.

\bibitem[{Hall and Samworth(2005)}]{HallSamworth2005}Hall, P. and Samworth, R. J. (2005). Properties of bagged nearest neighbour classifiers.
\newblock \emph{J. Roy. Statist. Soc. Ser. B.}, \textbf{67}, 363--379.

\bibitem[{Meinshuasen and B\"{u}hlmann(2010)}]{MeinshausenBuhlmann2010} Meinshausen, N. and B\"{u}hlmann, P. (2010). Stability Selection.
\newblock \emph{J. Roy. Statist. Soc. Ser. B. (with discussion)}, \textbf{72}, 417--473.

\bibitem[{Samworth(2012)}]{Samworth2012}Samworth, R.~J. (2012). Optimal weighted nearest neighbour classifiers.
\newblock \emph{Ann. Statist.}, \textbf{40}, 2733--2763.

\bibitem[{Shah and Samworth(2013)}]{ShahSamworth2013}Shah, R. D. and Samworth, R. J. (2013) Variable selection with error control: Another look at Stability Selection.
\newblock \emph{J. Roy. Statist. Soc., Ser. B}, \textbf{75}, 55--80.

\bibitem[{Wang(2016)}]{Wang2016}Wang, T. (2016). Spectral methods and computational trade-offs in high-dimensional statistical inference.  
\newblock Ph.D. thesis, University of Cambridge.

\bibitem[{Wedin(1972)}]{Wedin1972}Wedin, P.-\AA. (1972). Perturbation bounds in connection with singular value decomposition.
\newblock \emph{BIT Numerical Mathematics} \textbf{12}, 99--111.

\bibitem[{Yu, Wang and Samworth(2015)}]{YWS2015}Yu, Y., Wang, T. and Samworth, R. J. (2015). A useful variant of the Davis--Kahan theorem for statisticians. 
\newblock \emph{Biometrika}, \textbf{102}, 315--323.

\end{thebibliography}
\end{document}